\begin{document}

\title{{\bf A general theory of equilibrium behavior}}

\author{{\large{ Ioannis Avramopoulos\footnote{The author is with the National Technical University of Athens. His email is \texttt{ioannis.avramopoulos@gmail.com.}}}}
}

\date{}

\maketitle

\thispagestyle{empty} 

\newtheorem{definition}{Definition}
\newtheorem{proposition}{Proposition}
\newtheorem{theorem}{Theorem}
\newtheorem{corollary}{Corollary}
\newtheorem{lemma}{Lemma}
\newtheorem{axiom}{Axiom}
\newtheorem{thesis}{Thesis}

\vspace*{-0.2truecm}

\begin{abstract}
Economists were content with the concept of the Nash equilibrium as game theory's {\em solution concept} until Daskalakis, Goldberg, and Papadimitriou showed that finding a Nash equilibrium is most likely a computationally hard problem, a result that set off a deep scientific crisis. Motivated, in part, by their result, in this paper, we propose a general theory of equilibrium behavior in {\em vector fields} (and, therefore, also noncooperative games). Our line of discourse is to show that these universal in nature mathematical objects are endowed with significant structure, which we probe to unearth atypical, previously unidentified, equilibrium behavior.
\end{abstract}

\section{Introduction}

There is evidence in the literature that various independently founded disciplines (including game theory, nonlinear optimization theory, and dynamical systems theory) have in fact been founded on variations of the same concept with the formalization of the {\em variational inequality problem} being a case in point. In the finite-dimensional variational inequality problem (for example, see~\cite{Bertsekas-Tsitsiklis}), we are given a set $X \subseteq \mathbb{R}^m$, which is typically assumed to be nonempty, closed, and convex, and a vector field $c: \mathbb{R}^m \rightarrow \mathbb{R}^m$, and our objective is to find a vector $x^* \in X$ such that
\begin{align*}
(x - x^*) \cdot c(x^*) \geq 0, \forall x \in X.
\end{align*}
We will refer to the solutions of the variational inequality problem as the {\em critical elements} of the vector field $(X,c)$. The concept of critical elements has been shown to coincide with the concept of Nash equilibria (for example, see~\cite{PopulationGames}), the concept of critical points of nonlinear optimization problems (for example, see~\cite{Bertsekas-Tsitsiklis}) and the concept of equilibria of dynamical systems (for example, see~\cite{Dupuis}). In this paper, we make a case toward unifying the foundations of the aforementioned disciplines by proposing a {\em solution concept} for a general vector field. The term {\em solution concept} is to be understood in the sense of game theory, namely, a prediction on the behavior of a system. First we argue that critical elements do not qualify as a solution concept for a general vector field.

\subsection{Critical elements are not a solution concept}

Suppose momentarily that our effort to lay a foundation of a general theory of vector fields whose objective is to make predictions through its solution concept about the behavior of systems that evolve according to vector fields is a rightful pursuit. Then we claim that critical elements cannot be that solution concept. For if critical elements qualified as such a solution concept, they would certainly need to qualify as the solutions of vector fields that have special structure, such as, for example, {\em gradient fields}, that is, vector fields that are generated by taking the gradient of a scalar function. The problem of devising a solution concept for gradient fields has been the subject of {\em nonlinear optimization theory} where it was recognized early on (in fact, by Fermat himself who formulated optimization's first solution concepts) that critical elements are only a {\em necessary optimality condition} and by no means sufficient. (For example, consider the problem of optimizing $f(x) = x^3$. In this problem, $x=0$ is a critical point, that is, a solution of the aforementioned variational inequality, whereas this point does not satisfy any intuitive notion of optimality.) 

In the same vein, since noncooperative games can also be represented as vector fields (for example, see~\cite{PopulationGames}), does there exist solid ground to justify accepting the concept of Nash equilibrium as game theory's solution concept? Our thesis is that it doesn't; for one thing, computing Nash equilibria is most likely a computationally hard problem~\cite{Daskalakis}. What's more, the intersection of nonlinear optimization problems and noncooperative games is not empty: Many interesting noncooperative games are gradient fields in that they admit a {\em potential function} (for example, see~\cite{PotentialGames}). Since the potential function fully characterizes the incentive structure of games that have one, on what grounds should the critical points qualify as a solution concept for these games given that critical points do not qualify to be a solution concept in the general nonlinear optimization problem?

\subsection{If not critical elements, what then?}

Our thesis in this paper is that the pursuit of a general theory of vector fields is rightful, and to support this thesis we propose a {\em general theory of equilibrium behavior}. The formalization of such a theory may sound at first to be a formidable and copious task considering that the concept of vector fields is as universal in mathematics as the Turing machine is in computer science, if not more. We believe, however, that our line of discourse is easy to follow as, in fact, the foundation of the theory is based on an elementary and rather intuitive idea: {\em The elements of a vector field are comparable, and, therefore, can be ordered, in a meaningful manner}. Once this idea is in place the question of devising an appropriate means by which to order them becomes a simpler (although perhaps not straightforward) task. The induced order is not, to the extent of our knowledge, a mathematical structure known prior to this work, and we take the privilege of naming it a {\em polyorder}.

\subsection{The main ideas in our theory}

Consider a gradient field defined over a closed and convex set $X \subseteq \mathbb{R}^m$ and let $f$ be its scalar potential function. Let $x,y \in X$ be arbitrary elements of this gradient field, and consider a (topological) path $p$ in $X$ starting at $x$ and ending at $y$ whose image is the convex hull of $x$ and $y$ (that is, all convex combinations of $x$ and $y$). Call $p$ a {\em descent path} if the following conditions are met: (1) $f(x) > f(y)$ and (2) $f$ is everywhere nonascending along $p$. If these conditions are met {\em order} $x$ and $y$ such that $y$ is ``better than'' $x$. Similarly, call $p$ a {\em ascent path} if $f(x) < f(y)$ and if $f$ is everywhere nondescending along $p$, and, if so, {\em order} $x$ and $y$ such that $x$ is ``better than'' $y$. A polyorder then, in this particular example of a gradient field, is the induced mathematical structure obtained by the pairwise comparison of all elements in the aforementioned way.

Now, let $x^*_{\min}$ be an element of $X$ such that for all other elements $x$ of $X$ the corresponding path from $x^*_{\min}$ to $x$ as defined above is {\em not} a descent path, and call such an element {\em minimal}; minimal elements are precisely those elements that are ``no worse'' than all other elements. Also, let $x^*_{\max}$ be an element of $X$ such that for all other elements $x$ of $X$ the corresponding path from $x^*_{\max}$ to $x$ is {\em not} an ascent path, and call such an element {\em maximal}; maximal elements are precisely those elements that are ``no better'' than all other elements. 
 
 Every polyorder has a {\em minimum solution concept} and a {\em maximum solution concept}, the former being the set of all minimal elements of $X$ and the latter being the set of all maximal elements of $X$. These ideas generalize to general vector fields, and such generalized orders form the basis of our theory of equilibrium behavior. One of the main contributions of this paper is to show how polyorders encompass various known concepts in nonlinear optimization theory and game theory and how they can evince interesting yet unidentified prior to this work behavior.\\

\noindent
The rest of this paper is organized as follows: We motivate and formalize our theory of equilibrium behavior in the next section, while in Section~\ref{qowieurksdfjhgskdjgh} we explore the precise relationship between our theory and the theories of nonlinear optimization and games. Thereafter, in Section~\ref{aslkdjfhsdkjfhsdjf}, we study atypical previously unknown equilibrium behavior that our theory predicts. Finally, in Section~\ref{qwoeriwueroweiru}, we discuss related work.

\section{Polytropic optimization: In search of equilibrium}
\label{alsskdjfhskdjfhsdkjhf}

In this section, we formalize our theory. We start by motivating the ideas in this theory using ideas from {\em nonlinear optimization theory} and {\em evolutionary game theory}, then we introduce some basic concepts from the {\em theory of preference relations}, and, finally, we present a precise formalization of what we may call a {\em general theory of equilibrium behavior in vector fields}.


\if(0)

In this section, we present a {\em vector field optimization problem}. Our use of the term ``optimization'' is in a manner analogous to that of classical optimization theory (a theory that is concerned with the study of {\em scalar fields}), which may seem unintuitive at first since, 

Classical optimization theory is concerned with the problem of ``minimizing'' scalar fields of the form $f : X \subseteq \mathbb{R}^m \rightarrow \mathbb{R}$, which implies that to every element $x \in X$ corresponds a scalar value $f(x)$. However, given two elements $x,y \in X$, would $f(x) < f(y)$ imply that $x$ is ``better'' than $y$? Perhaps in some applications it would, however, in general, it doesn't, and if it is the latter, the question of devising a natural order of vector field elements remains even in scalar fields.

-----

and observe that the natural order of the elements of this field by their value is not always meaningful. Therefore the question of a natural order of the elements of fields remains even for scalar ones. In this section, we present such a natural order, natural in the sense that it induces a structure which contains many well known concepts and reveals yet unknown ones. 

\fi

\subsection{Some motivating ideas in nonlinear programming}

We start off with some basic ideas from {\em nonlinear programming}, the algorithmic component of nonlinear optimization theory. The archetypical problem of nonlinear optimization theory is to ``minimize'' a scalar {\em objective function} $f$ subject to a {\em constraint set} $X \subseteq \mathbb{R}^m$. In standard expositions of the theory, nonlinear programming algorithms would ideally compute {\em global optima}, an objective, however, that is out of the scope of existing algorithms. This observation motivates the study of algorithms with local convergence properties (for example, see~\cite{Bertsekas}) and the study of optimization problems where the concepts of local and global minima coincide (for example, see~\cite{ConvexOptimization, ConvexAnalysis}).

Careful examination of the line of discourse of nonlinear optimization theory reveals that the concept of the objective function is {\em not} a first-order concept of this theory. Indeed, if the scope of nonlinear programming algorithms were to compute global optima, it would have been, however, as a point of fact, the objective function is only relevant inasmuch as it permits the convenient definition of {\em descent directions}. 

Given a point $x$ in the constraint set $X$, a descent direction for this point is a vector in the corresponding tangent space of the constraint set such that the objective function $f$ decreases locally. What nonlinear programming algorithms essentially do is to {\em follow descent directions in carefully selected steps and stop at points lacking descent directions.} The concept of a decent direction is, therefore, arguably more important than that of the objective function.

Most texts on nonlinear optimization theory motivate the concepts of {\em local minima} and {\em strict local minima} as forming the basis of this theory's ``solution concept'' in the sense that once nonlinear programming algorithms (in their course of following descent directions) arrive at such points, they stop. In this perspective, this paper can be understood as the following thought experiment: What precisely are the points that lack descent directions and do the aforementioned minimality concepts suffice to characterize these points?  The precise definition of a descent direction will turn out to be crucial in this thought experiment.

\subsection{Some motivating ideas in evolutionary game theory}

We continue with some basic ideas from {\em evolutionary game theory} whose fundamental object of study is the {\em population game}, which is typically understood as a mathematical model of the strategic interaction among a large number of anonymous infinitesimal agents. Mathematically population games are {\em vector fields} constrained on a polyhedron that  in fact generalize normal-form games.

\subsubsection{Population games}

A population game is a pair  $(X, c)$. $X$, the game's {\em state space} or {\em strategy profile space}, has product form, i.e., $X = X_1 \times \cdots \times X_n$, where the $X_i$'s are simplexes and $i=1,\ldots,n$ refers to a {\em player position} (or {\em population}). To each player position corresponds a set $S_i$ of $m_i$ {\em pure strategies} and a {\em mass} $\omega_i > 0$ (the population mass). The {\em strategy space} $X_i$ of player position $i$ has the form
\begin{align*}
X_i = \left\{ x \in \mathbb{R}^{m_i}  \bigg| \sum_{j \in S_i} x_j  = \omega_i \right\}.
\end{align*}
We refer to the elements of $X$ as {\em states} or {\em strategy profiles}. Each strategy profile $x \in X$ can be decomposed into a vector strategies, i.e., $x = (x_1,\ldots,x_n)'$. Let $m = \sum_i m_i$. $c: X \rightarrow \mathbb{R}^m$, the game's {\em cost function}, maps $X$, the game's state space, to vectors of {\em costs} where each position in the vector corresponds to a pure strategy. It is typically assumed that $c$ is continuous.

\subsubsection{Solution concepts for population games}

Even before the result of Daskalakis et al. disputed the Nash equilibrium as game theory's solution concept, the theory of population games was based on equilibrium refinements such as the {\em evolutionarily stable strategy}~\cite{TheLogicOfAnimalConflict, Evolution} as well as variants and generalizations of this concept (for example, see~\cite{Thomas}). In a manner analogous to nonlinear optimization theory, most texts on evolutionary game theory motivate the aforementioned equilibrium refinements as being this theory's ``solution concept'' in the sense that they are points (or sets thereof) where {\em evolution stops}. 

In contrast to nonlinear optimization theory where algorithmic ideas abound, evolutionary game theory is concerned perhaps exclusively with the concept of {\em dynamics} that although arguably would qualify as an inherently algorithmic concept from the perspective of computer scientists, it formally views evolution as a {\em continuous-time process} (with few exceptions). Evolutionarily stable strategies as well as their variants and generalizations have been shown to be attractive under a wide range of dynamics (including perhaps most prominently among them the {\em replicator dynamic}~\cite{TaylorJonker}). The precise definition of an evolutionarily stable strategy is based on an idea whose significance in our theory of equilibrium behavior cannot be understated.

\subsubsection{The concept of invasion}

One of the fundamental concepts in evolutionary game theory is that of {\em invasion}. Let $(X, c)$ be a population game, and let $x, y \in X$. We say that $x$ invades $y$ if $x \cdot c(y) < y \cdot c(y)$. The standard evolutionary interpretation of invasion is to consider a population of organisms that form a continuum mass, to interpret $y$ as the {\em state} of this population, and $y \cdot c(y)$ as the {\em survival cost} of an organism being in that state. In this vein, $x \cdot c(y)$ is the survival cost of a mutant organism whose state switches from $y$ to $x$. Then the condition that $x$ invades $y$ implies that the population of mutants will proliferate in the incumbent population due to the favorable survival cost.

There is an alternative interpretation of the concept of invasion, one that is more in alignment with ideas in nonlinear optimization theory. To that end, let's write the condition that $x$ invades $y$ as $(x-y) \cdot c(y) < 0$, and note that this implies that the angle between $c(y)$ and $(x-y)$ is obtuse. Stated differently, the projection of $c(y)$ on the line segment connecting $x$ and $y$ ``points away'' from $x$. Suppose now that the vector field is the gradient field of a scalar potential. Then the condition that $x$ invades $y$ is precisely the condition that $(x-y)$ is a {\em descent direction} of the scalar field at $y$ (a simple fact which is easy to prove by a Taylor expansion). 

What if the vector field is not a gradient field though? Are we then justified to think of $(x-y)$ as a ``descent direction'' in {\em some} sense? Our intuitive understanding of a descent direction stipulates that along such a direction the objective function must decrease locally, however, as discussed earlier in the setting of nonlinear optimization theory, the objective function is {\em not} a first order concept, and, therefore, the possibility of witnessing some other more elementary phenomenon than the descent of the potential function (one that transcends to general vector fields) is open. With the benefit of hindsight, we can indeed assert that what is happening at a more elementary level is what we may call the descent of an abstract notion of an ``order,'' one we yet need to define. To that end, we turn to some foundational ideas from the theory of preference relations.


\if(0)

\subsubsection{Game theory}

Myerson defines Game Theory as ``[T]he study of mathematical models of conflict and cooperation between intelligent rational decision-makers''~\cite{Myerson}. Game theory's most popular object of study is the {\em normal-form game}, which is a triple $$(I, (S_i)_{i \in I}, (u_i)_{i \in I}),$$ where $I$ is the set of players, $S_i$ is the set of pure strategies available to player $i$, and $u_i: S \rightarrow \mathbb{R}$ is the utility function of player $i$ where $S = \times_i S_i$ is the set of all strategy profiles (combinations of strategies). Normal-form games are {\em particular instances} of a more general concept, that of {\em population games} (for example, see~\cite{PopulationGames}), which are discussed below where the connection between normal-form games and vector fields becomes immediately apparent (an observation that has perhaps not received appropriate attention in the community).

\subsubsection{Evolutionary game theory}

The origins of evolutionary game theory are traced in biology~\cite{xxx, yyy} in contrast with game theory whose origins are traced in the study of economic behavior~\cite{xxx}. However, the mathematical formalisms of these disciplines are in fact quite closely related to each other, and our primary focus in this paper is on abstractions.

In this paper, we are concerned with {\em population games} of which normal form games are a special case (for example, see~\cite{PopulationGames}). A population game $\mathcal{G}$ is a pair  $(X, c)$. $X$, the game's {\em state space} or {\em strategy profile space}, has product form, i.e., $X = X_1 \times \cdots \times X_n$, where the $X_i$'s are simplexes and $i=1,\ldots,n$ refers to a {\em player position} (or {\em population}). To each player position corresponds a set $S_i$ of $m_i$ {\em pure strategies} and a {\em mass} $\omega_i > 0$ (the population mass). The {\em strategy space} $X_i$ of player position $i$ has the form
\begin{align*}
X_i = \left\{ x \in \mathbb{R}^{m_i}  \bigg| \sum_{j \in S_i} x_j  = \omega_i \right\}.
\end{align*}
We refer to the elements of $X$ as {\em states} or {\em strategy profiles}. Each strategy profile $x \in X$ can be decomposed into a vector strategies, i.e., $x = (x_1,\ldots,x_n)'$. Let $m = \sum_i m_i$. $c: X \rightarrow \mathbb{R}^m$, the game's {\em cost function}, maps $X$, the game's state space, to vectors of {\em costs} where each position in the vector corresponds to a pure strategy. We assume that $c$ is {\em continuous}.

Population games are the first fundamental object of study of {\em evolutionary game theory}, noting that the origins of this theory are traced in biology~\cite{xxx} in contrast with game theory whose origins are traced in the study of economic behavior~\cite{xxx}. However, the mathematical formalisms of these disciplines are in fact quite closely related to each other as noted above, and our primary focus in this paper is on abstractions. In this vein, we will abstract population games (and, therefore, also normal-form games) as {\em vector fields} (that is, maps of the form $F: X \subseteq \mathbb{R}^m \rightarrow \mathbb{R}^m$). 

The second fundamental object of study of evolutionary game theory is the {\em evolutionary dynamics} of {\em revision protocols}~\cite{PopulationGames}. Revision protocols are simple procedures that population members employ in their effort to ``navigate'' a population game. (This is again in contrast with game theory whose assumption is that the ability to navigate a game is derived from introspection and rationality). Revision protocols are studied as evolution rules of {\em dynamical systems}. 

\fi

\subsection{Some ideas in preference relations theory}

Let $X$ be a nonempty set. A subset $R$ of $X \times X$ is called a {\em binary relation} on X. If $(x,y) \in R$, we write $xRy$, and if $(x,y) \not\in R$, we write $\neg xRy$. $R$ is called {\em reflexive} if $xRx$ for each $x \in X$ and {\em complete} if for each $x,y \in X$ either $xRy$ or $yRx$. It is called {\em symmetric} if, for any $x, y \in X$, we have that $xRy \Rightarrow yRx$, and {\em antisymmetric} if, for any $x, y \in X$, $xRy \wedge yRx \Rightarrow x=y$. Finally, $R$ is called {\em transitive} if for any $x, y, z \in X$ we have that $xRy$ and $yRz$ imply $xRz$.

Let $xPy \Leftrightarrow xRy \wedge \neg yRx$ and $xIy \Leftrightarrow xRy \wedge yRx$. Then $P$ and $I$ are also binary relations on X where $P \subseteq R$ and $I \subseteq R$. $P$ is called the {\em asymmetric} (or {\em strict}) part of $R$ and $I$ is called its symmetric part. If $P$ is antisymmetric, then $R$ is called a {\em preference relation}.

Preference relations are a fundamental object of study of {\em microeconomic theory}. The most elementary problem in this theory is that of {\em individual decision making} (see, for example,~\cite{MicroeconomicTheory}) whose starting point is a set of alternatives (say $X$) among which an individual must choose. A fundamental tenet of microeconomic theory (since the definitive work of Von Neumann and Morgernstern~\cite{TheoryOfGames}) is that individual preferences are {\em rational}.

\begin{definition}
The preference relation $R$ is {\em rational} if it is {\em complete} and {\em transitive}. 
\end{definition}

The study of transitive preference relations is the subject of {\em order theory} (for example, see~\cite{Ok}); although this theory is of considerable intellectual merit, in the rest of this paper, we will be concerned with preference relations that are generally non-transitive.

If a preference relation is not transitive, one would hope that it is perhaps acyclic.

\begin{definition}
The preference relation $R$ is {\em acyclic} if there is no finite set $\{ x_1, \ldots, x_n \} \subset X$ such that $x_i R x_{i+1}$ for $i = 1,\ldots,n-1$ and such that $x_n R x_1$. 
\end{definition}

There is a growing theory of acyclic preference relations (for example, see~\cite{Bergstrom, Walker, Alcantud, Salonen-Vartiainen}), however, the preference relations we will be concerned with are in general not even acyclic.

\begin{definition}
Let $R$ be a preference relation on $X$ and let $x, y \in X$. If $xPy$, where $P$ is the asymmetric part of $R$, we say that $x$ {\em dominates} $y$. We further say that an element of $X$ is {\em minimal} (or {\em undominated}) if there is no element in $X$ that dominates it. We, finally, say that an element of $X$ is {\em maximal} if there no element in $X$ that it dominates.
\end{definition}

\subsection{Polyorders and polytropic optimization}

In the rest of this section, we introduce our solution concepts for scalar and vector fields.

\subsubsection{Preliminaries}

Before defining our solution concepts, however, let's ponder momentarily why critical elements do not qualify as such. It seems intuitive to us that, in any sensible notion of optimality, points where it is possible to descend from locally should not qualify as being optimal, and it is precisely this notion of optimality that critical elements fail to pass. Consider, for example, the problem of minimizing a scalar field and an interior strict local maximum of this field. It is easy to see that the (interior) strict local maximizer is a critical element as its gradient vanishes. However, starting from the maximizer and following any direction in the tangent space, the objective function {\em clearly} does decrease, and, therefore, the maximizer cannot qualify as a solution to our minimization problem.

Our definition of an appropriate solution concept will be based on the machinery of the theory of preference relations, and, in particular, the concept of {\em minimal} or {\em undominated} elements. You may observe in the previous definition of minimality that it is a concept defined by a property that it {\em lacks}, and since we are interested in mathematically capturing the idea of points that lack descent directions, this confluence is particularly suiting. However, devising an appropriate preference ordering is not a straightforward task. To see this, recall the definition of critical elements: Given a vector field $c : X \rightarrow \mathbb{R}^m$, $x^* \in X$ is a critical element if
\begin{align*}
(x - x^*) \cdot c(x^*) \geq 0, \forall x \in X.
\end{align*}
Notice in the definition that critical elements are also defined by what they lack: $x^*$ is a critical element precisely when it is {\em uninvadable}. Our solution concept requires a new, but straightforward in its formalism, mathematical concept, which we are going to call {\em weak dominance} in whose definition we will need the following elementary definition from topology.

\begin{definition}
A {\em path} $p$ in $X \subseteq \mathbb{R}^m$ is a continuous map $p : [0,1] \rightarrow X$.
\end{definition}

\subsubsection{Linear scalar polyorders}

Let $f: X \subseteq \mathbb{R}^m \rightarrow \mathbb{R}$ be a continuous scalar field where $X$ is closed and convex. Let $x, y \in X$, and define a path $y_\epsilon : [0,1] \rightarrow X$ such that $y_\epsilon = \epsilon x + (1-\epsilon) y$. Let $\preceq_f$ be a preference relation on $X$ such that 
\begin{align*}
x \preceq_f y \Leftrightarrow \forall \epsilon_1, \epsilon_2 \in [0,1]: \epsilon_1 < \epsilon_2 \Rightarrow f(y_{\epsilon_1}) \geq f(y_{\epsilon_2}).
\end{align*} 
If $x \preceq_f y$, we say that $y_\epsilon$ is a path of {\em weak ascent} from $x$ to $y$ (and a path of {\em weak descent} from $y$ to $x$). We call $(X, \preceq_f)$ a {\em linear scalar polyorder}. (Note that $y_{\epsilon_1}$ is ``closer'' to $y$ than $y_{\epsilon_2}$.)

\begin{proposition}
If $x \prec_f y$, then $f(x) < f(y)$.
\end{proposition}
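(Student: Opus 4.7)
The plan is to directly unpack the strict relation $\prec_f$ in terms of $\preceq_f$ and then rule out equality of the endpoint values by a monotonicity/constancy argument. Recall that $\prec_f$ denotes the asymmetric part of the preference relation $\preceq_f$, so $x \prec_f y$ means $x \preceq_f y$ holds while $y \preceq_f x$ does not.

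First I would use the ``easy direction'' to get a weak inequality. Setting $\epsilon_1 = 0$ and $\epsilon_2 = 1$ in the definition of $x \preceq_f y$ and recalling $y_0 = y$ and $y_1 = x$, we immediately obtain $f(y) \geq f(x)$. It remains to exclude the equality case. So I would proceed by contradiction and assume $f(x) = f(y)$.

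Under this assumption, I would use the fact that $x \preceq_f y$ asserts $f(y_\epsilon)$ is nonincreasing in $\epsilon \in [0,1]$, together with $f(y_0) = f(y_1)$, to conclude that $f(y_\epsilon)$ is in fact constant on $[0,1]$. This is the point where one must be careful: constancy along the segment does not by itself contradict $x \prec_f y$ until we unpack how $y \preceq_f x$ is defined. The slight subtlety is that $y \preceq_f x$ uses the path parametrized by $x_\epsilon = \epsilon y + (1-\epsilon) x = y_{1-\epsilon}$, which traces the same convex combinations in reverse. I would note explicitly that the substitution $\epsilon \mapsto 1-\epsilon$ converts the condition ``$f(x_{\epsilon_1}) \geq f(x_{\epsilon_2})$ whenever $\epsilon_1 < \epsilon_2$'' into ``$f(y_{\delta_1}) \geq f(y_{\delta_2})$ whenever $\delta_1 > \delta_2$.''

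Once the reparameterization is in place, the constancy of $f$ along the segment trivially implies the latter condition (both sides are equal), so $y \preceq_f x$ holds. But this contradicts $x \prec_f y$. Hence the assumption $f(x) = f(y)$ is untenable, and combined with the weak inequality $f(y) \geq f(x)$ obtained in the first step we conclude $f(x) < f(y)$. The only real obstacle here is bookkeeping about the direction of the segment when comparing $x \preceq_f y$ with $y \preceq_f x$; all other steps are immediate from the definitions.
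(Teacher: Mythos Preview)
Your proof is correct. The paper's argument is marginally more direct: from $\neg(y \preceq_f x)$ it extracts (after the same reparameterization you note) indices $\epsilon_1 < \epsilon_2$ with $f(y_{\epsilon_1}) > f(y_{\epsilon_2})$ strictly, and then sandwiches using the monotonicity from $x \preceq_f y$ to get $f(x) \leq f(y_{\epsilon_2}) < f(y_{\epsilon_1}) \leq f(y)$. Your route---weak inequality at the endpoints, then contradiction via constancy forcing $y \preceq_f x$---uses the same two ingredients but packages them contrapositively; the paper's version avoids the constancy step at the cost of being slightly less explicit about the reparameterization you carefully spelled out.
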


\begin{proof}
Since $x \prec_f y$, we have that $x \preceq_f y$ and $\neg(y \preceq_f x)$. Since $x \preceq_f y$, we have that $\forall \epsilon_1, \epsilon_2 \in [0,1]: \epsilon_1 < \epsilon_2 \Rightarrow f(y_{\epsilon_1}) \geq f(y_{\epsilon_2})$. Furthermore, since $\neg(y \preceq_f x)$, there exist $\epsilon_1, \epsilon_2 \in [0,1]$, where $\epsilon_1 < \epsilon_2$ such that $f(y_{\epsilon_1}) > f(y_{\epsilon_2})$. Since, $f(x) \leq f(y_{\epsilon_2})$ and $f(y_{\epsilon_1}) \leq f(y)$, the proof is complete.
\end{proof}

By virtue of the previous proposition, if $x \prec_f y$, we say that $y_\epsilon$ is an {\em ascent path} from $x$ to $y$ (and that $y-x$ is an {\em ascent direction} at $x$) and a {\em descent path} from $y$ to $x$ (and that $x-y$ is a {\em descent direction} at $y$). However, note that an ascent or descent direction may not be {\em locally} so in the sense that in a neighborhood of the corresponding point it may well be the case that $f$ is constant.

\begin{proposition}
\label{wpoeritueoiriut}
The strict part of a linear scalar polyorder is acyclic.
\end{proposition}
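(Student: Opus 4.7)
The plan is to reduce acyclicity to the strict monotonicity of $f$ along the strict part of $\preceq_f$, which is exactly what the previous proposition (``If $x \prec_f y$, then $f(x) < f(y)$'') supplies. Concretely, I would argue by contradiction: suppose for the sake of contradiction that there exists a finite sequence $x_1, \ldots, x_n \in X$ with $x_i \prec_f x_{i+1}$ for $i = 1, \ldots, n-1$ and $x_n \prec_f x_1$.

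The key step is then simply to chain the inequalities: applying the previous proposition to each link $x_i \prec_f x_{i+1}$ yields $f(x_i) < f(x_{i+1})$, and applying it to the closing link $x_n \prec_f x_1$ yields $f(x_n) < f(x_1)$. Stringing these together produces
\begin{align*}
f(x_1) < f(x_2) < \cdots < f(x_n) < f(x_1),
\end{align*}
which is an immediate contradiction with the irreflexivity of $<$ on $\mathbb{R}$.

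There is really no obstacle here: the whole content of the proof has been packaged into the preceding proposition, and acyclicity is just the standard observation that a strictly $f$-increasing relation on the real line cannot have cycles. The one thing to be careful about is simply citing the previous proposition correctly and noting that $n$ is arbitrary (so the argument covers cycles of every finite length, including the degenerate case $n=1$ where $x_1 \prec_f x_1$ would directly contradict the irreflexivity implicit in the strict part $\prec_f$).
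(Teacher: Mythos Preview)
Your proposal is correct and follows essentially the same approach as the paper: assume a cycle in $\prec_f$, apply the preceding proposition to each link to obtain a strictly increasing chain of real numbers that closes on itself, and derive a contradiction. The only cosmetic difference is that the paper phrases the contradiction as ``$f$ is single-valued'' rather than invoking irreflexivity of $<$ on $\mathbb{R}$.
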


\begin{proof}
Suppose there exists a finite set $\{ x_1, \ldots, x_n \} \subset X$ such that $x_i \prec_f x_{i+1}$ for $i = 1,\ldots,n-1$ and such that $x_n \prec_f x_1$. Then $f(x_i) < f(x_{i+1})$ for $i = 1,\ldots,n-1$ and $f(x_n) < f(x_1)$, which contradicts the assumption that $f$ is single-valued.
\end{proof}

The previous concepts naturally generalize to general vector fields as shown next.

\subsubsection{Linear vector polyorders}

Let $c: X \subseteq \mathbb{R}^m \rightarrow \mathbb{R}^m$ be a continuous vector field where $X$ is closed and convex. Let $x, y \in X$, and define a path $y_\epsilon : [0,1] \rightarrow X$ such that $y_\epsilon = \epsilon x + (1-\epsilon) y$. Let $\preceq_c$ be a preference relation on $X$ such that 
\begin{align*}
x \preceq_c y \Leftrightarrow \forall \epsilon \in [0,1]: x \cdot c(y_\epsilon) \leq y \cdot c(y_\epsilon).
\end{align*} 
If $x \preceq_c y$, we say that $x$ {\em weakly dominates} $y$. We call $(X, \preceq_c)$ a {\em linear vector polyorder}. We may understand weak dominance as follows: If $x$ weakly dominates $y$, for all $\epsilon \in [0,1]$, the projection of $c(y_\epsilon)$ on the path $y_\epsilon$ does not point toward $x$.

\begin{proposition}
If $x \prec_c y$, then $\forall \epsilon \in [0,1]: x \cdot c(y_\epsilon) \leq y \cdot c(y_\epsilon)$ and $\exists \epsilon \in [0,1]: x \cdot c(y_\epsilon) < y \cdot c(y_\epsilon)$.
\end{proposition}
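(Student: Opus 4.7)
The plan is to unpack the definition of $x \prec_c y$, which by definition unfolds as the conjunction $x \preceq_c y$ and $\neg(y \preceq_c x)$, and read off the two conclusions from these two clauses separately. The first clause gives the weak inequality at every $\epsilon$ verbatim; the second clause, after a mild reparametrization, gives the existence of an $\epsilon$ at which the inequality is strict.

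For the first conjunct, I would simply quote the definition of weak dominance: $x \preceq_c y$ means, by definition, that for all $\epsilon \in [0,1]$, $x \cdot c(y_\epsilon) \leq y \cdot c(y_\epsilon)$. This is exactly the universal statement required.

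For the second conjunct, the slightly delicate point is that the negation $\neg(y \preceq_c x)$ is phrased using the linear path from $y$ to $x$, not from $x$ to $y$. Write that path as $x_\epsilon := \epsilon y + (1-\epsilon) x$ and observe the identity $x_\epsilon = y_{1-\epsilon}$, so the path from $y$ to $x$ is just the reverse parametrization of $y_\epsilon$. Then $\neg(y \preceq_c x)$ expands to: there exists $\epsilon \in [0,1]$ with $y \cdot c(x_\epsilon) > x \cdot c(x_\epsilon)$, and substituting $\epsilon' = 1-\epsilon$ (which again ranges over $[0,1]$) this becomes: there exists $\epsilon' \in [0,1]$ with $x \cdot c(y_{\epsilon'}) < y \cdot c(y_{\epsilon'})$. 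Combining with the first clause completes the proof.

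The proof is essentially a bookkeeping exercise in the definitions; the only step with any content is recognizing that the $y$-to-$x$ path appearing in $\neg(y \preceq_c x)$ is the time-reversal of $y_\epsilon$, so that the existential quantifier can be moved back to the $y_\epsilon$ parametrization. I expect no real obstacle beyond being careful with the orientation of the path.
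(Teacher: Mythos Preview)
Your proof is correct. The paper actually states this proposition without proof, so there is nothing to compare against directly; your reparametrization $x_\epsilon = y_{1-\epsilon}$ is exactly the bookkeeping step the paper uses elsewhere (see the last line of the proof of Lemma~\ref{alskjdfjsdkfjdkjf}), so your argument is the intended one.
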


\begin{proposition}
The strict part of a linear vector polyorder is antisymmetric.
\end{proposition}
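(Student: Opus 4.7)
The plan is to observe that antisymmetry of the strict part in this setting is essentially a general structural fact about how $\prec_c$ is extracted from $\preceq_c$, rather than a property that depends on the specific definition of weak dominance via the line segment $y_\epsilon$. Indeed, recall that for any binary relation $R$, its asymmetric part $P$ defined by $xPy \Leftrightarrow xRy \wedge \neg yRx$ is automatically \emph{asymmetric} in the sense that $xPy$ precludes $yPx$. Asymmetry in turn implies antisymmetry vacuously, since the premise $xPy \wedge yPx$ can never be satisfied.

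Concretely, I would argue as follows. Suppose, toward verifying the antisymmetry condition, that both $x \prec_c y$ and $y \prec_c x$ hold for some $x, y \in X$. Unfolding the definition of the strict part, $x \prec_c y$ gives us $x \preceq_c y$ together with $\neg(y \preceq_c x)$. On the other hand, $y \prec_c x$ gives us, in particular, $y \preceq_c x$. These two statements are in direct contradiction, so no such pair $x, y$ can exist. Consequently the implication $x \prec_c y \wedge y \prec_c x \Rightarrow x = y$ holds vacuously, which is exactly antisymmetry of $\prec_c$.

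There is no real obstacle here; the content of the proposition is to confirm that $\preceq_c$ satisfies the condition required (per the definition given earlier in the paper) to qualify as a \emph{preference relation}, namely that its asymmetric part be antisymmetric. The proof does not use the specific form of $y_\epsilon$ or the vector field $c$ at all — it only uses the way $\prec_c$ is constructed from $\preceq_c$. If desired, one could alternatively phrase the argument as ``the strict part of any binary relation is asymmetric, hence antisymmetric,'' and invoke it as a one-line remark.
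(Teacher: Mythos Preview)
Your proof is correct. It is, however, a genuinely different route from the paper's. The paper's one-line proof points to the immediately preceding proposition, which characterizes $x \prec_c y$ concretely as $\forall \epsilon \in [0,1]: x \cdot c(y_\epsilon) \leq y \cdot c(y_\epsilon)$ together with $\exists \epsilon \in [0,1]: x \cdot c(y_\epsilon) < y \cdot c(y_\epsilon)$; from that characterization one would argue that $x \prec_c y$ and $y \prec_c x$ simultaneously force both $(x-y)\cdot c(y_\epsilon) \leq 0$ and $(x-y)\cdot c(y_\epsilon) \geq 0$ along the entire segment (since the two parametrizations trace the same set), hence equality everywhere, contradicting the required strict inequality. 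Your argument bypasses all of this by observing that the asymmetric part $P$ of \emph{any} binary relation $R$ is asymmetric by construction, and asymmetry implies antisymmetry vacuously. Your approach is more elementary and more general (it makes no use of $c$, the segment $y_\epsilon$, or continuity), and it has the added virtue of making explicit that the proposition is really verifying a tautology: under the paper's own definition of ``preference relation,'' every binary relation qualifies.
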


\begin{proof}
The proof easily follows from the previous proposition.
\end{proof}

We now show that the concept of `weak dominance' generalizes the concept of `weak descent'. 

\begin{proposition}
Let $\tilde{c}: \mathbb{R}^m \rightarrow \mathbb{R}^m$ be a continuous vector field and let $c: X \subseteq \mathbb{R}^m \rightarrow \mathbb{R}^m$ be its restriction to the closed and convex set $X$. Suppose that $\tilde{c}$ is a gradient field and let $f$ be its potential function. Then $x \preceq_c y \Leftrightarrow x \preceq_f y$.
\end{proposition}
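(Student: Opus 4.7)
The plan is to reduce both relations to a statement about the sign of a single derivative. Set $\phi(\epsilon) := f(y_\epsilon)$ for $\epsilon \in [0,1]$. Since $X$ is convex, the segment $\{y_\epsilon : \epsilon \in [0,1]\}$ lies in $X$. Because $\tilde{c} = \nabla f$ is continuous on $\mathbb{R}^m$ (by hypothesis $c$, and hence its extension $\tilde{c}$, is continuous), $f$ is $C^1$, so $\phi$ is $C^1$ on $[0,1]$ with
\begin{equation*}
\phi'(\epsilon) \;=\; \nabla f(y_\epsilon) \cdot \frac{d y_\epsilon}{d\epsilon} \;=\; c(y_\epsilon) \cdot (x - y).
\end{equation*}
This single identity is the engine of the proof; everything else is unpacking definitions.

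For the forward implication, assume $x \preceq_c y$. By definition, $x \cdot c(y_\epsilon) \leq y \cdot c(y_\epsilon)$ for every $\epsilon \in [0,1]$, i.e.\ $(x - y) \cdot c(y_\epsilon) \leq 0$, which via the displayed identity says $\phi'(\epsilon) \leq 0$ on $[0,1]$. The mean value theorem then gives $\phi(\epsilon_1) \geq \phi(\epsilon_2)$ whenever $\epsilon_1 < \epsilon_2$, which is precisely $x \preceq_f y$.

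For the reverse implication, assume $x \preceq_f y$, so $\phi$ is nonincreasing on $[0,1]$. For any interior $\epsilon$, a one-sided difference quotient from the right is $\leq 0$, and since $\phi$ is differentiable, $\phi'(\epsilon) \leq 0$; at the endpoints $\epsilon = 0$ and $\epsilon = 1$ the corresponding one-sided derivatives (which equal $\phi'$ by $C^1$) are likewise $\leq 0$. Using the displayed identity in the other direction, $(x - y) \cdot c(y_\epsilon) \leq 0$ for every $\epsilon \in [0,1]$, i.e.\ $x \preceq_c y$.

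There is no genuine obstacle here: the only delicate step is making sure the derivative inequality $\phi' \leq 0$ is valid at the endpoints of $[0,1]$, which is handled by $C^1$-smoothness of $f$ and the continuity of $c$ on $X$. Everything else is just a mechanical rewriting of the definitions through the chain-rule identity $\phi'(\epsilon) = (x-y) \cdot c(y_\epsilon)$.
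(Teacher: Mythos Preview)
Your proof is correct and follows essentially the same approach as the paper's: both hinge on the chain-rule identity $\phi'(\epsilon) = (x-y)\cdot c(y_\epsilon)$ for $\phi(\epsilon) = f(y_\epsilon)$, and then appeal to elementary calculus (the mean value theorem and the equivalence between monotonicity and sign of the derivative for a $C^1$ function). The only cosmetic difference is that the paper argues the forward direction by contradiction and declares the reverse ``completely analogous,'' whereas you give both directions directly and take a bit more care with the endpoint derivatives.
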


\begin{proof}
We show that $x \preceq_c y \Rightarrow x \preceq_f y$, noting that the reverse direction is completely analogous. To that end, let $x,y \in X$ such that $x \preceq_c y$, and suppose there exist $\epsilon_1, \epsilon_2 \in [0,1]$ where $\epsilon_1 < \epsilon_2$ such that $f(y_{\epsilon_1}) < f(y_{\epsilon_2})$. Then by the mean value theorem there exists $\epsilon \in (\epsilon_1, \epsilon_2)$ such that
\begin{align*}
 \nabla f(y_\epsilon) \cdot (x-y) = \frac{1}{\epsilon_2-\epsilon_1} \nabla f(y_\epsilon) \cdot (y_{\epsilon_2} - y_{\epsilon_1}) = f(y_{\epsilon_2}) - f(y_{\epsilon_1}) > 0.
\end{align*} 
This contradicts our assumption that $x \preceq_c y$.
\end{proof}

\subsubsection{Solution concepts for scalar and vector fields}

What we have so far accomplished is to introduce natural, we believe, orderings of general scalar and vector fields. It is now a small step to define our solution concepts based on these orderings, definitions which we state as the following theses.

\begin{thesis}[{\bf First fundamental thesis of polytropic optimization}]
The minimum (resp. maximum) solution concept of a continuous scalar field that is defined over a closed and convex domain is the set of minimal (resp. maximal) elements of its linear scalar polyorder.
\end{thesis}

\begin{thesis}[{\bf Second fundamental thesis of polytropic optimization}]
The minimum (resp. maximum) equilibrium solution concept of a continuous vector field that is defined over a closed and convex domain is the set of minimal (resp. maximal) elements of its linear vector polyorder.
\end{thesis}

We will refer to the problem of ``solving'' linear scalar and vector polyorders in the sense defined above as {\em polytropic optimization}. The minimal solutions of a polytropic optimization problem are precisely those points that lack directions of descent, and it is now easy to verify that strict local maxima, for example, do not qualify as solutions of a minimization problem. Our claim, however, is much stronger: {\em We claim that our solution concepts are a precise characterization of the intuitive notion of optimality.} To support our theses, we explore, in the next section, the precise relationship between our theory and the theories of nonlinear optimization and evolutionary games.

\if(0)

In this vein, note the following:
\begin{enumerate}

\item Since gradients fields are {\em irrotational}, systems that navigate scalar fields in their course of evolution by following a field's {\em flow lines} can neither cycle nor exhibit chaotic behavior.

\item That is generally not true, however, for general vector fields and it is precisely for this reason that the second fundamental thesis of polytropic optimization postulates that the minimal elements of a linear vector polyorder are the {\em equilibrium} solution concept of the corresponding vector field; devising a general solution concept for vector fields is an open question. 

\end{enumerate}

\fi

\section{Relationship with the theories of nonlinear optimization and evolutionary games}
\label{qowieurksdfjhgskdjgh}

Following preliminary results, we show, in this section, that minimal and maximal elements are necessarily critical elements, and, in this sense, our solution concept is an {\em equilibrium refinement concept}. Then we show that optimization theory's {\em strict local minima} as well evolutionary game theory's {\em evolutionarily stable states} are minimal. Therefore, our equilibrium refinement concept {\em encompasses} the most widely applied solution concepts in these theories. 

Thereafter, we explore the relationship between minimality and the concepts of {\em local minimum} and {\em neutrally stable state}. We provide examples showing that these concepts do not qualify as ``solution concepts'' in their respective theories and indeed they are not used as such. Our understanding from standard expositions of these theories is that local minima and neutrally stable states are equilibrium refinements that are {\em necessary} for optimality in that all optimal solutions of a nonlinear optimization problem should be local minima and that all optimal solutions of a population game should be neutrally stable. Indeed, evolutionary game theory's most general, to the extent of our knowledge, solution concept (one that encompasses the concept of an evolutionarily stable state) is that of an {\em evolutionarily stable set}~\cite{Thomas} (see also~\cite{Weibull}) all of whose members are neutrally stable. Nonlinear optimization theory is not equipped with an analogue of the evolutionarily stable set, but it is easy to devise one. We prove in this section that these quite general solution concepts are encompassed in the notion of minimality, which is an intuitively pleasing result. 

What's more, we show that minimality is a more general concept than all known prior to this paper solution concepts in a strong sense: Minimal elements are not necessarily local minima. The counterexample we provide is a mathematically interesting object to which we devote Section~\ref{aslkdjfhsdkjfhsdjf}.

\subsection{Preliminaries: Minimal and maximal elements}

In this section, we assume that $X$ is a closed and convex subset of $\mathbb{R}^m$, and that $c : X \rightarrow \mathbb{R}^m$ and $f : X \rightarrow \mathbb{R}$ are continuous. 

\begin{definition}
$x^*$ is $\preceq_c$-minimal if $\forall x \in X: \neg (x \prec_c x^*)$. It is $\preceq_c$-maximal if $\forall x \in X: \neg (x^* \prec_c x)$.
\end{definition}

The concepts of $\preceq_f$-minimality and $\preceq_f$-maximality have analogous definitions. The concepts of minimality and maximality have the following equivalent characterizations.

\begin{lemma}
\label{alskjdfjsdkfjdkjf}
Let $x^* \in X$ and, for any $x \in X$, let $x_\epsilon = \epsilon x^* + (1-\epsilon) x$.\\
(i) $x^*$ is $\preceq_c$-minimal if and only if 
\begin{align*}
\forall x: (\forall \epsilon \in [0,1]: x^* \cdot c(x_\epsilon) \leq x \cdot c(x_\epsilon)) \vee (\exists \epsilon \in [0,1]: x^* \cdot c(x_\epsilon) < x \cdot c(x_\epsilon)).
\end{align*}
(ii) $x^*$ is $\preceq_c$-maximal if and only if 
\begin{align*}
\forall x: (\forall \epsilon \in [0,1]: x^* \cdot c(x_\epsilon) \geq x \cdot c(x_\epsilon)) \vee (\exists \epsilon \in [0,1]: x^* \cdot c(x_\epsilon) > x \cdot c(x_\epsilon)).
\end{align*}
(iii) $x^*$ is $\preceq_f$-minimal if and only if
\begin{align*}
\forall x: (\forall \epsilon_1, \epsilon_2 \in [0,1]: \epsilon_1 < \epsilon_2 \Rightarrow f(x_{\epsilon_1}) \geq f(x_{\epsilon_2})) \vee (\exists \epsilon_1, \epsilon_2 \in [0,1] \mbox{ where } \epsilon_1 < \epsilon_2: f(x_{\epsilon_1}) > f(x_{\epsilon_2})).
\end{align*}
(iv) $x^*$ is $\preceq_f$-maximal if and only if
\begin{align*}
\forall x: (\forall \epsilon_1, \epsilon_2 \in [0,1]: \epsilon_1 < \epsilon_2 \Rightarrow f(x_{\epsilon_1}) \leq f(x_{\epsilon_2})) \vee (\exists \epsilon_1, \epsilon_2 \in [0,1] \mbox{ where } \epsilon_1 < \epsilon_2: f(x_{\epsilon_1}) < f(x_{\epsilon_2})).
\end{align*}
\end{lemma}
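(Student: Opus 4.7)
The plan is a direct unpacking of definitions combined with De Morgan's laws; no new ideas are required. The one piece of bookkeeping to settle at the outset is the parameterization. The definition of $\preceq_c$ uses $y_\epsilon = \epsilon x + (1-\epsilon) y$, whereas the lemma uses $x_\epsilon = \epsilon x^* + (1-\epsilon) x$. Both parameterize the segment between $x$ and $x^*$, but in opposite directions. Since all clauses in the four characterizations involve $\forall\epsilon\in[0,1]$ or $\exists\epsilon\in[0,1]$, the substitution $\epsilon\mapsto 1-\epsilon$ shows that the definition of $\preceq_c$ (or $\preceq_f$) may be written equivalently in the $x_\epsilon$ parameterization.

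For (i), I would start from the definition that $x^*$ is $\preceq_c$-minimal iff $\forall x:\neg(x\prec_c x^*)$. Expanding the strict part gives $x\prec_c x^*\Leftrightarrow (x\preceq_c x^*)\wedge \neg(x^*\preceq_c x)$, so $\neg(x\prec_c x^*)\Leftrightarrow \neg(x\preceq_c x^*)\vee (x^*\preceq_c x)$. Translating each disjunct via $\preceq_c$ (read off the $x_\epsilon$ parameterization), one obtains that $x^*\preceq_c x$ is $\forall \epsilon\in[0,1]: x^*\cdot c(x_\epsilon)\leq x\cdot c(x_\epsilon)$, and $\neg(x\preceq_c x^*)$ is $\exists \epsilon\in[0,1]: x^*\cdot c(x_\epsilon) < x\cdot c(x_\epsilon)$. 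Their disjunction is exactly the right-hand side of (i).

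Part (ii) is completely symmetric: $\neg(x^*\prec_c x)\Leftrightarrow \neg(x^*\preceq_c x)\vee (x\preceq_c x^*)$, which after the same translation flips all the inequalities and gives (ii). Parts (iii) and (iv) follow the same template with $\preceq_f$ in place of $\preceq_c$, now using the defining monotonicity condition $\epsilon_1<\epsilon_2 \Rightarrow f(y_{\epsilon_1})\geq f(y_{\epsilon_2})$; when one reverses the parameterization from $y_\epsilon$ to $x_\epsilon$, the ordering of the two intermediate times in $[0,1]$ reverses correspondingly, producing the displayed forms for minimality and (with inequalities flipped) maximality.

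The main obstacle, such as it is, is purely notational: one has to be careful with the two reversals (parameterization of the chord, and the direction of inequalities in the strict versus weak parts) to make sure the final statement comes out with inequalities in the correct direction. Once that is organized, each of (i)--(iv) is a one-line application of the definition of the strict part together with the definition of the corresponding polyorder.
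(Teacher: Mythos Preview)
Your proposal is correct and follows essentially the same route as the paper: unwind $\neg(x\prec x^*)$ via De Morgan into $(x^*\preceq x)\vee\neg(x\preceq x^*)$ and then translate each disjunct using the definition of the polyorder, with the reparameterization $\epsilon\mapsto 1-\epsilon$ to pass between $x^*_\epsilon$ and $x_\epsilon$. The paper writes out exactly this chain of equivalences for part (i) and declares the remaining parts analogous, so your treatment matches it both in method and in level of detail.
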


\begin{proof}
We only prove part $(i)$ as the proofs of the other parts are analogous.
\begin{align*}
x^* \mbox{ is minimal } &\Leftrightarrow \forall x: \neg(x \prec x^*)\\
&\Leftrightarrow \forall x: \neg((x \preceq x^*) \wedge \neg(x^* \preceq x))\\
&\Leftrightarrow \forall x: (x^* \preceq x) \vee \neg(x \preceq x^*)\\
&\Leftrightarrow \forall x: (\forall \epsilon \in [0,1]: x^* \cdot c(x_\epsilon) \leq x \cdot c(x_\epsilon)) \vee \neg(\forall \epsilon \in [0,1]: x \cdot c(x^*_\epsilon) \leq x^* \cdot c(x^*_\epsilon))\\
&\Leftrightarrow \forall x: (\forall \epsilon \in [0,1]: x^* \cdot c(x_\epsilon) \leq x \cdot c(x_\epsilon)) \vee (\exists \epsilon \in [0,1]: x \cdot c(x^*_\epsilon) > x^* \cdot c(x^*_\epsilon))\\
&\Leftrightarrow \forall x: (\forall \epsilon \in [0,1]: x^* \cdot c(x_\epsilon) \leq x \cdot c(x_\epsilon)) \vee (\exists \epsilon \in [0,1]: x^* \cdot c(x_\epsilon) < x \cdot c(x_\epsilon)).\qedhere
\end{align*}
\end{proof}

We are going to use the following proposition.

\begin{proposition}
\label{oeritueorituut}
$x^*$ is $\preceq_c$-minimal if and only if it is $\preceq_{-c}$-maximal.
\end{proposition}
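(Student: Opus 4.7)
The plan is to reduce the proposition to the single observation that $\preceq_{-c}$ is the converse of $\preceq_c$, from which the minimal/maximal duality is a formal consequence. First I would substitute $-c$ for $c$ directly in the defining inequality of the polyorder and flip the sign, obtaining
\[
x \preceq_{-c} y \Leftrightarrow \forall \epsilon \in [0,1]: x \cdot c(y_\epsilon) \geq y \cdot c(y_\epsilon),
\]
where $y_\epsilon = \epsilon x + (1-\epsilon) y$ is the line segment from $y$ (at $\epsilon=0$) to $x$ (at $\epsilon=1$). Note that the path itself is unchanged because it depends only on the two endpoints, not on the vector field.

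Next I would unpack $y \preceq_c x$ using the definition with the roles of the two arguments swapped; this invokes the path $x_\epsilon = \epsilon y + (1-\epsilon) x$, which traces the same line segment as $y_\epsilon$ but in the opposite direction, since $x_\epsilon = y_{1-\epsilon}$. The change of variable $\epsilon \mapsto 1-\epsilon$ then turns the universal statement $\forall \epsilon \in [0,1]: y \cdot c(x_\epsilon) \leq x \cdot c(x_\epsilon)$ into $\forall \epsilon \in [0,1]: x \cdot c(y_\epsilon) \geq y \cdot c(y_\epsilon)$, which is exactly the condition I derived for $x \preceq_{-c} y$. Hence $x \preceq_{-c} y \Leftrightarrow y \preceq_c x$, and passing to asymmetric parts yields $x \prec_{-c} y \Leftrightarrow y \prec_c x$.

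Finally, I would invoke the definitions of the two notions in question: $x^*$ is $\preceq_{-c}$-maximal iff $\neg(x^* \prec_{-c} x)$ for every $x \in X$, which by the above equivalence is the same as $\neg(x \prec_c x^*)$ for every $x \in X$, which in turn is exactly the statement that $x^*$ is $\preceq_c$-minimal.

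I do not anticipate any substantive obstacle here; the only point demanding care is the reparameterization bookkeeping, because the definition of $\preceq_c$ is asymmetric in its two arguments (the left argument becomes the endpoint at $\epsilon=1$), so one must be careful to confirm that the segment in the definition of $y \preceq_c x$ and the segment in the definition of $x \preceq_{-c} y$ really are the same set of points, differing only by an orientation that the universal quantifier absorbs.
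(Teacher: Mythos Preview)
Your proof is correct. The key identity $x \preceq_{-c} y \Leftrightarrow y \preceq_c x$ that you establish is valid, and the reparameterization $x_\epsilon = y_{1-\epsilon}$ is handled correctly; once this relation-level duality is in place, the passage to strict parts and then to minimality/maximality is purely formal, exactly as you say.

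Your route differs from the paper's. The paper does not isolate the converse-relation identity at all; instead it invokes Lemma~\ref{alskjdfjsdkfjdkjf}, which gives explicit disjunctive characterizations of $\preceq_c$-minimality and $\preceq_{-c}$-maximality, and then simply observes that multiplying every inequality in the maximality characterization by $-1$ yields the minimality characterization verbatim. Your argument is more structural: you prove once that $\preceq_{-c}$ is the converse of $\preceq_c$, from which the minimal/maximal correspondence follows for free, without ever unpacking the disjunctions. The paper's approach has the virtue of reusing a lemma it needs elsewhere anyway; yours has the virtue of exposing the underlying reason the proposition holds and would transfer unchanged to any setting where one relation is the converse of another.
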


\begin{proof}
By Lemma~\ref{alskjdfjsdkfjdkjf}, $x^*$ is $\preceq_{-c}$-maximal if and only if
\begin{align}
\forall x: (\forall \epsilon \in [0,1]: x^* \cdot (-c(x_\epsilon)) \geq x \cdot (-c(x_\epsilon))) \vee (\exists \epsilon \in [0,1]: x^* \cdot (-c(x_\epsilon)) > x \cdot (-c(x_\epsilon))).\label{balskdjfhdskjfh}
\end{align}
Furthermore, by the same lemma, $x^*$ is $\preceq_c$-minimal if and only if
\begin{align}
\forall x: (\forall \epsilon \in [0,1]: x^* \cdot c(x_\epsilon) \leq x \cdot c(x_\epsilon)) \vee (\exists \epsilon \in [0,1]: x^* \cdot c(x_\epsilon) < x \cdot c(x_\epsilon)).\label{cbaoiefhsdkjfh}
\end{align}
But~\eqref{balskdjfhdskjfh} and~\eqref{cbaoiefhsdkjfh} are equivalent.
\end{proof}

Finally, we state without proof the following proposition.

\begin{proposition}
\label{oeritueorituutf}
$x^*$ is $\preceq_f$-minimal if and only if it is $\preceq_{-f}$-maximal.
\end{proposition}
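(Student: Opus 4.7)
The plan is to mirror the proof of Proposition~\ref{oeritueorituut} almost verbatim, replacing the vector-field characterizations in parts~(i) and~(ii) of Lemma~\ref{alskjdfjsdkfjdkjf} with the scalar-field characterizations in parts~(iii) and~(iv). The key observation is that the characterizing conditions differ only by reversing inequalities on $f$-values, and negating $f$ (passing from $\preceq_f$ to $\preceq_{-f}$) reverses those same inequalities; so minimality with respect to $f$ and maximality with respect to $-f$ will reduce to literally the same logical statement about $f$.

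More concretely, I would first apply Lemma~\ref{alskjdfjsdkfjdkjf}(iv) to the continuous scalar field $-f$ to obtain that $x^*$ is $\preceq_{-f}$-maximal if and only if
\begin{align*}
\forall x:\ (\forall \epsilon_1 < \epsilon_2 \in [0,1]:\ -f(x_{\epsilon_1}) \leq -f(x_{\epsilon_2}))\ \vee\ (\exists \epsilon_1 < \epsilon_2 \in [0,1]:\ -f(x_{\epsilon_1}) < -f(x_{\epsilon_2})).
\end{align*}
Multiplying through by $-1$ flips each inequality, yielding exactly
\begin{align*}
\forall x:\ (\forall \epsilon_1 < \epsilon_2 \in [0,1]:\ f(x_{\epsilon_1}) \geq f(x_{\epsilon_2}))\ \vee\ (\exists \epsilon_1 < \epsilon_2 \in [0,1]:\ f(x_{\epsilon_1}) > f(x_{\epsilon_2})),
\end{align*}
which by Lemma~\ref{alskjdfjsdkfjdkjf}(iii) is precisely the condition that $x^*$ be $\preceq_f$-minimal.

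Nothing here is delicate: there is no geometric obstacle, no need to invoke continuity or convexity beyond what already sits inside Lemma~\ref{alskjdfjsdkfjdkjf}, and the two directions of the biconditional come together from a single chain of equivalences. The only thing I would be careful about is keeping the quantifier pattern $\epsilon_1 < \epsilon_2$ intact when negating $f$, since the ordering of the $\epsilon$'s plays no role in the sign flip and should not be swapped; that is really the only place where a sloppy rewrite could introduce a bogus case. The proof will therefore be a three-line chain of equivalences modeled directly on the one used for Proposition~\ref{oeritueorituut}.
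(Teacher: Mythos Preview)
Your proposal is correct and is exactly the argument the paper has in mind: the paper states this proposition without proof immediately after Proposition~\ref{oeritueorituut}, implicitly indicating that one should repeat that argument with parts~(iii) and~(iv) of Lemma~\ref{alskjdfjsdkfjdkjf} in place of parts~(i) and~(ii), which is precisely what you do.
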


\subsection{Critical elements}

First we show that the set of minimal elements is a subset of the set of critical elements. 

\begin{definition}
$x^*$ is a critical element of $(X, c)$ if $\forall x \in X: x \cdot c(x^*) \geq x^* \cdot c(x^*)$.
\end{definition}

We are going to need the following lemma.

\begin{lemma}
\label{ksjdhfksdjfhjd}
If $y \cdot c(x) < x \cdot c(x)$, then there exists $\epsilon' \in [0,1]$ such that, $\forall \epsilon \in [\epsilon', 1]$, $y \cdot c(y_\epsilon) < x \cdot c(y_\epsilon)$ where $y_\epsilon = \epsilon x + (1-\epsilon) y$.
\end{lemma}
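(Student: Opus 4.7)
The plan is a direct continuity argument, so I will frame it that way and just outline the steps.

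First I would introduce the auxiliary scalar function $g : [0,1] \to \mathbb{R}$ defined by
\begin{align*}
g(\epsilon) = x \cdot c(y_\epsilon) - y \cdot c(y_\epsilon) = (x - y) \cdot c(y_\epsilon),
\end{align*}
whose sign encodes exactly the inequality we want. Since $c$ is assumed continuous on $X$ and $\epsilon \mapsto y_\epsilon = \epsilon x + (1-\epsilon) y$ is continuous (and maps into $X$ because $X$ is convex, so $c(y_\epsilon)$ is well defined), the composition $\epsilon \mapsto c(y_\epsilon)$ is continuous, and hence so is $g$.

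Next I would evaluate $g$ at the right endpoint: $y_1 = x$, so $g(1) = (x - y) \cdot c(x) = x \cdot c(x) - y \cdot c(x)$, which is strictly positive by the hypothesis $y \cdot c(x) < x \cdot c(x)$.

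Finally, by continuity of $g$ at $\epsilon = 1$, there exists $\delta > 0$ such that $|g(\epsilon) - g(1)| < g(1)/2$ for all $\epsilon \in [0,1]$ with $|\epsilon - 1| < \delta$; in particular, $g(\epsilon) > g(1)/2 > 0$ on this neighborhood. Setting $\epsilon' = \max\{0, 1 - \delta/2\}$ (or any such point in $[0,1]$ close enough to $1$), we get $g(\epsilon) > 0$, i.e., $y \cdot c(y_\epsilon) < x \cdot c(y_\epsilon)$, for every $\epsilon \in [\epsilon', 1]$. There is no real obstacle here; the only thing to be slightly careful about is choosing $\epsilon' \in [0,1]$ (as the statement requires) rather than an arbitrary real near $1$, which is handled by the max above.
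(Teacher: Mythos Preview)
Your argument is correct and is exactly the approach the paper takes: the paper's proof simply states that the result ``is a simple implication of the continuity of $(y-x)\cdot c(y_\epsilon)$,'' and you have written out precisely that continuity argument in detail. There is nothing to add.
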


\begin{proof}
The proof is a simple implication of the continuity of $(y-x) \cdot c(y_\epsilon)$.
\end{proof}

\begin{theorem}
\label{lksajdfhskdjfhd}
If $x^*$ is $\preceq_c$-minimal, then $x^*$ is a critical element of $(X, c)$.
\end{theorem}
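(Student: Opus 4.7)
The plan is to argue by contradiction. Suppose $x^*$ is $\preceq_c$-minimal but not a critical element of $(X,c)$. Then there exists $y \in X$ with $y \cdot c(x^*) < x^* \cdot c(x^*)$. The goal is to exhibit an element $y' \in X$ such that $y' \prec_c x^*$, violating minimality.

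The natural first guess is $y' = y$, but this fails: strict dominance $y \prec_c x^*$ requires the weak dominance inequality $(y - x^*) \cdot c(y_\epsilon) \leq 0$ to hold along the entire segment from $x^*$ to $y$, whereas the hypothesis only provides a strict inequality at $x^*$ (i.e., at the $\epsilon = 1$ endpoint). This is where Lemma~\ref{ksjdhfksdjfhjd} becomes the workhorse. Applying it with $x = x^*$, we obtain some $\epsilon' \in [0,1]$ such that, for all $\epsilon \in [\epsilon',1]$, $y \cdot c(y_\epsilon) < x^* \cdot c(y_\epsilon)$ along $y_\epsilon = \epsilon x^* + (1-\epsilon) y$; by the underlying continuity argument we may in fact take $\epsilon' < 1$. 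The idea is then to shorten the competing segment by setting $y' := y_{\epsilon'} = \epsilon' x^* + (1-\epsilon')y$, so that the strict inequality now holds on the \emph{entire} segment from $x^*$ to $y'$, not merely near $x^*$.

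The remaining step is a book-keeping computation: reparametrize the segment from $x^*$ to $y'$ as $z_\tau = \tau y' + (1-\tau) x^*$ and verify that $\{z_\tau : \tau \in [0,1]\}$ coincides with $\{y_\epsilon : \epsilon \in [\epsilon',1]\}$. Using $y' - x^* = (1-\epsilon')(y - x^*)$ and $1 - \epsilon' > 0$, one gets $(y' - x^*) \cdot c(z_\tau) = (1-\epsilon')(y - x^*) \cdot c(z_\tau) < 0$ for every $\tau \in [0,1]$. This yields $y' \cdot c(z_\tau) < x^* \cdot c(z_\tau)$ for all $\tau$, which gives both $y' \preceq_c x^*$ and $\neg(x^* \preceq_c y')$, hence $y' \prec_c x^*$ and the desired contradiction.

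The only genuine obstacle is the one already addressed: bridging the gap between the \emph{pointwise} strict inequality at $x^*$ furnished by the failure of the variational inequality and the \emph{uniform on a segment} strict inequality required by the definition of $\prec_c$. Lemma~\ref{ksjdhfksdjfhjd} does precisely this bridging, and once the shortened chord $y'$ is defined, the rest is a direct verification from the definitions of $\preceq_c$ and $\prec_c$.
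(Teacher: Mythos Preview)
Your proof is correct and follows essentially the same approach as the paper's own proof: argue by contradiction, apply Lemma~\ref{ksjdhfksdjfhjd} to the invading point, and conclude that some point on the chord strictly dominates $x^*$. The paper compresses everything into a single sentence (``there exists an element in the convex hull of $x^*$ and $x$ that dominates $x^*$''), whereas you explicitly identify that element as $y'=y_{\epsilon'}$ and carry out the reparametrization and sign computation verifying $y' \prec_c x^*$; but the underlying idea is identical.
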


\begin{proof}
Let $x^*$ be minimal and suppose that there exists $x \in X$ such that $x \cdot c(x^*) < x^* \cdot c(x^*)$. Then, by Lemma \ref{ksjdhfksdjfhjd},  there exists an element in the convex hull of $x^*$ and $x$ that dominates $x^*$, which contradicts our assumption that $x^*$ is minimal.
\end{proof}

Note that the maximal elements of $\preceq_c$ are, in general, {\em not} a subset of the critical elements of $(X, c)$. However, they are a subset of the critical elements of $(X, -c)$, a fact that is a simple implication of Proposition~\ref{oeritueorituut}. Furthermore, note the following:
\begin{enumerate}

\item Since the maximal elements of $(X,c)$ are the minimal elements of $(X, -c)$, any theorem we prove below about minimal elements has an analogous statement for maximal elements.

\item The set of critical elements may have members that are neither minimal nor maximal. Consider, for example, $f(x) = x^2, x \in \mathbb{R}$. Viewing $f$ as a vector field, the origin is a critical element, but it is neither a minimal nor a maximal element (since the origin is dominated by every element of the left axis and dominates every element of the right axis).

\end{enumerate}

\subsection{Strict local minima and evolutionarily stable states}

\subsubsection{Linear vector polyorders}

\begin{definition}[See~\cite{PopulationGames}]
$x^*$ is an evolutionarily stable state of $(X, c)$ if there exists a neighborhood $O \subseteq X$ of $x^*$ such that $\forall x \in O - \{x^*\}: x^* \cdot c(x) < x \cdot c(x)$.
\end{definition}

\begin{theorem}
\label{qwpoeriuweoiru}
If $x^*$ is  an evolutionarily stable state of $(X, c)$, then $x^*$ is $\preceq_c$-minimal.
\end{theorem}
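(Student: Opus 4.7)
The plan is to argue by contradiction. Suppose $x^*$ is an evolutionarily stable state but is not $\preceq_c$-minimal. Then there exists some $x \in X$ with $x \prec_c x^*$; necessarily $x \neq x^*$, and from the definition of the strict part we have $x \preceq_c x^*$, that is, writing $x^*_\epsilon = \epsilon x + (1-\epsilon) x^*$,
\begin{align*}
x \cdot c(x^*_\epsilon) \leq x^* \cdot c(x^*_\epsilon), \quad \forall \epsilon \in [0,1].
\end{align*}
This will be the relation we contradict.

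Next I would bring in the ESS hypothesis. Let $O$ be the neighborhood of $x^*$ on which $x^* \cdot c(z) < z \cdot c(z)$ for every $z \in O \setminus \{x^*\}$. Since $x \neq x^*$ and the segment $\epsilon \mapsto x^*_\epsilon$ is continuous with $x^*_0 = x^*$, there exists $\epsilon_0 > 0$ small enough that $x^*_\epsilon \in O \setminus \{x^*\}$ for every $\epsilon \in (0, \epsilon_0]$ (here we use that $X$ is convex, so the segment lies in $X$). The ESS property then gives
\begin{align*}
x^* \cdot c(x^*_\epsilon) < x^*_\epsilon \cdot c(x^*_\epsilon), \quad \forall \epsilon \in (0,\epsilon_0].
\end{align*}

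Finally I would expand the right-hand side using $x^*_\epsilon = \epsilon x + (1-\epsilon) x^*$ to get $x^*_\epsilon \cdot c(x^*_\epsilon) = \epsilon\, x \cdot c(x^*_\epsilon) + (1-\epsilon)\, x^* \cdot c(x^*_\epsilon)$; subtracting $(1-\epsilon) x^* \cdot c(x^*_\epsilon)$ from both sides and dividing by $\epsilon > 0$ yields
\begin{align*}
x^* \cdot c(x^*_\epsilon) < x \cdot c(x^*_\epsilon),
\end{align*}
which directly contradicts the weak dominance inequality displayed in the first paragraph. Hence no such $x$ exists and $x^*$ is $\preceq_c$-minimal.

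I do not anticipate a serious obstacle: the argument is a one-line algebraic rearrangement on the ESS inequality combined with the definition of weak dominance. The only thing worth being careful about is the parametrization convention ($x^*_\epsilon$ starts at $x^*$ when $\epsilon=0$), which is precisely what makes the ``for small $\epsilon>0$'' step legitimate and lets us stay inside the ESS neighborhood $O$.
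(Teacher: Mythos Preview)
Your proof is correct and follows essentially the same route as the paper. The paper invokes Lemma~\ref{alskjdfjsdkfjdkjf} to reduce to showing that for every $x\neq x^*$ there exists $\epsilon$ with $x^*\cdot c(x_\epsilon)<x\cdot c(x_\epsilon)$, and then obtains this from the ESS inequality by the same expand-and-divide-by-$\epsilon$ manipulation you perform; your contradiction framing simply packages the same step without citing the lemma.
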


\begin{proof}
Let $x_\epsilon = \epsilon x^* + (1-\epsilon) x$, $\epsilon \in [0,1]$. By Lemma~\ref{alskjdfjsdkfjdkjf}, it suffices to show that for all $x \in X$ where $x \neq x^*$ there exists $\epsilon \in [0,1]$ such that $x^* \cdot c(x_\epsilon) < x \cdot c(x_\epsilon)$. By the assumption that $x^*$ is an evolutionarily stable state, for all $x \in O$ where $x \neq x^*$ there exists $\epsilon \in [0,1]$ such that $x^* \cdot c(x_\epsilon) < x_\epsilon \cdot c(x_\epsilon)$, which is also certainly true for all $x \in X$ since $O$ is a neighborhood; expanding $x_\epsilon$ and rearranging proves the theorem.
\end{proof}

\subsubsection{Linear scalar polyorders}

\begin{definition}
$x^*$ is a strict local minimum of $(X, f)$ if there exists a neighborhood $O \subseteq X$ of $x^*$ such that $\forall x \in O - \{x^*\}: f(x^*) < f(x)$.
\end{definition}

\begin{theorem}
\label{zxvsdflsdkljfhdslk}
If $x^*$ is a strict local minimum of $(X, f)$, then $x^*$ is $\preceq_f$-minimal.
\end{theorem}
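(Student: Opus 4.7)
The plan is to apply Lemma~\ref{alskjdfjsdkfjdkjf}(iii), which characterizes $\preceq_f$-minimality of $x^*$ as the property that, for every $x \in X$, with $x_\epsilon = \epsilon x^* + (1-\epsilon) x$, either $f$ is nonincreasing in $\epsilon$ along the segment or there exist $\epsilon_1 < \epsilon_2$ in $[0,1]$ with $f(x_{\epsilon_1}) > f(x_{\epsilon_2})$. It therefore suffices to verify, for each $x \in X$, that one of these two disjuncts is satisfied. I would dispose of the degenerate case $x = x^*$ immediately, since then $x_\epsilon = x^*$ for every $\epsilon$ and $f(x_{\epsilon_1}) = f(x_{\epsilon_2})$ holds for all $\epsilon_1,\epsilon_2$, giving the first disjunct trivially.

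For $x \neq x^*$, my intention is to exhibit an explicit witness to the second disjunct by exploiting the strict local minimum hypothesis. Let $O \subseteq X$ be a neighborhood of $x^*$ on which $f(x^*) < f(x')$ for every $x' \in O - \{x^*\}$. Since $X$ is convex, the path $x_\epsilon$ lies entirely in $X$, and by continuity of $\epsilon \mapsto x_\epsilon$ together with $x_1 = x^*$, I can pick $\epsilon_1 \in [0,1)$ with $x_{\epsilon_1} \in O$. The affine-combination argument rules out $x_{\epsilon_1} = x^*$: if $\epsilon_1 x^* + (1-\epsilon_1) x = x^*$ with $\epsilon_1 < 1$, then $x = x^*$, contradicting our case assumption. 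Hence $x_{\epsilon_1} \in O - \{x^*\}$ and the strict local minimum hypothesis forces $f(x_{\epsilon_1}) > f(x^*)$.

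Setting $\epsilon_2 = 1$ yields $f(x_{\epsilon_1}) > f(x^*) = f(x_1) = f(x_{\epsilon_2})$ with $\epsilon_1 < \epsilon_2$, which is precisely the second disjunct of Lemma~\ref{alskjdfjsdkfjdkjf}(iii), completing the verification for this $x$ and hence the proof. There is essentially no obstacle in this argument; it is a direct unwinding of definitions. The one point that deserves care is the affine verification that $x_{\epsilon_1} \neq x^*$ for $\epsilon_1 \in [0,1)$ when $x \neq x^*$, which is what lets us legitimately apply the strict inequality furnished by the strict-local-minimum property rather than merely the nonstrict one.
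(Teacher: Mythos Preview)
Your proof is correct and follows essentially the same approach as the paper: both invoke Lemma~\ref{alskjdfjsdkfjdkjf}(iii) and verify the second disjunct for every $x \neq x^*$ by using that points $x_\epsilon$ with $\epsilon$ near $1$ lie in $O$ and hence satisfy $f(x_\epsilon) > f(x^*) = f(x_1)$. The only difference is cosmetic: the paper phrases this as a proof by contradiction (assuming no such $\epsilon_1 < \epsilon_2$ exist and deriving $f(x) \leq f(x^*)$), whereas you construct the witnessing pair $(\epsilon_1,1)$ directly; your version is arguably cleaner and is more explicit about why the relevant point on the segment lies in $O \setminus \{x^*\}$.
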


\begin{proof}
Let $x_\epsilon = \epsilon x^* + (1-\epsilon) x$, $\epsilon \in [0,1]$. By Lemma~\ref{alskjdfjsdkfjdkjf}, it suffices to show that for all $x \in X$ where $x \neq x^*$ there exist $\epsilon_1, \epsilon_2 \in [0,1]$ where $\epsilon_1 < \epsilon_2$ such that $f(x_{\epsilon_1}) > f(x_{\epsilon_2})$. Suppose, for the sake of contradiction, that this is false. Then, there exists, $x \in O$ such that $\forall \epsilon_1, \epsilon_2 \in [0,1]$ with $\epsilon_1 < \epsilon_2$,  $f(x_{\epsilon_1}) \leq f(x_{\epsilon_2})$. But then $f(x) \leq f(x^*)$, which contradicts the assumption that $x^*$ is a strict local minimum.
\end{proof}

\subsection{Local minima, neutrally stable states, and evolutionarily stable sets}

\subsubsection{Linear vector polyorders}

Consider the definition of a {\em neutrally stable state} in evolutionary game theory.

\begin{definition}
$x^*$ is an neutrally stable state of $(X, c)$ if there exists a neighborhood $O \subseteq X$ of $x^*$ such that $\forall x \in O: x^* \cdot c(x) \leq x \cdot c(x)$.
\end{definition}

We may also analogously define the concept of a local minimum of a vector polyorder.

\begin{definition}
$x^*$ is a local minimum of $(X, \preceq_c)$ if there exists a neighborhood $O \subseteq X$ of $x^*$ such that $\forall x \in O: x^* \preceq_c x$.
\end{definition}

The concepts of local minimum of linear vector polyorders and of neutrally stable states are equivalent as the following theorem asserts.

\begin{theorem}
$x^*$ is a neutrally stable state of $(X, c)$ $\Leftrightarrow$ $x^*$ is a local minimum of $(X, \preceq_c)$.
\end{theorem}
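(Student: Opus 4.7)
The plan is to prove the two directions separately, noting that the harder-looking direction is actually the forward one, because the definition of local minimum quantifies over the whole segment $x_\epsilon$ while neutral stability only directly controls behavior at the single point $x$.

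For the easy direction ($\Leftarrow$), I would simply specialize. If $x^*$ is a local minimum with witnessing neighborhood $O$, then for every $x \in O$ we have $x^* \cdot c(x_\epsilon) \leq x \cdot c(x_\epsilon)$ for all $\epsilon \in [0,1]$; evaluating at $\epsilon = 0$ gives $x_\epsilon = x$ and hence $x^* \cdot c(x) \leq x \cdot c(x)$, which is exactly neutral stability on the same $O$.

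For the forward direction ($\Rightarrow$), the key observation is that neutral stability can be bootstrapped from a pointwise condition at $x$ to the segment condition defining $\preceq_c$, by applying it instead at each intermediate point $x_\epsilon$. Assume neutral stability with neighborhood $O$. Since $X$ is convex, I can shrink $O$ to $O' := B_r(x^*) \cap X$ for some $r>0$; this $O'$ is convex and contains $x^*$. Then for any $x \in O'$, the whole segment $\{x_\epsilon : \epsilon \in [0,1]\}$ lies in $O'$, so neutral stability applies at each $x_\epsilon$: $x^* \cdot c(x_\epsilon) \leq x_\epsilon \cdot c(x_\epsilon)$. Expanding $x_\epsilon = \epsilon x^* + (1-\epsilon) x$ on the right-hand side yields
\begin{align*}
x^* \cdot c(x_\epsilon) \leq \epsilon\, x^* \cdot c(x_\epsilon) + (1-\epsilon)\, x \cdot c(x_\epsilon),
\end{align*}
and rearranging gives $(1-\epsilon)(x^* - x) \cdot c(x_\epsilon) \leq 0$. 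For $\epsilon \in [0,1)$ this immediately yields $x^* \cdot c(x_\epsilon) \leq x \cdot c(x_\epsilon)$, and for $\epsilon = 1$ the point $x_\epsilon$ equals $x^*$ so both sides equal $x^* \cdot c(x^*)$ (alternatively, continuity of $c$ passes the inequality through the limit). This establishes $x^* \preceq_c x$ for every $x \in O'$, i.e., $x^*$ is a local minimum of $(X, \preceq_c)$.

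The only mild obstacle is the bookkeeping at the endpoint $\epsilon = 1$, which is handled either by inspection ($x_1 = x^*$ makes the inequality trivial) or by continuity; otherwise the argument is a direct convexity manipulation and a careful choice of a convex sub-neighborhood.
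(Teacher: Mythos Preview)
Your proof is correct and follows essentially the same approach as the paper: both expand $x_\epsilon = \epsilon x^* + (1-\epsilon)x$ in the neutral-stability inequality $x^* \cdot c(x_\epsilon) \leq x_\epsilon \cdot c(x_\epsilon)$ and rearrange to obtain $x^* \cdot c(x_\epsilon) \leq x \cdot c(x_\epsilon)$, with the reverse direction handled by specializing to $\epsilon=0$. Your version is a bit more careful than the paper's in explicitly shrinking to a convex sub-neighborhood so that the whole segment $x_\epsilon$ stays in $O$, and in treating the endpoint $\epsilon = 1$ separately, both of which the paper leaves implicit.
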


\begin{proof}
Let $x^*$ be a neutrally stable state of $(X, c)$, and let $x \in O$. Let $x_\epsilon = \epsilon x^* + (1-\epsilon) x$. Then, for all $\epsilon \in [0,1]$, we have that  
\begin{align*}
x^* \cdot c(x_\epsilon) \leq x_\epsilon \cdot c(x_\epsilon) &\Leftrightarrow x^* \cdot c(x_\epsilon) \leq (\epsilon x^* + (1-\epsilon) x) \cdot c(x_\epsilon)\\
   &\Leftrightarrow (1-\epsilon) x^* \cdot c(x_\epsilon) \leq (1-\epsilon) x \cdot c(x_\epsilon)\\
   &\Leftrightarrow x^* \cdot c(x_\epsilon) \leq x \cdot c(x_\epsilon),
\end{align*}
and, therefore, $x^* \preceq_c x$. The reverse direction is analogous.
\end{proof}

Next we show that local minima are critical elements.

\begin{theorem}
\label{qwpoeiruweiru}
If $x^*$ is a local minimum of $(X, \preceq_c)$, then $x^*$ is a critical element of $(X, c)$.
\end{theorem}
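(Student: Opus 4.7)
The plan is to reduce the critical-element inequality at $x^*$ (which must hold for all $y \in X$) to the weak dominance condition enjoyed by $x^*$ on the neighborhood $O$, using a convexity trick to extend the inequality from $O$ to all of $X$. The central observation is that the weak dominance $x^* \preceq_c z$ is a statement about every point on the segment from $z$ to $x^*$, in particular at the endpoint $\epsilon=1$, which is $x^*$ itself.

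Concretely, I would fix an arbitrary $y \in X$ and choose $\lambda \in (0,1]$ small enough that $z := \lambda y + (1-\lambda) x^*$ lies in $O$; since $X$ is convex, $z \in X$, and since $O$ is a neighborhood of $x^*$ such a $\lambda$ exists. By hypothesis, $x^* \preceq_c z$, and the definition of weak dominance applied along the path $z_\epsilon = \epsilon x^* + (1-\epsilon) z$ at $\epsilon = 1$ yields
\begin{align*}
x^* \cdot c(x^*) \leq z \cdot c(x^*).
\end{align*}
Expanding $z$ gives
\begin{align*}
x^* \cdot c(x^*) \leq \lambda\, y \cdot c(x^*) + (1-\lambda)\, x^* \cdot c(x^*),
\end{align*}
and rearranging (dividing by $\lambda > 0$) produces $x^* \cdot c(x^*) \leq y \cdot c(x^*)$. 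Since $y \in X$ was arbitrary, $x^*$ satisfies the variational inequality, i.e., it is a critical element.

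There is no real obstacle here: the definition of weak dominance supplies an inequality along the entire segment, and evaluating at the endpoint already gives the critical-element inequality locally; convexity of $X$ then pulls the inequality from $O$ out to all of $X$ through a single interpolation step. The only subtlety worth flagging is that the argument uses the parametrization convention from the definition of $\preceq_c$, where $\epsilon = 1$ corresponds to $x^*$ rather than to the other endpoint, so one must be careful that the endpoint-evaluation recovers the right inequality.
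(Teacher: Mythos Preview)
Your proof is correct and follows essentially the same approach as the paper: both arguments first extract the inequality $z \cdot c(x^*) \geq x^* \cdot c(x^*)$ for $z \in O$ by evaluating the weak-dominance condition at the endpoint $\epsilon = 1$, and then use the linearity of $y \mapsto y \cdot c(x^*)$ together with convexity of $X$ to extend this inequality to every $y \in X$. The paper phrases the extension step as a contradiction (a violator outside $O$ would yield a violator inside $O$), whereas you argue it directly via interpolation, but the content is identical.
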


\begin{proof}
Let $x^*$ be a local minimum of $(X, \preceq_c)$. Then there exists $O \subseteq X$ such that $\forall x \in O \mbox{ } \forall \epsilon \in [0,1]: x^* \cdot c(x_\epsilon) \leq x \cdot c(x_\epsilon)$. Therefore, for all $x \in O$, $x \cdot c(x^*) \geq x^* \cdot c(x^*)$. Suppose now there exists $x \in X-O$ such that $x \cdot c(x^*) < x^* \cdot c(x^*)$. Then, for all $\epsilon \in [0,1]$, $x_\epsilon \cdot c(x^*) < x^* \cdot c(x^*)$, which contradicts the previous implication that, for all $x \in O$, $x \cdot c(x^*) \geq x^* \cdot c(x^*)$.
\end{proof}

However, as, for example, shown in Figure~\ref{saldkjfnalxdkhjfgg}, a local minimum may not be minimal. 

\begin{figure}[tb]
\centering
\includegraphics[width=12cm]{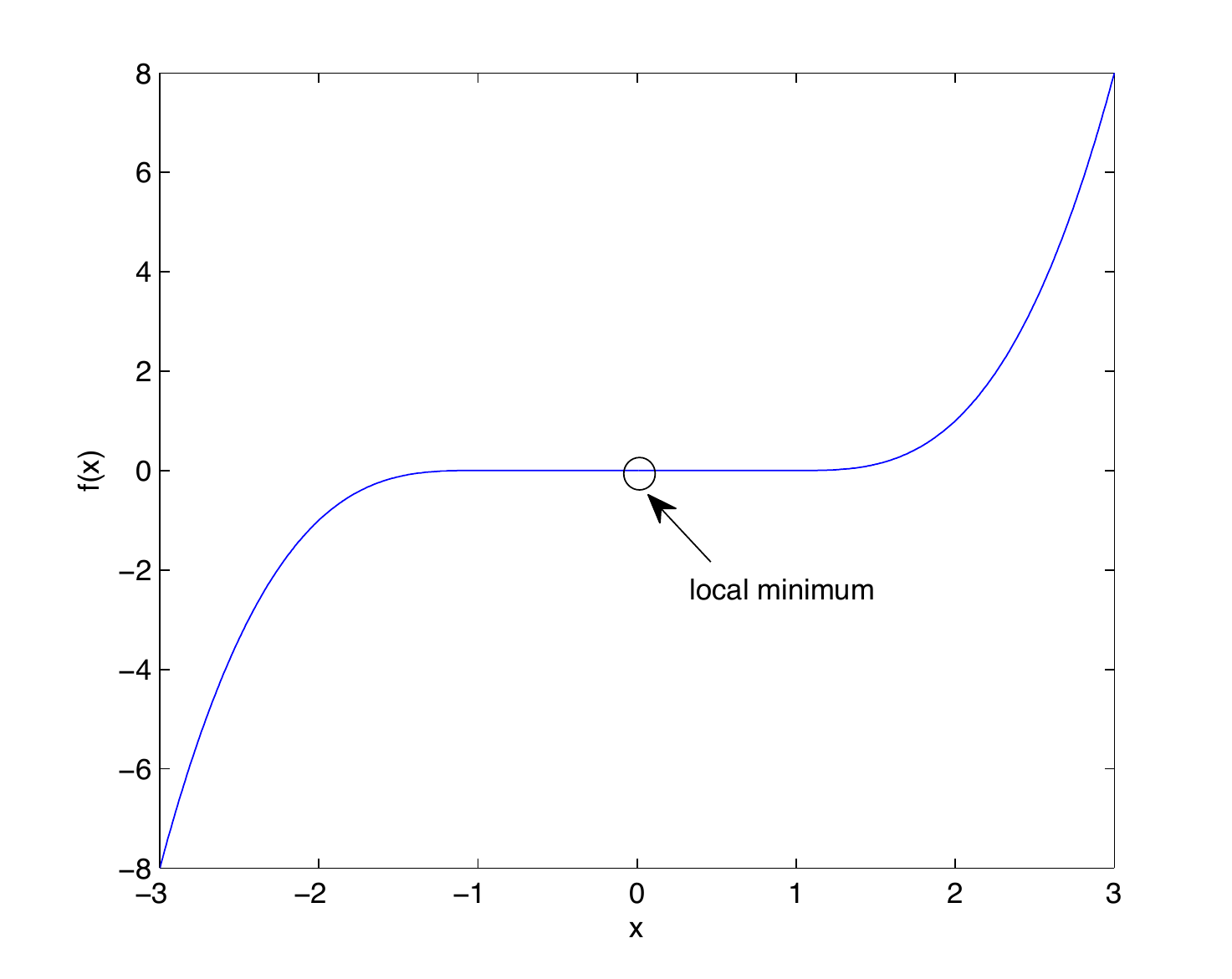}
\caption{\label{saldkjfnalxdkhjfgg}
An example showing a local minimum element that is not minimal.}
\end{figure}

Being a local minimum is a necessary condition in a generalization of the concept of the evolutionarily stable state, namely the {\em evolutionarily stable set}.

\begin{definition}[See~\cite{Weibull}]
$X^*$ is an evolutionarily stable set of $(X, c)$ if it is nonempty and closed and if for each $x^* \in X^*$ there exists a neighborhood $O(x^*) \subseteq X$ of $x^*$ such that $\forall x \in O(x^*): x^* \cdot c(x) \leq x \cdot c(x)$ with strict inequality if $x \not\in X^*$.
\end{definition}

Evolutionarily stable sets are minimal as the following theorem asserts.

\begin{theorem}
\label{laksjdhfsdkjfhdf}
Let $X^*$ be an evolutionarily stable set of $(X, c)$, and let $x^* \in X^*$. Then $x^*$ is $\preceq_c$-minimal.
\end{theorem}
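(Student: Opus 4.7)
The plan is to argue by contradiction via Lemma~\ref{alskjdfjsdkfjdkjf}(i). Suppose $x^*$ is not $\preceq_c$-minimal; writing $x_\epsilon = \epsilon x^* + (1-\epsilon) x$ and $\phi(\epsilon) = (x - x^*) \cdot c(x_\epsilon)$, there exists $x \in X$ with $\phi(\epsilon) \leq 0$ on $[0,1]$ and $\phi(\epsilon^\ast) < 0$ for some $\epsilon^\ast \in [0,1]$. I would aim to show that $\phi$ must in fact vanish identically on $[0,1]$, contradicting the strict negativity at $\epsilon^\ast$.

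The first step mirrors the proof of Theorem~\ref{qwpoeriuweoiru}. For $\epsilon$ sufficiently close to $1$ (with $\epsilon<1$), $x_\epsilon$ lies in the neighborhood $O(x^*)$, so the ES-set inequality at $x^*$ gives $x^* \cdot c(x_\epsilon) \leq x_\epsilon \cdot c(x_\epsilon)$ with strict inequality when $x_\epsilon \notin X^*$; expanding $x_\epsilon$ on the right and dividing by $1-\epsilon > 0$ turns this into $\phi(\epsilon) \geq 0$, again strict when $x_\epsilon \notin X^*$. Combined with the hypothesis $\phi(\epsilon) \leq 0$, this forces $\phi(\epsilon) = 0$ and $x_\epsilon \in X^*$ on a one-sided neighborhood of $\epsilon = 1$.

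To propagate the equality to all of $[0,1]$, I would let
\[
\tilde\epsilon = \inf\bigl\{\epsilon_0 \in [0,1] : \phi \equiv 0 \text{ on } [\epsilon_0, 1] \text{ and } x_\epsilon \in X^* \text{ for all } \epsilon \in [\epsilon_0, 1]\bigr\}.
\]
By the previous step $\tilde\epsilon < 1$, and by continuity of $\phi$ together with closedness of $X^*$, $\phi(\tilde\epsilon) = 0$ and $x_{\tilde\epsilon} \in X^*$. If $\tilde\epsilon > 0$, invoke the ES-set property at the interior base point $x_{\tilde\epsilon}$: in a neighborhood $O(x_{\tilde\epsilon})$, $x_{\tilde\epsilon} \cdot c(y) \leq y \cdot c(y)$ with strict inequality when $y \notin X^*$. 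Setting $y = x_\epsilon$ with $\epsilon < \tilde\epsilon$ close to $\tilde\epsilon$ and using the identity $x_{\tilde\epsilon} - x_\epsilon = (\tilde\epsilon - \epsilon)(x^* - x)$ converts this into $(\epsilon - \tilde\epsilon)\phi(\epsilon) \leq 0$, strict when $x_\epsilon \notin X^*$. Since $\epsilon - \tilde\epsilon < 0$, this forces $\phi(\epsilon) \geq 0$, and the strict case $\phi(\epsilon) > 0$ would contradict $\phi \leq 0$, so $x_\epsilon \in X^*$ and $\phi(\epsilon) = 0$ throughout a left neighborhood of $\tilde\epsilon$, contradicting the definition of $\tilde\epsilon$. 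Hence $\tilde\epsilon = 0$, so $\phi \equiv 0$ on $[0,1]$, contradicting $\phi(\epsilon^\ast) < 0$.

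The main obstacle is this propagation step. Unlike the ESS case, where strict inequality holds on the whole punctured neighborhood, the ES-set hypothesis is only weakly informative at points that happen to lie in $X^*$, so one cannot witness the strict-inequality disjunct of Lemma~\ref{alskjdfjsdkfjdkjf}(i) at a single $\epsilon$. One must instead chain the ES-set property at successive base points $x_{\tilde\epsilon} \in X^*$ along the segment, using closedness of $X^*$ to certify that each new anchor is itself in $X^*$ and is therefore a legitimate pivot for the next extension.
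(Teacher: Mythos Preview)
Your proof is correct and follows essentially the same approach as the paper: both chain the ES-set inequality along the segment $\{x_\epsilon\}$, pivoting at successive base points in $X^*$ and using that equality on a subsegment forces the next pivot to lie in $X^*$. Your infimum argument with $\tilde\epsilon$ (together with closedness of $X^*$) is a cleaner formalization of what the paper handles informally with ``continuing in this way,'' and in fact closes a gap there, since the paper does not explain why the iterated neighborhood steps must exhaust $[0,1]$ rather than accumulate at an interior point.
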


\begin{figure}[tb]
\centering
\includegraphics[width=10cm]{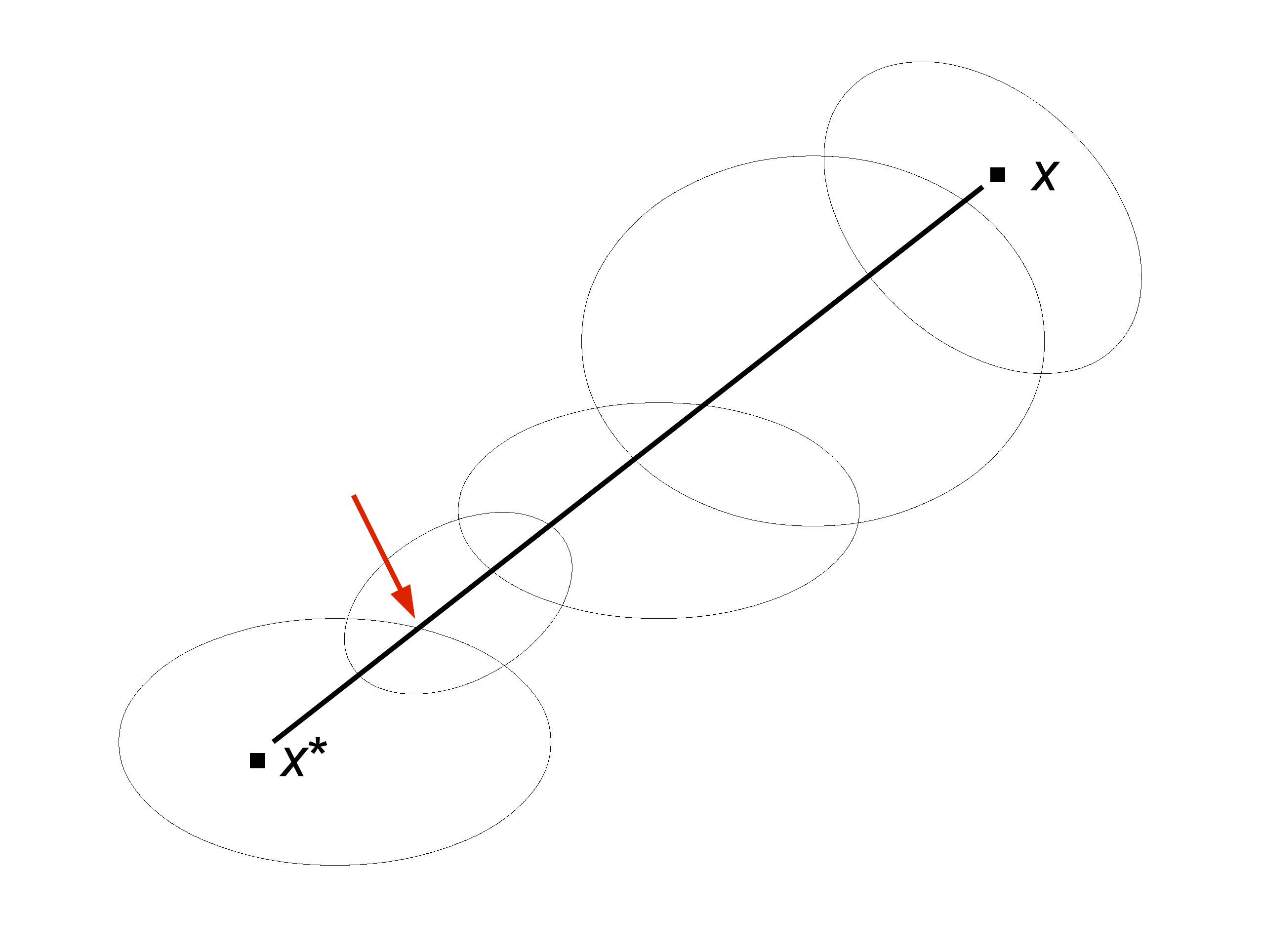}
\caption{\label{saldkjfnalxdkhjfgd}
Used in the proof of Theorem~\ref{laksjdhfsdkjfhdf}.}
\end{figure}

\begin{proof}
By Lemma~\ref{alskjdfjsdkfjdkjf}, it suffices to show that
\begin{align*}
\forall x \in X: (\forall \epsilon \in [0,1]: x^* \cdot c(x_\epsilon) \leq x \cdot c(x_\epsilon)) \vee (\exists \epsilon \in [0,1]: x^* \cdot c(x_\epsilon) < x \cdot c(x_\epsilon)).
\end{align*}
Note that it is trivially true (by the definition of an evolutionarily stable set) that 
\begin{align*}
\forall x \in O(x^*): (\forall \epsilon \in [0,1]: x^* \cdot c(x_\epsilon) \leq x \cdot c(x_\epsilon)) \vee (\exists \epsilon \in [0,1]: x^* \cdot c(x_\epsilon) < x \cdot c(x_\epsilon)).
\end{align*}
Therefore, take any $x \in X - O(x^*)$, and let $x_\epsilon = \epsilon x^* + (1-\epsilon) x$, $\epsilon \in [0,1]$. Let $x_{\epsilon'}$ be the point where $x_\epsilon$ intersects the boundary of $O(x^*)$ (pointed at by the arrow in Figure~\ref{saldkjfnalxdkhjfgd}), and note that to prove our claim we only need to consider the case that, for all $\epsilon \in [\epsilon',1]$, $x^* \cdot c(x_\epsilon) = x_{\epsilon'} \cdot c(x_\epsilon)$. Note further that in this case $x_{\epsilon'} \in X^*$ and consider $O(x_{\epsilon'})$. Let $x_{\epsilon''}$ be the point where $x_\epsilon$ intersects the boundary of $O(x_{\epsilon'})$ (in the direction toward $x$). Observe now that if there exists $\zeta \in [\epsilon', \epsilon'']$ such that $x_{\epsilon'} \cdot c(x_\zeta) < x_{\epsilon''} \cdot c(x_\zeta)$, then there exists $\epsilon \in [\epsilon', \epsilon'']$ such that $x^* \cdot c(x_\epsilon) < x \cdot c(x_\epsilon)$, and that if no such $\zeta$ exists we may similarly consider the boundary point $x_{\epsilon''}$ of $O(x_{\epsilon'})$ noting that then, for all $\zeta \in [\epsilon', \epsilon'']$, $x_{\epsilon'} \cdot c(x_\zeta) = x_{\epsilon''} \cdot c(x_\zeta)$, and, therefore, that $x_{\epsilon''} \in X^*$. Continuing in this way, we either obtain an $\epsilon \in [0,1]$ such that $x^* \cdot c(x_\epsilon) < x \cdot c(x_\epsilon)$ or we have that, for all $\epsilon \in [0,1]$, $x^* \cdot c(x_\epsilon) = x \cdot c(x_\epsilon)$. Since $x$ is arbitrary, the theorem is proven.
\end{proof}

\subsubsection{Linear scalar polyorders}

In the last part of this section, we explore the analogue of the concept of the evolutionarily stable set in scalar fields. To that end, consider the definition of a {\em local minimum} in nonlinear optimization.

\begin{definition}
$x^*$ is a local minimum of $(X, f)$ if there exists a neighborhood $O \subseteq X$ of $x^*$ such that $\forall x \in O: f(x^*) \leq f(x)$.
\end{definition}

We may also analogously define the concept of a local minimum of a scalar polyorder.

\begin{definition}
$x^*$ is a local minimum of $(X, \preceq_f)$ if there exists a neighborhood $O \subseteq X$ of $x^*$ such that $\forall x \in O: x^* \preceq_f x$.
\end{definition}

We have the following theorem.

\begin{theorem}
If $x^*$ is a local minimum of $(X, \preceq_f)$, then $x^*$ is a local minimum of $(X, f)$. 
\end{theorem}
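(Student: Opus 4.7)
The plan is a direct unpacking of definitions, evaluated at the endpoints of the interval $[0,1]$. By hypothesis there is a neighborhood $O \subseteq X$ of $x^*$ with $x^* \preceq_f x$ for every $x \in O$. I would first translate this statement using the definition of a linear scalar polyorder: taking $x_\epsilon = \epsilon x^* + (1-\epsilon) x$ (so that $x_0 = x$ and $x_1 = x^*$), the assertion $x^* \preceq_f x$ is exactly the statement that $f(x_{\epsilon_1}) \geq f(x_{\epsilon_2})$ for all $\epsilon_1 < \epsilon_2$ in $[0,1]$.

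The next step is to instantiate this chain of inequalities at the endpoints $\epsilon_1 = 0$ and $\epsilon_2 = 1$, which collapses directly to $f(x) \geq f(x^*)$. Since $x \in O$ was arbitrary, the same neighborhood $O$ that witnesses $x^*$ being a local minimum of $(X,\preceq_f)$ also witnesses $x^*$ being a local minimum of $(X,f)$ in the classical sense.

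I do not anticipate any real obstacle: the theorem is essentially a tautology saying that the polyorder notion of local minimum is at least as strong as the classical one, because the convex combination path $x_\epsilon$ has $x$ and $x^*$ as its endpoints and $\preceq_f$ forces monotonicity along the entire path. The more interesting (and false in general) direction would be the converse, which the paper implicitly leaves outside the scope of this theorem.
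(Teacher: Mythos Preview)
Your proposal is correct and follows essentially the same argument as the paper: both unpack the definition of $x^* \preceq_f x$ along the path $x_\epsilon = \epsilon x^* + (1-\epsilon)x$ and then instantiate the monotonicity condition at the endpoints $\epsilon_1 = 0$, $\epsilon_2 = 1$ to obtain $f(x) \geq f(x^*)$ on the same neighborhood $O$.
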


\begin{proof}
Since $x^*$ is a local minimum of $(X, \preceq_f)$, for all $x \in O$, we have that, for all $\epsilon_1, \epsilon_2 \in [0,1]$ where $\epsilon_1 < \epsilon_2$, $f(x_{\epsilon_1}) \geq f(x_{\epsilon_2})$, which is certainly true if $\epsilon_1=0$ and $\epsilon_2 = 1$.
\end{proof}

However, the reverse direction is not generally true. For example, consider a scalar field $f:\mathbb{R}^2 \rightarrow \mathbb{R}$ that is obtained by rotating the function in Figure~\ref{saldkjfnalxdkhjfggp} around the $z$-axis (the figure shows the intersection of such a function with the plane $y=0$). It is easy to see in this example that the global minima of $f$ are not local minima of $\preceq_f$. (To see this note that the global minima of $f$ form a circle and take any two points on this circle.)

\begin{figure}[tb]
\centering
\includegraphics[width=12cm]{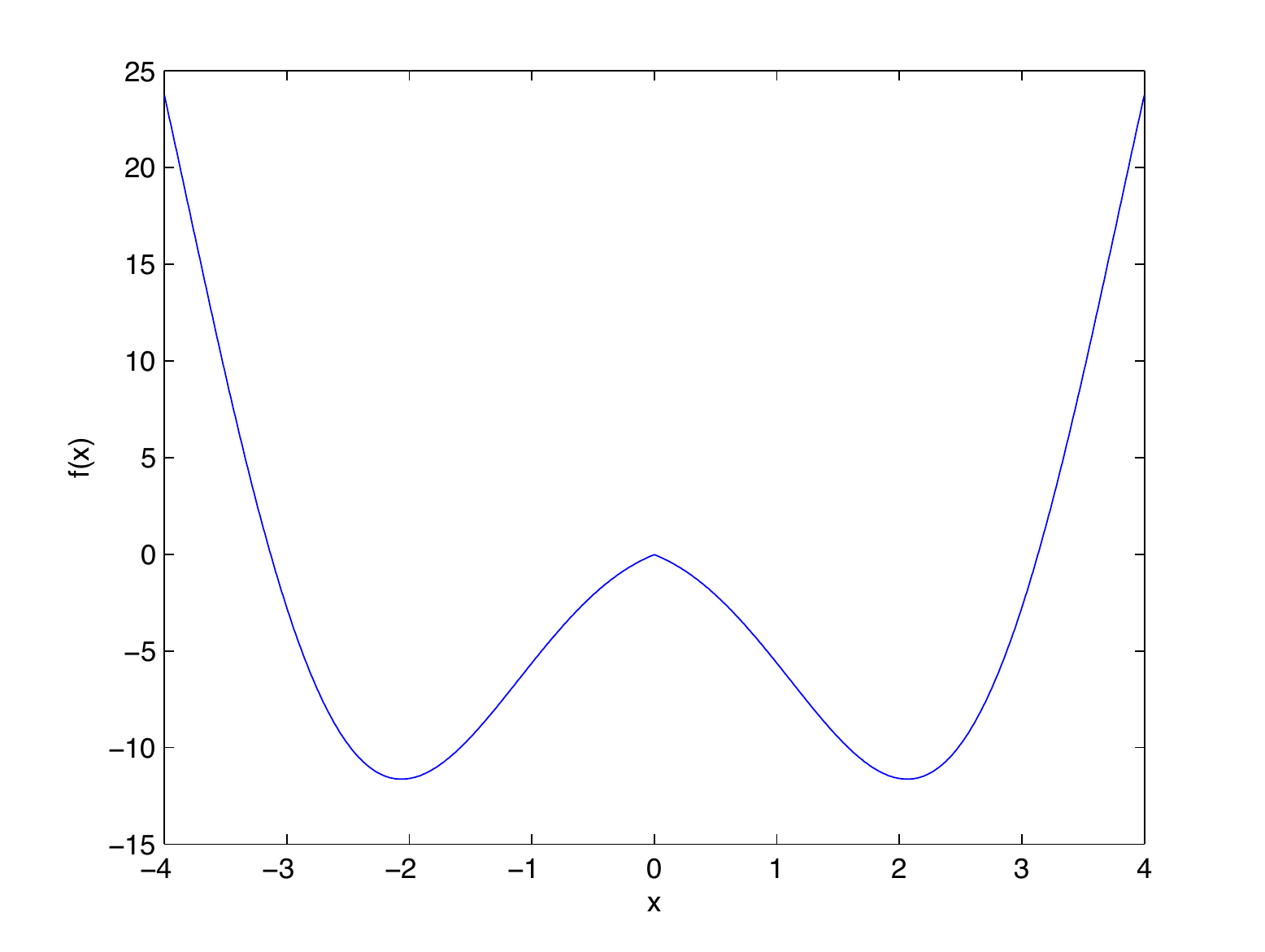}
\caption{\label{saldkjfnalxdkhjfggp}
Rotating this function around the vertical axis (at $x=0$) we obtain a scalar field whose global minima are not local minima of $\preceq_f$.}
\end{figure}

Since the local minima of a scalar field are not necessarily local minima of its corresponding scalar polyorder, we are motivated to strengthen the definition of an evolutionarily stable set in scalar fields by introducing a new concept, namely that of an {\em almost strictly minimal set}.

\begin{definition}
$X^*$ is an almost strictly minimal set of $(X, f)$ if it is nonempty and closed and if for each $x^* \in X^*$ there exists a neighborhood $O(x^*) \subseteq X$ of $x^*$ such that $\forall x \in O(x^*): f(x^*) \leq f(x)$ with strict inequality if $x \not\in X^*$.
\end{definition}

We have the following theorem, which is a stronger form of Theorem~\ref{laksjdhfsdkjfhdf}.

\begin{theorem}
Let $X^*$ be an almost strictly minimal set of $(X, c)$, and let $x^* \in X^*$. Then $x^*$ is $\preceq_f$-minimal.
\end{theorem}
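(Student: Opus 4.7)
The plan is to mirror the proof of Theorem~\ref{laksjdhfsdkjfhdf}, working with Lemma~\ref{alskjdfjsdkfjdkjf}(iii) in place of part~(i). Fix $x \in X$ and set $x_\epsilon = \epsilon x^* + (1-\epsilon) x$ and $g(\epsilon) := f(x_\epsilon)$. I will argue by contradiction that either $g$ is non-increasing on $[0,1]$ or there exist $\epsilon_1 < \epsilon_2$ with $g(\epsilon_1) > g(\epsilon_2)$. If both fail, then $g$ is non-decreasing and non-constant, so $f(x) = g(0) < g(1) = f(x^*)$, which is the configuration I will rule out.

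Without loss of generality, shrink $O(x^*)$ to a convex ball centered at $x^*$ (the defining property of an almost strictly minimal set is inherited by any subset of the original neighborhood). By convexity, $x_\epsilon$ lies in this ball for $\epsilon$ close enough to $1$, so the almost-strict-minimality inequality gives $g(\epsilon) \geq f(x^*) = g(1)$, while $g$ non-decreasing gives $g(\epsilon) \leq g(1)$; hence $g \equiv f(x^*)$ on some left-neighborhood of $1$, and the strict-inequality clause in the definition forces $x_\epsilon \in X^*$ on that subinterval. Define
\begin{align*}
\tau = \inf\{t \in [0,1] : x_\epsilon \in X^* \text{ for every } \epsilon \in [t,1]\};
\end{align*}
the set is nonempty, and closedness of $X^*$ yields $x_\tau \in X^*$ together with $g(\tau) = f(x^*)$.

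If $\tau > 0$, the definition of $\tau$ supplies, for every $\delta > 0$, some $\epsilon \in (\tau - \delta, \tau)$ with $x_\epsilon \notin X^*$; choosing $\delta$ small enough so that continuity places $x_\epsilon$ inside $O(x_\tau)$, the strict-inequality clause at $x_\tau$ gives $g(\epsilon) > f(x_\tau) = g(\tau)$, contradicting $g$ non-decreasing. If $\tau = 0$, the entire segment lies in $X^*$ (the endpoint $x$ by closedness of $X^*$). Then for \emph{every} $\epsilon \in [0,1]$, the bound $g(\epsilon') \geq g(\epsilon)$ for $\epsilon'$ near $\epsilon$ (from the almost-strict-minimality at $x_\epsilon$) combined with $g$ non-decreasing forces $g(\epsilon') = g(\epsilon)$ on a left-neighborhood of $\epsilon$; a standard connectedness argument applied to the closed set $\{\epsilon \in [0,1] : g(\epsilon) = g(1)\}$ (which contains $1$ and admits no infimum in $(0,1]$) then yields $g \equiv g(1)$ on $[0,1]$, contradicting $g(0) < g(1)$. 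Since $x$ was arbitrary, Lemma~\ref{alskjdfjsdkfjdkjf}(iii) delivers $\preceq_f$-minimality of $x^*$.

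The step I expect to be the main obstacle is the $\tau > 0$ case, where the bookkeeping is delicate: just below $\tau$ one must simultaneously secure both $x_\epsilon \in O(x_\tau)$ (from continuity of the path) and $x_\epsilon \notin X^*$ (from the failure at every $t < \tau$ of the ``all $\epsilon \in [t,1]$ lie in $X^*$'' condition), and both must occur at the \emph{same} $\epsilon$. Once this joint choice is in hand, the strict inequality furnished by the definition collides directly with monotonicity, but isolating the right $\epsilon$ requires care with the quantifiers of the infimum and with the continuity radius at $x_\tau$.
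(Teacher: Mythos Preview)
Your proof is correct and follows the same strategy as the paper: both mirror the proof of Theorem~\ref{laksjdhfsdkjfhdf} by chaining the neighborhoods $O(\cdot)$ along the segment from $x^*$ to $x$, with the scalar-field adjustment that one works with values of $f$ rather than the bilinear quantity $(x^*-x)\cdot c(\cdot)$. Your infimum-based formalization via $\tau$ is a cleaner and more rigorous packaging of the paper's informal ``continuing in this way'' iteration; one minor expository point is that the assertion $g(\tau)=f(x^*)$ does not follow from closedness of $X^*$ alone (it requires the local-constancy argument you later give in the $\tau=0$ case), but since the $\tau>0$ branch only uses $x_\tau\in X^*$ this does not affect the argument.
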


\begin{proof}
The proof is analogous to that of Theorem~\ref{laksjdhfsdkjfhdf}, however, it is worth noticing the subtle differences in the beginning of the proof. By Lemma~\ref{alskjdfjsdkfjdkjf}, it suffices to show that
\begin{align*}
\forall x: (\forall \epsilon_1, \epsilon_2 \in [0,1]: \epsilon_1 < \epsilon_2 \Rightarrow f(x_{\epsilon_1}) \geq f(x_{\epsilon_2})) \vee (\exists \epsilon_1, \epsilon_2 \in [0,1] \mbox{ where } \epsilon_1 < \epsilon_2: f(x_{\epsilon_1}) > f(x_{\epsilon_2})).
\end{align*}
It is easy to show that the previous statement is true for all $x \in O(x^*)$. (The negation of the previous statement implies there exists a descent path starting at $x^*$, which contradicts the assumption that $x^*$ is a local minimum.) Therefore, let $x \in X - O(x^*)$, and let $x_\epsilon = \epsilon x^* + (1-\epsilon) x$, $\epsilon \in [0,1]$. Let $x_{\epsilon'}$ be the point where $x_\epsilon$ intersects the boundary of $O(x^*)$, and note that to prove our claim we only need to consider the case that, for all $\epsilon_1, \epsilon_2 \in [\epsilon',1]$ with $\epsilon_1 < \epsilon_2$ we have that $f(x_{\epsilon_1}) \geq f(x_{\epsilon_2})$,  in which case $f(x^*) = f(x_{\epsilon'})$ and, therefore,  $x_{\epsilon'} \in X^*$. From this point on the proof is completely analogous to that of Theorem~\ref{laksjdhfsdkjfhdf}.
\end{proof}

The converse of the previous theorem does not hold in general and, therefore, the concept of almost strict minimality does {\em not} characterize the concept of minimality. The next section is devoted to studying such a counterexample.

\section{Atypical solutions of polytropic optimization}
\label{aslkdjfhsdkjfhsdjf}

In this section, we study an optimization problem whose solution has atypical structure. The problem is that of optimizing $f(x) = x \sin(1/x), x \in \mathbb{R}$ (shown in Figure~\ref{saldkjfnalxdkhjfg}). Note that $f$ is differentiable everywhere except the origin where it is continuous. Note also that the solution of the scalar polyorder of $f$ is different from the solution of its vector polyorder, although both solutions have similar structure. 

\begin{figure}[tb]
\centering
\includegraphics[width=16cm]{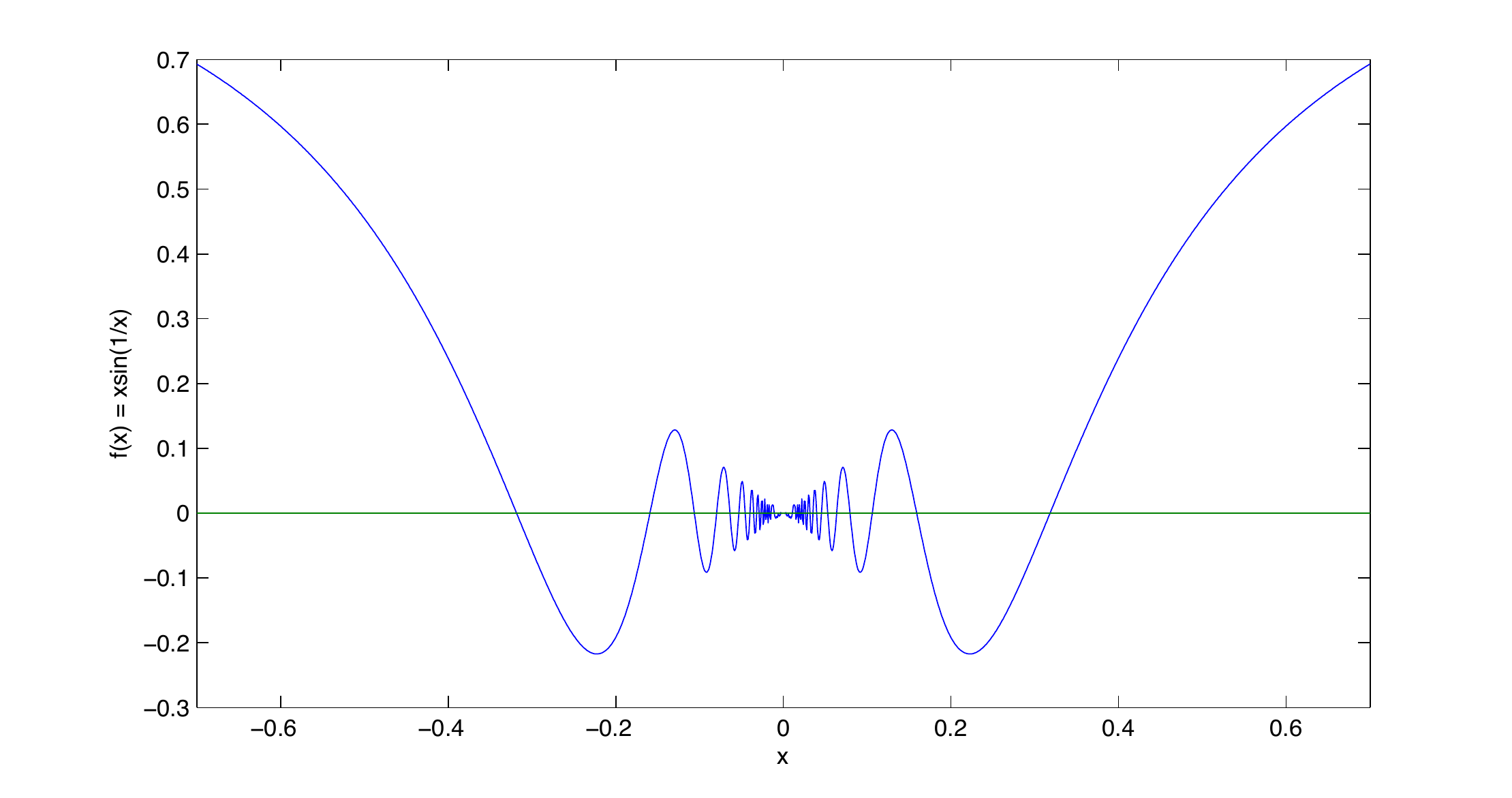}
\caption{\label{saldkjfnalxdkhjfg}
The graph of $f(x) = x \sin(1/x)$.}
\end{figure}

What is peculiar about the problem of optimizing $f$ is that the critical element $x=0$ is {\em both} minimal and maximal (in both polyorders) {\em without} being a local minimum or a local maximum (in either polyorder), however, it satisfies our definition of optimality because there exists neither a descent nor an ascent path on either side. 

In spite of the origin's atypical behavior, the optimal solutions of $f$ have a particularly attractive property once we view them as a {\em set}, namely, they are ``setwise locally dominant'' in that there exists a neighborhood of the set of minimal elements such that every element in this neighborhood is either minimal or it is dominated by a minimal element.

We also consider the dynamical system $dx/dt = - f(x)$. Looked at from a dynamical systems perspective, the origin is similarly peculiar: For one thing, it is neither an isolated point of the set of critical elements, as any neighborhood contains infinitely many critical elements, nor an interior point of that set. What's more, to approach the origin starting at {\em any} $x_0 \neq 0$ a physical system navigating $f$ has to ``reverse'' its evolution rule an infinite number of times, and on the grounds of this observation, we may, therefore, informally say that the origin would practically be ``spaced out'' for most physical systems. However, we show that the set of minimal elements is asymptotically stable as a {\em set}.

\subsection{Optimal solutions of $x\sin(1/x)$}

\begin{lemma}
\label{woeirutyeirtuy}
Let $f : \mathbb{R} \rightarrow \mathbb{R}$ be continuous, let $x^*$ be an isolated critical point of $f$, and suppose $f$ is differentiable in a neighborhood of $x^*$. Then, if $f'(x^*) > 0$, $x^*$ is minimal, whereas if $f'(x^*) < 0$, $x^*$ is maximal.
\end{lemma}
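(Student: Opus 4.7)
The plan is to reduce the minimality claim to the equivalent characterization in Lemma~\ref{alskjdfjsdkfjdkjf}(i) and exploit the first-order behavior of $f$ at $x^*$. Viewing $f$ as a one-dimensional vector field on $\mathbb{R}$, the critical-element condition $(x - x^*) f(x^*) \geq 0$ applied to $x$ on either side of $x^*$ immediately forces $f(x^*) = 0$.

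The key observation is that $f'(x^*) > 0$ together with $f(x^*) = 0$ yields a punctured neighborhood $N \setminus \{x^*\}$ on which $f(y)/(y - x^*) > 0$, i.e., $f(y)$ and $y - x^*$ share sign; this is immediate from the limit definition of the derivative. I would then fix an arbitrary $x \neq x^*$ and consider the chord $x_\epsilon = \epsilon x^* + (1-\epsilon) x$. Because $x_\epsilon \to x^*$ as $\epsilon \to 1^-$, for $\epsilon$ sufficiently close to (but less than) $1$ the point $x_\epsilon$ lies in $N \setminus \{x^*\}$; and since $x_\epsilon - x^* = (1-\epsilon)(x - x^*)$ carries the same sign as $x - x^*$, we get $f(x_\epsilon)(x - x^*) > 0$, equivalently $x^* \cdot c(x_\epsilon) < x \cdot c(x_\epsilon)$ with $c = f$. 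This supplies the second disjunct in the characterization of $\preceq_c$-minimality given by Lemma~\ref{alskjdfjsdkfjdkjf}(i) for every $x \neq x^*$, so $x^*$ is $\preceq_c$-minimal.

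For the second claim, when $f'(x^*) < 0$, I would appeal to Proposition~\ref{oeritueorituut}: $x^*$ is also a critical element of $(\mathbb{R}, -f)$ with $(-f)'(x^*) > 0$, so by what was just shown $x^*$ is $\preceq_{-f}$-minimal, which by that proposition is equivalent to being $\preceq_f$-maximal.

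There is no genuine obstacle here; the argument is purely local. The chord $x_\epsilon$ may wander far from $x^*$ when $\epsilon$ is small and $x$ is distant, but the minimality criterion only demands a single value of $\epsilon$ where the strict inequality holds, and one is supplied by continuity of the chord together with the local sign structure of $f$. The isolatedness hypothesis is in fact automatic under $f'(x^*) \neq 0$, which forces $f$ to be nonzero throughout a punctured neighborhood of $x^*$.
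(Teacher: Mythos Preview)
Your proof is correct and follows essentially the same approach as the paper's: both reduce to showing that for each $x \neq x^*$ there exists $\epsilon$ with $(x - x^*) f(x_\epsilon) > 0$, then use the first-order behavior of $f$ at $x^*$ (where $f(x^*)=0$ and $f'(x^*)>0$) to force $f(x_\epsilon)$ and $x_\epsilon - x^*$ to share sign once $x_\epsilon$ is close enough to $x^*$. The only cosmetic differences are that the paper phrases the local step as a Taylor expansion rather than the limit definition of the derivative, and handles the case $f'(x^*)<0$ by saying ``analogous'' whereas you route it through Proposition~\ref{oeritueorituut}.
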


\begin{proof}
Suppose $f'(x^*) > 0$. It suffices to show, for all $x \in \mathbb{R} - \{x^*\}$, there exists $\epsilon \in [0,1]$ such that $x^* f(x_\epsilon) < x f(x_\epsilon)$. Expanding $f$ around $x^*$ we have that $f(x_\epsilon) = f'(x^*) (x_\epsilon-x^*) + o(|x_\epsilon-x^*|^2)$ and, therefore, that $(x^* - x) f(x_\epsilon) = (1/1-\epsilon) (x^* - x_\epsilon) f(x_\epsilon) \approx - (1/1-\epsilon) (x^*-x_\epsilon)^2 f'(x^*) < 0$ in a neighborhood of $x^*$. The proof that $x^*$ is maximal if $f'(x^*) < 0$ is analogous.
\end{proof}

\begin{theorem}
Viewing $f(x) = x\sin(1/x)$ as a vector field, the set of its minimal elements is $$X^*_{\min} = \{1/n \pi | n = 1,3,5, \ldots \} \cup \{ 0 \} \cup \{1/n \pi | n = -2,-4,-6, \ldots \},$$ and the set of its maximal elements is  $$X^*_{\max} = \{1/n \pi | n = 2,4,6, \ldots \} \cup \{ 0 \} \cup \{1/n \pi | n = -1,-3,-5, \ldots \}.$$
\end{theorem}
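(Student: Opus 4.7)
The plan is to combine Theorem~\ref{lksajdfhskdjfhd}, which pins every minimal or maximal element to the set of critical elements, with Lemma~\ref{woeirutyeirtuy}, which classifies each isolated nonzero critical point by the sign of $f'$, and then to treat the origin as a separate, genuinely non-routine case because it is neither isolated among the critical set nor a point of differentiability of $f$.

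First I would identify the critical elements. Since $X=\mathbb{R}$ has no boundary, criticality reduces to $f(x^*)=0$, so the critical set consists of the origin together with the points $1/(n\pi)$, $n\in\mathbb{Z}\setminus\{0\}$. By Theorem~\ref{lksajdfhskdjfhd} every $\preceq_c$-minimal (resp.\ $\preceq_c$-maximal, via Proposition~\ref{oeritueorituut}) element lies in this set, so it suffices to classify each candidate.

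Second I would handle the nonzero critical points. Each $1/(n\pi)$ is isolated from the rest of the critical set and $f$ is differentiable at it; a direct computation gives $f'(x)=\sin(1/x)-(1/x)\cos(1/x)$, hence
\begin{align*}
f'(1/(n\pi))=-n\pi(-1)^n = (-1)^{n+1}n\pi.
\end{align*}
This is strictly positive precisely when $n$ is a positive odd integer or a negative even integer, and strictly negative precisely when $n$ is a positive even integer or a negative odd integer. Lemma~\ref{woeirutyeirtuy} then classifies the former as minimal and the latter as maximal, which already accounts for all of $X^*_{\min}\setminus\{0\}$ and $X^*_{\max}\setminus\{0\}$.

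The main obstacle is the origin, which I would treat by an oscillation argument. Unwinding definitions, $x\preceq_c 0$ is equivalent to $x\,f(\epsilon x)\le 0$ for every $\epsilon\in[0,1]$, while $0\preceq_c x$ is equivalent to $x\,f((1-\epsilon)x)\ge 0$ for every $\epsilon\in[0,1]$. For any $x\neq 0$ and any $t\in(0,1]$ one has $x\,f(tx)=t x^2\sin(1/(tx))$, and as $t\to 0^+$ the argument $1/(tx)$ diverges in absolute value, so $\sin(1/(tx))$ assumes both signs on every neighborhood of $0$. Hence for every $x\neq 0$ the quantity $x\,f(tx)$ fails to be everywhere $\le 0$ and also fails to be everywhere $\ge 0$, so both $x\preceq_c 0$ and $0\preceq_c x$ fail; a fortiori, neither $x\prec_c 0$ nor $0\prec_c x$ holds. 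Therefore the origin is simultaneously $\preceq_c$-minimal and $\preceq_c$-maximal, completing the classification. The delicate point in this last step is that the usual differentiability-based lemma is unavailable at $0$ and the standard intuition of local strict behavior does not apply; the oscillation of $\sin(1/x)$ is precisely what produces the \emph{incomparability} (in either direction) of $0$ with every other point, which is what lets $0$ sit in both $X^*_{\min}$ and $X^*_{\max}$.
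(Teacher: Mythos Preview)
Your proposal is correct and follows essentially the same route as the paper: restrict to the zero set of $f$, apply Lemma~\ref{woeirutyeirtuy} at the isolated nonzero zeros via the sign of $f'(1/(n\pi))=(-1)^{n+1}n\pi$, and handle the origin separately through the oscillation of $\sin(1/x)$. Your treatment of the origin is in fact slightly more explicit than the paper's---you show that both $x\preceq_c 0$ and $0\preceq_c x$ fail for every $x\neq 0$, which is marginally stronger than the paper's verification of the second disjunct in Lemma~\ref{alskjdfjsdkfjdkjf}, but the underlying computation is identical. One small omission, shared with the paper, is that Lemma~\ref{woeirutyeirtuy} only asserts that $f'(x^*)>0$ implies minimality (and $f'(x^*)<0$ implies maximality); to get the \emph{exact} sets you also need that $f'(x^*)>0$ precludes maximality, which follows by the same local sign argument (take $x$ just to the right of $x^*$ and observe $x^*\prec_c x$).
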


\begin{proof}
The critical points of $f$ are $\{ 0 \} \cup \{1/n \pi | n = \pm 1, \pm 2, \pm 3, \ldots \}$. Note now that every critical point except the origin is an isolated point and that $f$ is differentiable everywhere except at $0$. Therefore, we may use Lemma~\ref{woeirutyeirtuy} to characterize all critical points except the origin. To show that the elements of $X^*_{\min} - \{ 0 \}$ are minimal and that the elements of $X^*_{\max} - \{ 0 \}$ are maximal is then a matter of simple calculus. Consider now the origin. To show that the origin is both minimal and maximal it suffices to show that, for all $x \in \mathbb{R} - \{0\}$, there exists $\epsilon \in [0,1]$ such that $0 < x f(x_\epsilon) \Leftrightarrow \sin(1/x_\epsilon) > 0$ and $\epsilon' \in [0,1]$ such that $0 > x f(x_{\epsilon'}) \Leftrightarrow \sin( 1/x_{\epsilon'}) < 0$. Both properties follow by elementary properties of the sinusoidal function.
\end{proof}

\subsection{Minimal elements are setwise locally dominant}

\begin{proposition}
Every non-minimal element of the interval $[-1/\pi, + \infty)$ is dominated by a minimal element.
\end{proposition}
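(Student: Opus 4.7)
My plan is to construct, for each non-minimal $x \in [-1/\pi, +\infty)$, a minimal $y$ with $y \prec_c x$ by picking an appropriate nearby critical point of $f$. The first step is to record the sign structure of $f(x) = x\sin(1/x)$: since the zeros of $f$ are exactly $0$ and $1/(n\pi)$ for $n \in \mathbb{Z}\setminus\{0\}$, the sign of $f$ is constant on each open interval between consecutive zeros. A direct check of $\sin(1/x)$ on each such interval (using also that $f$ is even) shows that $f > 0$ on $(1/\pi, \infty)$, on $(1/((2k{+}1)\pi), 1/(2k\pi))$ for $k \geq 1$, and on $(-1/(2k\pi), -1/((2k{+}1)\pi))$ for $k \geq 1$, while $f < 0$ on the remaining bounded intervals. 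Combining this with the preceding theorem, each bounded sign-interval is bounded by one minimal and one maximal critical point, with the minimal endpoint lying on the left when $f > 0$ on the interval and on the right when $f < 0$.

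Given a non-minimal $x \in [-1/\pi, +\infty)$, I would then define $y$ by cases: if $x$ lies in an open sign-interval on which $f > 0$, or if $x > 1/\pi$, let $y$ be the minimal left endpoint of that interval; if $x$ lies in an open sign-interval on which $f < 0$, or if $x = -1/\pi$, let $y$ be the minimal right endpoint; and if $x$ is itself a maximal critical point, either adjacent minimal critical point works (for the boundary case $x = -1/\pi$ only the right neighbor $-1/(2\pi)$ is available, but it lies in the domain and suffices). The verification that $y \prec_c x$ reduces in the one-dimensional setting to checking that $(x-y)\, f(x_\epsilon) \geq 0$ along the segment $x_\epsilon = \epsilon y + (1-\epsilon) x$, with strict inequality at some $\epsilon$; both conditions follow immediately from the fact that $y$ was chosen so that $(x-y)$ and $f$ have matching sign on the closed segment from $x$ to $y$, and $f$ is strictly nonzero on the open interior of the segment.

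The argument is essentially routine once the sign pattern is in place, so I do not expect any serious obstacle beyond careful bookkeeping of parities. The one case that warrants explicit attention is the boundary point $x = -1/\pi$, which is a maximal critical point with only one eligible neighbor in the domain; here one checks directly that $f \leq 0$ on $[-1/\pi, -1/(2\pi)]$ with strict inequality on the interior, so $y = -1/(2\pi)$ (which is minimal) dominates $x$. Accumulation of critical points toward $0$ causes no trouble, since every non-critical $x \neq 0$ in the domain still lies in exactly one of the bounded sign-intervals listed above, and $0$ itself is minimal and hence excluded from the hypothesis.
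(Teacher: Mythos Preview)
Your proof is correct and follows essentially the same approach as the paper: partition the domain by the zeros of $f$ and show that each non-minimal point is dominated by the adjacent minimal critical point on the side determined by the sign of $f$. You are in fact more careful than the paper about verifying the weak-dominance inequality along the entire segment (the paper only checks it at the single endpoint $x$) and about treating the maximal critical points and the boundary case $x=-1/\pi$ explicitly, but the underlying argument is the same.
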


\begin{proof}
First we show that the ``rightmost'' minimal element, that is, minimal element $x^* = 1/\pi$, dominates every element in the interval $(1/\pi, +\infty)$. Letting $x \in (1/\pi, +\infty)$, it suffices to show that $(x^* - x) f(x) < 0$, which holds since $x^* < x$ and $f(x) > 0$. Next we show that the same minimal element dominates every element in the interval $(1/2\pi, 1/\pi)$. Letting $x \in (1/2\pi, 1/\pi)$, it suffices to show that $(x^* - x) f(x) < 0$, which holds since $x^* > x$ and $f(x) < 0$. Now we show that the minimal element $x^* = 1 / (2k+1) \pi, k =1,2,3,\ldots$ dominates every element in the intervals $(1 / (2k+1) \pi, 1/2k\pi)$ and $(1 / 2 (k+1) \pi, 1 / (2k+1) \pi)$. Again it is suffices to show that $(x^* - x) f(x) < 0$ where $x$ is any element in one of those intervals. If $x \in (1 / (2k+1) \pi, 1/2k\pi)$, then $x^* < x$ and $f(x) > 0$ (to see this note that since $x^*$ is minimal $f'(x^*) > 0$) whereas if $x \in (1 / 2 (k+1) \pi, 1 / (2k+1) \pi)$ then $x^* > x$ and $f(x) < 0$. Therefore, every element on the right halfline is either minimal or it is dominated by a minimal element. The proof for the left halfline is similar, noting that $x^* = -1/\pi$ is maximal, and, therefore, it is dominated by every element in the interval $(-\infty, -1/\pi)$.
\end{proof}

\subsection{Minimal elements are setwise asymptotically stable}

\subsubsection{Preliminaries}

Following~\cite{Hirsch-Smale}, consider the differential equation 
\begin{align}
\dot{x} = F(x)\label{alskdjfhsdkjfhd} 
\end{align}
where $F : X \subseteq \mathbb{R}^m \rightarrow \mathbb{R}^m$ and $X$ is open in $\mathbb{R}^m$. Let $x^*$ be an equilibrium of this equation (that is, an element of $X$ such that $F(x^*) = 0$). We call $x^*$ a {\em stable} equilibrium if for every neighborhood $O \subseteq X$ of $x^*$ there is a neighborhood $O' \subseteq X$ of $x^*$ such that every solution $x(t)$ with $x(0) \in O$ is defined and is in $O'$ for all $t > 0$. If $O'$ can be chosen so that in addition to the previous property, $\lim_{t \rightarrow +\infty} x(t) = x^*$, then $x^*$ is {\em asymptotically stable}.

\begin{theorem}[See~\cite{Hirsch-Smale}]
Let $x^*$ be an equilibrium of~\eqref{alskdjfhsdkjfhd}. Let $L: O \rightarrow \mathbb{R}$ be continuous function defined on a neighborhood $O \subseteq X$ of $x^*$, differentiable on $O - x^*$, such that $L(x^*) = 0$, $L(x) > 0$ if $x \neq x^*$, and $\dot{L} \leq 0$ in $O - x^*$. Then $x^*$ is stable. If also $\dot{L} < 0$ in $O - x^*$, then $x^*$ is asymptotically stable.
\end{theorem}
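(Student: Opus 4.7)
The plan is to follow the classical Lyapunov recipe, proving stability by a sublevel-set trapping argument and then bootstrapping to asymptotic stability using the strict decrease of $L$ together with a compactness argument.

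For stability, I would first fix an arbitrary neighborhood $U$ of $x^*$ and pick $r > 0$ small enough that the closed ball $\overline{B_r(x^*)}$ lies inside $U \cap O$. Compactness of the sphere $\partial B_r(x^*)$ together with the hypothesis $L > 0$ on $O - \{x^*\}$ then gives a positive minimum $\alpha := \min_{x \in \partial B_r(x^*)} L(x) > 0$. Continuity of $L$ at $x^*$ combined with $L(x^*) = 0$ yields a neighborhood $O' \subseteq B_r(x^*)$ of $x^*$ on which $L < \alpha$. For any trajectory with $x(0) \in O'$, the chain rule and the assumption $\dot L \leq 0$ make $t \mapsto L(x(t))$ nonincreasing, so $L(x(t)) < \alpha$ on the entire interval of existence; since $L \geq \alpha$ on $\partial B_r(x^*)$, the trajectory cannot reach that sphere and therefore remains in $B_r(x^*)$. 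Standard continuation of ODEs then guarantees the solution exists for all $t \geq 0$ and stays inside $U$, which is exactly stability.

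For asymptotic stability, I would keep the same $O'$ and now use $\dot L < 0$ on $O - \{x^*\}$. The function $L(x(t))$ is monotone nonincreasing and bounded below by $0$, so it converges to some $L_\infty \geq 0$. I would argue $L_\infty = 0$ by contradiction: if $L_\infty > 0$, the trajectory stays in the compact set $K := \{x \in \overline{B_r(x^*)} : L(x) \geq L_\infty\}$, which excludes $x^*$, so the continuous quantity $\dot L$ attains a strictly negative maximum $-\delta$ on $K$. Integrating gives $L(x(t)) \leq L(x(0)) - \delta t$, which eventually falls below $0$, contradicting $L \geq 0$. Hence $L(x(t)) \to 0$, and a standard sequential-compactness argument upgrades this to $x(t) \to x^*$: any subsequence escaping a small ball around $x^*$ would, by compactness of $\overline{B_r(x^*)}$, have a limit point $y \neq x^*$ with $L(y) = 0$, violating positive definiteness of $L$.

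The main obstacle I anticipate is the final implication $L(x(t)) \to 0 \Rightarrow x(t) \to x^*$, since positive definiteness alone is only a pointwise statement; the trapping inside $B_r(x^*)$ established in the stability step is precisely what supplies the compactness needed to close this gap. A secondary technical subtlety is ensuring $\dot L$ is continuous on $K$ so that it attains its maximum there — this is immediate from differentiability of $L$ on $O - \{x^*\}$ together with continuity of $F$, which is implicit in the standard setup of \eqref{alskdjfhsdkjfhd}.
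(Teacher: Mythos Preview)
The paper does not supply its own proof of this statement: it is quoted as a background result from \cite{Hirsch-Smale} in the preliminaries subsection, with no accompanying argument. Your proposal is the standard Lyapunov proof one finds in that reference---sublevel-set trapping for stability, then a strict-decrease/compactness argument to rule out $L_\infty > 0$---and it is correct. Since there is no in-paper proof to compare against, there is nothing further to contrast.
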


Following~\cite{PopulationGames}, let $X^* \subseteq X$ be a closed set, and call $O \subseteq X$ a neighborhood of $X^*$ if it open relative to $X$ and contains $X^*$. Call $X^*$ {\em Lyapunov stable} under~\eqref{alskdjfhsdkjfhd} if for every neighborhood $O$ of $X^*$ there exists a neighborhood $O'$ of $X^*$ such that every solution $\{x_t\}$ that start in $O'$ is contained in $O$: that is, $x_0 \in O'$ implies that $x_t \in O$ for all $t \geq 0$. $X^*$ is {\em attracting} if there is a neighborhood $\mathcal{O}$ of $X^*$ such that every solution that starts in $\mathcal{O}$ converges to $X^*$: that is, $x_0 \in \mathcal{O}$ implies that $\omega(\{x_t\}) \subseteq X^*$. $X^*$ is globally attracting if it is attracting with $\mathcal{O} = X$. Finally, the set $X^*$ is {\em asymptotically stable} if it is Lyapunov stable and attracting, and it is {\em globally asymptotically stable} if it is Lyapynov stable and globally attracting.

\begin{theorem}[See~\cite{PopulationGames}]
Let $X^* \subseteq X$ be closed, and let $\mathcal{O} \subseteq X$ be a neighborhood of $X^*$. Let $L: \mathcal{O} \rightarrow \mathbb{R}_+$ be $C^1$ with $L^{-1}(0) = X^*$. If $\dot{L}(x) \equiv \nabla L(x)' F(x) < 0$ for all $\mathcal{O} - X^*$, the $X^*$ is asymptotically stable under \eqref{alskdjfhsdkjfhd}. If in addition $\mathcal{O} = X$, then $X^*$ is globally asymptotically stable under \eqref{alskdjfhsdkjfhd}.
\end{theorem}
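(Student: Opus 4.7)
The plan is to adapt the classical Lyapunov argument for a single equilibrium to the case of a closed invariant target set $X^*$. The proof naturally splits into two parts: (i) Lyapunov stability of $X^*$, and (ii) attraction; the global clause then follows by running the same attraction argument with $\mathcal{O}=X$. The workhorse for both parts is the chain rule identity $\tfrac{d}{dt} L(x(t)) = \nabla L(x(t))' F(x(t)) = \dot{L}(x(t))$, which is strictly negative whenever $x(t) \in \mathcal{O} - X^*$, so $L$ strictly decreases along any solution that has not yet reached $X^*$.

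For Lyapunov stability, I would fix an arbitrary open neighborhood $U$ of $X^*$, shrinking it so that $\overline{U} \subseteq \mathcal{O}$ with $\overline{U}$ compact (possible because $\mathbb{R}^m$ is locally compact). Set $\alpha = \inf_{x \in \partial U} L(x)$. Since $L$ is continuous and nonnegative, $L^{-1}(0) = X^*$ is disjoint from the compact set $\partial U$, and $L$ attains its infimum on $\partial U$, we obtain $\alpha > 0$. Define $V = \{x \in U : L(x) < \alpha\}$; this is an open neighborhood of $X^*$ because $L \equiv 0$ on $X^*$. For any solution starting in $V$, monotonicity of $L$ forces $L(x(t)) < \alpha$ for all $t \geq 0$, so the trajectory cannot cross $\partial U$ (where $L \geq \alpha$), and it remains in $U$ forever. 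This is exactly the Lyapunov stability clause.

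For attraction, start in the positively invariant neighborhood $V$ constructed above. Then $t \mapsto L(x(t))$ is nonincreasing and bounded below by $0$, hence converges to some $\ell \geq 0$. Suppose for contradiction that $\ell > 0$. Then $x(t)$ remains in the set $K = \{x \in \overline{V} : L(x) \geq \ell\}$, which is compact and disjoint from $X^*$. On $K$, $\dot{L}$ is continuous and strictly negative, so attains a negative maximum, i.e., $\dot{L}(x) \leq -\delta$ on $K$ for some $\delta > 0$. Integrating along the trajectory yields $L(x(t)) \leq L(x(0)) - \delta t \to -\infty$, contradicting $L \geq 0$. Hence $\ell = 0$, and continuity together with $L^{-1}(0) = X^*$ force every accumulation point of the trajectory to lie in $X^*$, i.e., $\omega(\{x_t\}) \subseteq X^*$. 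Replacing $\mathcal{O}$ by $X$ throughout (and noting the above argument never used more than that the trajectory stayed in the domain of $L$) gives the global version.

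The main obstacle is the compactness bookkeeping that underwrites both $\alpha > 0$ and $\dot{L} \leq -\delta$ on $K$: both rest on $\partial U$ and $K$ being compact and disjoint from $X^*$. If $X^*$ were not contained in a relatively compact neighborhood within $\mathcal{O}$, $L$ could a priori decay toward $0$ along sequences approaching the boundary of $\mathcal{O}$ or escaping to infinity, and either step could fail. The local compactness of $\mathbb{R}^m$ lets us always pre-shrink $U$ to obtain $\overline{U} \subseteq \mathcal{O}$ compact; in the population games setting of~\cite{PopulationGames} the ambient $X$ is itself compact, so this subtlety never arises.
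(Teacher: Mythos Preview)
The paper does not prove this theorem at all: it is stated in the ``Preliminaries'' subsection with the attribution ``See~\cite{PopulationGames}'' and is used as a black-box tool in the proof of Theorem~\ref{alskdjfhskdjfh}. So there is no paper proof to compare your argument against.

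That said, your argument is the standard set-valued Lyapunov proof and is essentially correct in the setting that the paper actually cares about. You have already put your finger on the one genuine soft spot: the step ``shrink $U$ so that $\overline{U}\subseteq\mathcal{O}$ with $\overline{U}$ compact'' requires $X^*$ to admit a relatively compact neighborhood, which fails if $X^*$ is unbounded in an unbounded $X$. As you note, in~\cite{PopulationGames} the ambient space $X$ is a product of simplices and hence compact, so the difficulty evaporates; the theorem as stated there implicitly lives in that compact setting, and your proof goes through cleanly. If you wanted the statement at the generality literally written here (arbitrary closed $X^*$ in an open $X\subseteq\mathbb{R}^m$), you would need either an additional properness hypothesis on $L$ or a localization argument that avoids global compactness of $\overline{U}$, but that is beyond what the paper requires.
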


\subsubsection{Minimal elements are setwise asymptotically stable}

\begin{theorem}
\label{alskdjfhskdjfh}
$X^*_{\min}$, the set of minimal elements of $f(x) = x \sin(1/x), x \in \mathbb{R}$, is asymptotically stable under $\dot{x} = - x \sin(1/x)$.
\end{theorem}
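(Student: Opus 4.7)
The plan is to combine one-dimensional phase-line analysis with the Lyapunov machinery recalled in the preliminaries. The equilibria of $\dot{x}=-f(x)$ are $0$ together with the isolated points $1/n\pi$, $n\in\mathbb{Z}\setminus\{0\}$, and since $f'(1/n\pi)=n\pi(-1)^{n+1}$, Lemma~\ref{woeirutyeirtuy} identifies the nonzero elements of $X^*_{\min}$ as exactly the linearly stable equilibria ($f'>0$) and the nonzero elements of $X^*_{\max}$ as the linearly unstable ones ($f'<0$). On each open interval between consecutive zeros of $f$, the sign of $f$, and therefore of $\dot x$, is constant, so every trajectory is strictly monotone and bounded, hence converges to an equilibrium; by monotonicity this equilibrium is the unique stable one inside the interval, which belongs to $X^*_{\min}$. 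The closed intervals $J_k^+=[1/(2k+2)\pi,\,1/(2k)\pi]$ and their mirror images on the negative axis are positively invariant because their endpoints are equilibria, and their diameters tend to $0$ as $k\to\infty$.

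For Lyapunov stability of $X^*_{\min}$, given a neighborhood $O$ of $X^*_{\min}$, I would pick $K$ so large that $J_k^\pm\subset O$ for all $k\geq K$, contain the trajectories near $0$ inside these shrinking invariant intervals, and handle each of the finitely many remaining stable equilibria $1/(2k+1)\pi$ and $-1/(2k)\pi$ with $k<K$ via the equilibrium Lyapunov theorem recalled in the preliminaries applied locally with the quadratic $L(x)=(x-x^*)^2$. Gluing the local neighborhoods together would then produce the required smaller neighborhood $O'\subset O$. For attraction, the natural candidate is the set-wise Lyapunov theorem applied to $L(x)=F(x)-F(\phi(x))$, where $F$ is a primitive of $f$ on the interval between consecutive zeros that contains $x$ and $\phi(x)$ is the stable equilibrium inside that interval; one checks that $L\geq 0$, that $L^{-1}(0)=X^*_{\min}$ locally, and that $\dot L=-f(x)^2\leq 0$ with equality only at equilibria.

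The main obstacle is that the strict inequality $\dot L<0$ on $\mathcal O\setminus X^*_{\min}$ required by the set-wise Lyapunov theorem fails at the unstable equilibria $1/(2k)\pi\in X^*_{\max}$, which accumulate at $0\in X^*_{\min}$ and therefore sit inside every neighborhood of the minimal set; a solution initialized at such a point is constant and never enters $X^*_{\min}$. My plan is to sidestep the set-wise Lyapunov theorem for the attraction half and close it directly from the monotonicity of the first paragraph: every non-stationary trajectory inside a $J_k^\pm$ is strictly monotone and its only possible $\omega$-limit point is the stable equilibrium in the interval, which lies in $X^*_{\min}$. The delicate step is reconciling this with the literal definition of attraction in the preliminaries, since the constant orbits at the isolated unstable equilibria inside every neighborhood of $X^*_{\min}$ appear to violate $\omega(\{x_t\})\subseteq X^*_{\min}$; a mild loosening of the definition, or a more refined choice of neighborhood that treats these stationary orbits as exceptional, seems unavoidable and is, in my view, the genuinely subtle point of the theorem.
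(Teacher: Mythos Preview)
Your approach mirrors the paper's almost exactly: the paper proves a lemma showing that each nonzero element of $X^*_{\min}$ is globally asymptotically stable on the open interval between the two adjacent unstable equilibria, using the Lyapunov function $L_{x^*}(x)=\int_{x^*}^{x}f(y)\,dy$ with $\dot L=-f^2$, precisely your $F(x)-F(\phi(x))$. The paper then finishes in two lines, asserting that since every solution starting in $(-1/\pi,+\infty)$ remains in this set, $X^*_{\min}$ is Lyapunov stable, and that every such solution converges to $X^*_{\min}$.

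Your proposal is in fact more careful than the paper's proof. The detailed Lyapunov-stability argument you sketch (truncate at some $K$, use the shrinking invariant intervals $J_k^\pm$ near the origin, and apply the pointwise Lyapunov theorem at the finitely many remaining stable equilibria) supplies content that the paper's one-sentence invariance claim does not. More importantly, the obstacle you flag in the last paragraph---that the unstable equilibria $1/(2k)\pi$ accumulate at $0\in X^*_{\min}$, hence lie inside every neighborhood of $X^*_{\min}$, and the constant orbits at those points have $\omega$-limit sets disjoint from $X^*_{\min}$---is a genuine issue with the literal definition of attraction quoted in the preliminaries, and the paper's proof simply does not address it. So you have not missed an idea that the paper supplies; rather, you have identified a subtlety the paper's terse argument glosses over.
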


In the proof of this theorem we are going to need the following lemma.

\begin{lemma}
\label{eprotiyurtoyiu}
Let $\dot{x} = - f(x)$. Then we have:\\
(i) $1/\pi$ is globally asymptotically stable in $(1/2\pi, +\infty)$.\\
(ii) $1/(2k+1)\pi, k = 1, 2, \ldots$ is globally asymptotically stable in $(1/2k\pi, 1/2(k+1)\pi)$.\\
(iii) $-1/2k\pi, k = 1, 2, \ldots$ is globally asymptotically stable in $(-1/(2k-1)\pi, -1/(2k+1)\pi)$.
\end{lemma}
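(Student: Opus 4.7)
The plan is to prove each of the three parts by the same Lyapunov-function argument applied to the candidate $L(x) = (x-x^*)^2$, where $x^*$ is the prescribed equilibrium in each case. This is a one-dimensional ODE, so once the sign of $\dot{L}$ is pinned down on the two sides of $x^*$ within the stated interval, asymptotic stability (and the global asymptotic stability within the given open interval) will follow from the two theorems cited in the preliminaries.

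First I would define, for each part, $L(x) = (x-x^*)^2$, which is $C^1$, nonnegative, and vanishes only at $x^*$. Differentiating along $\dot{x} = -f(x) = -x\sin(1/x)$ gives
\begin{align*}
\dot{L}(x) \;=\; -2\,(x-x^*)\,x\,\sin(1/x).
\end{align*}
The main task is then to verify $\dot{L}(x) < 0$ on each side of $x^*$ inside the stated open interval. For part (i), with $x^* = 1/\pi$, I would separately examine $x \in (1/\pi, +\infty)$, where $1/x \in (0,\pi)$ forces $\sin(1/x) > 0$ while $x-x^* > 0$ and $x > 0$, and $x \in (1/2\pi, 1/\pi)$, where $1/x \in (\pi, 2\pi)$ forces $\sin(1/x) < 0$ while $x-x^* < 0$ and $x > 0$; in both cases the three factors of $\dot{L}$ combine to give a strict negative. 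For parts (ii) and (iii) I would carry out the analogous sign analysis on the intervals $(1/(2k+1)\pi, 1/2k\pi)$, $(1/2(k+1)\pi, 1/(2k+1)\pi)$, $(x^*, -1/(2k+1)\pi)$, and $(-1/(2k-1)\pi, x^*)$, using that $\sin(1/x)$ alternates sign as $1/x$ crosses the integer multiples of $\pi$ that are precisely the boundary endpoints, and that $x$ keeps a definite sign inside each interval.

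Next, to upgrade pointwise stability to \emph{global} asymptotic stability within the specified open interval, I would note that in each case the interval's boundary points are themselves critical points of $f$ (they are the equilibria at which the sign of $f$ flips), so the sign analysis above shows that $\dot{x}$ points strictly inward on both sides of $x^*$ near the boundary. Hence the interval is forward invariant, trajectories are bounded and monotone toward $x^*$ (since $x^*$ is the unique equilibrium inside each of these three intervals — by inspection of the critical set $\{0\}\cup\{1/n\pi : n \in \mathbb{Z}\setminus\{0\}\}$), and $L$ decreases strictly off $x^*$. Applying the second Lyapunov-function theorem stated in the preliminaries (with $\mathcal{O}$ equal to the whole prescribed interval and $L^{-1}(0) = \{x^*\}$) then delivers the global asymptotic stability claim.

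The only subtlety I foresee — and what I would view as the \emph{hard part} of the write-up — is the boundary behavior: $L$ is bounded on each of the stated open intervals (its sublevel sets are not automatically compact in the interval), so I cannot directly invoke a radial-unboundedness argument. I would handle this by using the sign of $\dot x$ near each endpoint to show forward invariance of every closed sub-interval $[x^*-\delta_1, x^*+\delta_2]$ that lies inside the open interval, which together with strict monotonicity of $t \mapsto L(x(t))$ off $x^*$ forces $x(t) \to x^*$ for every initial condition in the interval. Once this boundary bookkeeping is done cleanly for part (i), parts (ii) and (iii) are immediate by the same argument, with only the indexing of the sinusoidal sign pattern changing.
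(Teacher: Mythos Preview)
Your argument is correct, but it takes a different route from the paper. The paper uses the \emph{integral} Lyapunov function $L_{x^*}(x)=\int_{x^*}^{x} f(y)\,dy$, for which $L'(x)=f(x)$ and hence $\dot L(x)=-f(x)^2<0$ off the equilibrium, with no case analysis whatsoever; positivity of $L_{x^*}$ on each interval follows from the same sign pattern of $f$ that you exploit. Your quadratic $L(x)=(x-x^*)^2$ works too, but it forces you into the two-sided sign bookkeeping on $(x-x^*)\,x\,\sin(1/x)$ that you carry out. The trade-off is that the paper's choice makes $\dot L$ manifestly negative in one line, while yours makes $L>0$ and $L^{-1}(0)=\{x^*\}$ manifestly true in one line; each shifts the work to the other verification. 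Your discussion of forward invariance near the finite endpoints is a point on which you are actually more careful than the paper, which simply asserts global asymptotic stability from $\dot L<0$; the same boundary issue is present for the integral $L$ as well, so your extra paragraph is not wasted effort.
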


\begin{proof} 
To prove part (i), let $L_{1/\pi}(x) = \int_{1/\pi}^x f(y) dy$. Then $L(1/\pi) = 0$, $L'(x) = f(x)$, and $\dot{L}(x) = - f^2(x)$. Therefore, $\dot{L}(x) < 0, x \neq 1/\pi$, which proves part (i). The proof of part (ii) is derived using a similar line of reasoning using $L_{1/(2k+1)\pi}(x) = \int_{1/(2k+1)\pi}^x f(y) dy$ and likewise for part (iii).
\end{proof}

\begin{proof}[Proof of Theorem~\ref{alskdjfhskdjfh}]
Since, by Lemma~\ref{eprotiyurtoyiu}, every solution that starts in $(-1/\pi, + \infty)$ remains in this set, $X^*_{\min}$ is Lyapunov stable. What's more, by the same lemma, every solution that starts in $(-1/\pi, + \infty)$ converges to $X^*_{\min}$. Therefore, $X^*_{\min}$ is asymptotically stable.
\end{proof}

\section{Related work}
\label{qwoeriwueroweiru}

To the extent of our knowledge, this is the first paper that proposes a general theory of equilibrium behavior in vector fields. Our theory builds upon various related ideas in the literature perhaps the most significant of which are the formalization of the variational inequality problem, the concept of invasion in evolutionary game theory and the concept of descent algorithms in nonlinear optimization theory, and last but not least order theory. It is difficult to trace the origins and evolution of these ideas in one paper (for example, optimization theory's necessary optimality conditions date as far back as 1637; see~\cite{Bertsekas}), and we defer from providing definitive references.

\subsection{The variational inequality problem}

The variational inequality problem captures the notion of {\em equilibrium} in its {\em generality} and represents, in our view, the first attempt to formalize a general theory of equilibrium behavior. However, our theory refines the concept of variational equilibrium in a strong sense. According to their variational formalization, equilibria are precisely those points that lack (a generalized notion of)  {\em directions of strict descent}. In contrast, minimal elements are precisely those points that lack (a generalized notion of)  {\em directions of descent}. Since points that lack directions of descent clearly lack directions of strict descent, minimal elements are a {\em variational equilibrium refinement concept}. Therefore, the concept of variational equilibrium is a necessary condition in any {\em fixed} (equilibrium) solution of a vector field, however, we have argued extensively that this concept cannot serve as a general solution concept in vector fields.

\subsection{Nonlinear optimization theory and game theory}

One of the main contributions of this paper is to explore the precise relationship between evolutionary game theory and nonlinear optimization theory: Drawing on the link that the variational inequality problem establishes between the equilibrium concepts of these theories, this is the first paper to establish a relationship between the evolutionarily stable state and the strict local minimum. However, our contribution extends beyond establishing this relationship to extending both theories through proposing an overarching solution concept (of which both the evolutionarily stable state and the strict local minimum are special cases).

At a more elementary conceptual level, we believe that evolutionary game theory's most significant contribution in the study of equilibrium behavior is the concept of {\em evolutionary stability}, and although there are many alternative formalizations of this concept, they have been shown to be {\em equivalent}. These formalizations are all based on the more elementary concept of {\em invasion}. Our theory is based instead on the elementary concept of {\em dominance}, which furnishes a solution concept that generalizes the concept of evolutionary stability in a strong sense.

Similarly, nonlinear optimization theory's formalizations are based on the elementary concept of a {\em direction of strict descent} whereas our theory is based instead on the concept of a {\em descent path} (that is germane to the concept of dominance). This change in perspective furnishes a solution concept that generalizes nonlinear optimization theory's solution concepts.

\subsection{Order theory}

The concept of a polyorder draws heavily on ideas from {\em order theory}, a discipline which is concerned, however, with the study of {\em transitive} binary relations (see, for example,~\cite{Ok}), whereas polyorders are generally not transitive. As discussed earlier, there is an increasing body of work on {\em acyclic} binary relations, a concept that relaxes the concept of transitivity. Although scalar polyorders are acyclic (Proposition~\ref{wpoeritueoiriut}), the rich structure of scalar polyorders warrants independent scrutiny.

\section*{Acknowledgments}

I would like to thank Jon Crowcroft for carefully listening many of my arguments.

\bibliographystyle{plain}
\bibliography{axiomatic}

\if(0)

\subsection{Preliminaries}

We start off with preliminary definitions in the theory of dynamical systems. First we introduce the concept of a {\em flow}, followed by the concept of a {\em gradient flow} ... Then we consider a particular instance of gradient flows, namely, first order flows.

\subsubsection{Flows}

The simplest algorithm to navigate a vector field $c : \mathbb{R}^m \rightarrow \mathbb{R}^m$ is to just ``go with the flow.'' To explore this idea we define the following (autonomous) dynamical system: 
\begin{align*}
\frac{dx}{dt} = \dot{x} = c(x).
\end{align*} 
Dynamical systems of this form are known as {\em flows} in the following sense: Any continuously differentiable path $t \mapsto p(t) \in \mathbb{R}^m$ satisfying the previous equation is called a {\em solution path}. Since for any solution path $p(t)$ we have that $\dot{p}(t) = c(p(t))$, the vector field $c$ is also called a {\em velocity field}. The collection of points in any solution path is called a {\em flow line}. Existence and uniqueness of solution paths for flows is a classical result in the theory of ordinary differential equations.

\subsubsection{Gradient flows}

Let $f :  \mathbb{R}^m \rightarrow  \mathbb{R}$ be a differentiable function whose gradient is Lipschitz continuous and set $c = - \nabla f$. Then $\dot{x} = c(x)$ is called a {\em gradient dynamical system}.

\subsubsection{First order flows}

A {\em first order flow} is an autonomous dynamical system $\dot{x} = c(x)$ where $c :  \mathbb{R} \rightarrow  \mathbb{R}$. It is easy to see that first order flows are also gradient flows.\\

{\em You need to talk somewhere about stability theory.}

\subsection{Atypical solutions of first order flows}

$...$

\fi

\if(0)

\section{Vector fields as a mathematical foundation}
\label{poiudsfgfkbwer}

Vector fields, that is, maps of the form $F: X \subseteq \mathbb{R}^m \rightarrow \mathbb{R}^m$, are the fundamental object of study in a variety of disciplines;  in some, the connection is apparent and well-established in the literature while in other disciplines it is perhaps less so. In this paper, we are concerned with four disciplines that vector fields have found applications in, namely, {\em dynamical systems theory}, {\em nonlinear optimization theory}, {\em game theory}, and {\em evolutionary game theory}.

\subsection{Dynamical systems theory}

The subject of dynamical systems theory is to study the dynamic {\em behavior} of systems, and, in particular, rules by which the state of systems evolves over time. Dynamical systems are of two types: {\em discrete} and {\em continuous}. Continuous dynamical systems evolve according to the solutions of a system of differential equations of the form $$\frac{dx}{dt} = \dot{x} = F(x)$$ where $F$ is a vector field called the {\em velocity field}. We may classify dynamical systems as being either {\em unconstrained} or {\em constrained}.\footnote{This terminology is not standard in dynamical systems theory, and it is borrowed from optimization theory.} In an unconstrained dynamical system, velocity vectors never point out of the space of possible system states (for example, if the state space is the entire Euclidean space) whereas in constrained systems velocity vectors are not such restricted. Constrained dynamical systems are typically studied using {\em projection dynamics} (for example, see~\cite{xxx, yyy}).
Since the connection between dynamical systems and vector fields is well-established in the literature we do not motivate this connection further and immediately turn our attention elsewhere.

\subsection{Nonlinear optimization theory}

Every vector field ``contains'' a scalar field since by the theorem of Helmholtz (also known as the {\em fundamental theorem of vector calculus}) every vector field admits a {\em Helmholtz-Hodge decomposition} as the sum of an {\em irrotational} and a {\em solenoidal} vector field, and irrotational vector fields admit a {\em potential function}. Scalar fields, that is, functions of the form $f: X \subseteq \mathbb{R}^m \rightarrow \mathbb{R}$, are the fundamental object of study of nonlinear optimization theory at least on the surface as in practice optimality conditions and optimization algorithms are typically derived and stated using the gradient of the optimization problem's objective function. Therefore, vector fields are also a fundamental object of study of nonlinear optimization theory, although the vector fields that are of concern here (called {\em gradient fields}) are generally more well-behaved than their more general counterparts.

\subsection{Game theory}

Myerson defines Game Theory as ``[T]he study of mathematical models of conflict and cooperation between intelligent rational decision-makers''~\cite{Myerson}. Game theory's most popular object of study is the {\em normal-form game}, which is a triple $$(I, (S_i)_{i \in I}, (u_i)_{i \in I}),$$ where $I$ is the set of players, $S_i$ is the set of pure strategies available to player $i$, and $u_i: S \rightarrow \mathbb{R}$ is the utility function of player $i$ where $S = \times_i S_i$ is the set of all strategy profiles (combinations of strategies). Normal-form games are {\em particular instances} of a more general concept, that of {\em population games} (for example, see~\cite{PopulationGames}), which are discussed below where the connection between normal-form games and vector fields becomes immediately apparent (an observation that has perhaps not received appropriate attention in the community).

\subsection{Evolutionary game theory}

The origins of evolutionary game theory are traced in biology~\cite{xxx, yyy} in contrast with game theory whose origins are traced in the study of economic behavior~\cite{xxx}. However, the mathematical formalisms of these disciplines are in fact quite closely related to each other, and our primary focus in this paper is on abstractions.

In this paper, we are concerned with {\em population games} of which normal form games are a special case (for example, see~\cite{PopulationGames}). A population game $\mathcal{G}$ is a pair  $(X, c)$. $X$, the game's {\em state space} or {\em strategy profile space}, has product form, i.e., $X = X_1 \times \cdots \times X_n$, where the $X_i$'s are simplexes and $i=1,\ldots,n$ refers to a {\em player position} (or {\em population}). To each player position corresponds a set $S_i$ of $m_i$ {\em pure strategies} and a {\em mass} $\omega_i > 0$ (the population mass). The {\em strategy space} $X_i$ of player position $i$ has the form
\begin{align*}
X_i = \left\{ x \in \mathbb{R}^{m_i}  \bigg| \sum_{j \in S_i} x_j  = \omega_i \right\}.
\end{align*}
We refer to the elements of $X$ as {\em states} or {\em strategy profiles}. Each strategy profile $x \in X$ can be decomposed into a vector strategies, i.e., $x = (x_1,\ldots,x_n)'$. Let $m = \sum_i m_i$. $c: X \rightarrow \mathbb{R}^m$, the game's {\em cost function}, maps $X$, the game's state space, to vectors of {\em costs} where each position in the vector corresponds to a pure strategy. We assume that $c$ is {\em continuous}.

Population games are the first fundamental object of study of {\em evolutionary game theory}, noting that the origins of this theory are traced in biology~\cite{xxx} in contrast with game theory whose origins are traced in the study of economic behavior~\cite{xxx}. However, the mathematical formalisms of these disciplines are in fact quite closely related to each other as noted above, and our primary focus in this paper is on abstractions. In this vein, we will abstract population games (and, therefore, also normal-form games) as {\em vector fields} (that is, maps of the form $F: X \subseteq \mathbb{R}^m \rightarrow \mathbb{R}^m$). 

The second fundamental object of study of evolutionary game theory is the {\em evolutionary dynamics} of {\em revision protocols}~\cite{PopulationGames}. Revision protocols are simple procedures that population members employ in their effort to ``navigate'' a population game. (This is again in contrast with game theory whose assumption is that the ability to navigate a game is derived from introspection and rationality). Revision protocols are studied as evolution rules of {\em dynamical systems}.

\section{Critical points are not a solution concept}
\label{laskjdfhskjdhf}

In the finite-dimensional variational inequality problem (for example, see~\cite{xxx}), we are given a set $X \subseteq \mathbb{R}^m$, which is typically assumed to be nonempty, closed, and convex, and a vector field $c: \mathbb{R}^m \rightarrow \mathbb{R}^m$, and our objective is to find a vector $x^* \in X$ such that
\begin{align*}
(x - x^*) \cdot c(x^*) \geq 0, \forall x \in X.
\end{align*}
We will refer to the solutions of the variational inequality problem as the {\em critical points} or {\em critical elements} of the vector field $X$. To justify this terminology we observe that if $X$ is the entire $m$-dimensional space, then its critical points are precisely those points where the vector field vanishes. Therefore, the equilibria of an unconstrained dynamical system are such points as are the critical points of an unconstrained optimization problem. What's more the equilibria of a projected dynamical system~\cite{xxx}, the critical points of a constrained nonlinear optimization problem (see, for example,~\cite{xxx}), and the Nash equilibria of a population game (see, for example,~\cite{PopulationGames}) are all critical points of the variational inequality problem. However, critical points do not qualify as a solution concept in any of the aforementioned disciplines.

\subsection{Dynamical systems}

Dynamical systems theory has found many applications in various disciplines such as control theory, fluid mechanics, and population dynamics. Scientists working on these and other disciplines and being interested in understanding the systems they are studying are not content with an equilibrium analysis but almost always go beyond that to do stability analysis.

\subsection{Nonlinear optimization}

$f(x) = x^3$

\subsection{Game theory}

Talk about Daskalakis's result.

\subsection{Evolutionary game theory}

$...$

\section{Strict local minima}
\label{rtweiruteirut}

The concept of a critical point has been refined in all of the aforementioned disciplines.

\subsection{Dynamical systems}

Perhaps the most prominent refinement of the concept of a critical point in dynamical systems theory is that of {\em Lyapunov stability}.

\subsection{Nonlinear optimization}

Talk about convex optimization. Strict local minima are meaningful if the problem is convex and that is the most well-developed theory in nonlinear optimization. Argue that there are many meaningful problems that are not convex...

\subsection{Evolutionary game theory}

ESS... (Talk about the alternative characterizations of the ESS from Sandholm.)

\section{Strict local minima are not a solution concept}
\label{woeirtoeriut}

They are a solution concept in a convex problem. When the problem is not convex it is not just the problematic case that you may get multiple strict local minima. Show the picture that I had sent to Jon. Argue that the vector field generated by this picture does not give an asymptotically stable equilibrium Start talking about ESSets.

\section{Evolutionarily stable sets}

$...$

In the following section, we provide an alternative characterization of evolutionarily stable sets that corroborates the case, these sets are definitive solution concept for a general vector field.

\section{Polytropic optimization}
\label{zxjhfaoifdfjsndk}

In this section, we propose an {\em equilibrium solution concept} for general vector fields. At the heart of this solution concept lies a {\em preference relation} on the elements of the vector field. First we present background material on preference relations, followed by the solution concept we propose.

\subsection{Preference relations}

Let $X$ be a nonempty set. A subset $R$ of $X \times X$ is called a {\em binary relation} on X. If $(x,y) \in R$, we write $xRy$, and if $(x,y) \not\in R$, we write $\neg xRy$. $R$ is called {\em reflexive} if $xRx$ for each $x \in X$ and {\em complete} if for each $x,y \in X$ either $xRy$ or $yRx$. It is called {\em symmetric} if, for any $x, y \in X$, we have that $xRy \Rightarrow yRx$, and {\em antisymmetric} if, for any $x, y \in X$, $xRy \wedge yRx \Rightarrow x=y$. Finally, $R$ is called {\em transitive} if for any $x, y, z \in X$ we have that $xRy$ and $yRz$ imply $xRz$.

Let $xPy \Leftrightarrow xRy \wedge \neg yRx$ and $xIy \Leftrightarrow xRy \wedge yRx$. Then $P$ and $I$ are also binary relations on X where $P \subset R$ and $I \subset R$. $P$ is called the asymmetric part of $R$ and $I$ is called its symmetric part. If $P$ is antisymmetric, then $R$ is called a {\em preference relation}.

Preference relations are a fundamental object of study of {\em microeconomic theory}. The most elementary problem in this theory is that of {\em individual decision making} (see, for example,~\cite{xxx}) whose starting point is a set of alternatives (say $X$) among which an individual must choose. A fundamental tenet of microeconomic theory (since the definitive work of Von Neumann and Morgernstern~\cite{xxx}) is that individual preferences are {\em rational}.

\begin{definition}
The preference relation $R$ is {\em rational} if it is {\em complete} and {\em transitive}. 
\end{definition}

The study of transitive preference relations is the subject of {\em order theory} (for example, see~\cite{xxx}); although this theory is of considerable intellectual merit, in the rest of this paper, we will be concerned with preference relations that are generally non-transitive.

If a preference relation is not transitive, one would hope that it is perhaps acyclic.

\begin{definition}
The preference relation $R$ is {\em acyclic} if there is no finite set $\{ x_1, \ldots, x_n \} \subset X$ such that $x_i R x_{i+1}$ for $i = 1,\ldots,n-1$ and such that $x_n R x_1$. 
\end{definition}

There is a growing theory of acyclic preference relations (for example, see~\cite{xxx, yyy}), however, the preference relations we will be concerned with are in general not even acyclic.

\begin{definition}
Let $R$ be a preference relation on $X$ and let $x, y \in X$. If $xPy$, where $P$ is the asymmetric part of $R$, we say that $x$ {\em dominates} $y$. We further say that an element of $X$ is {\em minimal} (or {\em undominated}) if there is no element in $X$ that dominates it. We, finally, say that an element of $X$ is {\em maximal} if there no element in $X$ that it dominates.
\end{definition}

\subsection{An equilibrium solution concept for vector fields}

Let $c: X \rightarrow \mathbb{R}^m$ be a vector field where $X$ is a convex subset of $\mathbb{R}^m$. Let $x, y \in X$, and let $y_\epsilon = \epsilon x + (1-\epsilon) y$. Let $\preceq$ be a preference relation on $X$ such that 
\begin{align*}
x \preceq y \Leftrightarrow \forall \epsilon \in [0,1]: x \cdot c(y_\epsilon) \leq y \cdot c(y_\epsilon).
\end{align*} 
This preference relation can be understood as follows: If $x \preceq y$, the projection of $c(y_\epsilon)$ on the line segment connecting $x$ and $y$ does not point away from $x$ (that is, it either points toward $x$ or it vanishes). This preference relation is neither transitive nor acyclic. However, $\prec$, the asymmetric part of $\preceq$, is antisymmetric, and, therefore, $\preceq$ is indeed a preference relation. 

Let $x^* \in X$, let $O \subseteq X$ be a neighborhood of $x^*$  and consider the following definitions:
\begin{definition}
$x^*$ is a critical element of $\preceq$ if $\forall x \in X: x \cdot c(x^*) \geq x^* \cdot c(x^*)$.
\end{definition}

Note that according to this definition the problem of finding the critical elements of a vector field is a variational inequality problem, whose solutions we have previously argued are not the sought for solution concept. We may now understand why this is so: {\em The solutions of a variational inequality problem are impervious to the behavior of the vector field in their neighborhood.}

\begin{definition}
$x^*$ is a local minimum of $\preceq$ if if $\exists O$ such that  $\forall x \in O: x^* \preceq x$.
\end{definition}


Similarly, we may now understand why local minima are not the sought for solution concept either: {\em In the neighborhood of a local minimum the nearby vectors do not point away from this minimum, but they are not guaranteed to point toward the minimum.}

\begin{definition}
$x^*$ is a minimal element of $\preceq$ if $\forall x \in X: \neg (x \prec x^*)$.
\end{definition}

In contrast to critical elements that are impervious to the behavior of the vector field in their neighborhood, minimal elements critically depend on this behavior.

\begin{definition}
$x^*$ is a strict local minimum of $\preceq$ if $\exists O$ such that $\forall x \in O: x^* \prec x$.
\end{definition}


Furthermore, let
\begin{align*}
\mbox{CE}(X, c) &= \{ x^* | x^* \mbox{ is a critical element of } \preceq \}\\
\mbox{LM}(X, c) &= \{ x^* | x^* \mbox{ is a local minimum of } \preceq \}\\
\mbox{MIN}(X, c) &= \{ x^* | x^* \mbox{ is a minimal element of } \preceq \}\\
\mbox{SLM}(X, c) &= \{ x^* | x^* \mbox{ is a strict local minimum of } \preceq \}
\end{align*}

We have the following theorem.
\begin{theorem}
$\mbox{SLM}(X, c) \subseteq \mbox{MIN}(X, c) \subseteq \mbox{LM}(X, c) \subseteq \mbox{CE}(X, c).$
\end{theorem}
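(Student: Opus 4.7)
The plan is to handle the three inclusions separately, since they have rather different flavor. The easiest is $\mbox{LM}(X,c) \subseteq \mbox{CE}(X,c)$, which is essentially Theorem~\ref{qwpoeiruweiru}: if $x^*$ is a local minimum of $\preceq_c$, specialising the pointwise inequality $x^* \cdot c(x_\epsilon) \leq x \cdot c(x_\epsilon)$ to $\epsilon = 1$ (so that $x_\epsilon = x^*$) gives $x \cdot c(x^*) \geq x^* \cdot c(x^*)$ throughout the witnessing neighborhood, and the convex-combination argument used in the proof of Theorem~\ref{qwpoeiruweiru} extends it to all of $X$. Nothing new is needed.

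For $\mbox{SLM}(X,c) \subseteq \mbox{MIN}(X,c)$, I would adapt the proof of Theorem~\ref{qwpoeriuweoiru}. Fix a strict local minimum $x^*$ with witnessing neighborhood $O$, take an arbitrary $x \in X - \{x^*\}$, and verify the characterisation in Lemma~\ref{alskjdfjsdkfjdkjf}(i). If $x \in O$, then $x^* \prec_c x$ gives $x^* \preceq_c x$, which is the first disjunct. If $x \notin O$, pick $\epsilon_0 \in (0,1)$ close enough to $1$ that $y := \epsilon_0 x^* + (1-\epsilon_0) x$ lies in $O$. Then $x^* \prec_c y$ forces $\neg(y \preceq_c x^*)$, so there is an $\alpha \in [0,1]$ with $y \cdot c(z_\alpha) > x^* \cdot c(z_\alpha)$, where $z_\alpha = \alpha y + (1-\alpha) x^*$. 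A direct calculation rewrites $z_\alpha$ as $\gamma x^* + (1-\gamma) x$ with $\gamma = 1 - \alpha(1-\epsilon_0)$; substituting $y = \epsilon_0 x^* + (1-\epsilon_0) x$ and cancelling the common $x^*$-weighted term yields $x \cdot c(x_\gamma) > x^* \cdot c(x_\gamma)$, the second disjunct. So $x^*$ is $\preceq_c$-minimal.

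The difficult inclusion, and the one I expect to be the main obstacle, is $\mbox{MIN}(X,c) \subseteq \mbox{LM}(X,c)$. The natural strategy is contrapositive and continuity-based: if $x^*$ is not a local minimum, then in every neighborhood there is some $x$ with $\neg(x^* \preceq_c x)$, i.e., some $\epsilon$ at which $x \cdot c(x_\epsilon) < x^* \cdot c(x_\epsilon)$, and one would like, by continuity of $c$ and a compactness/limiting argument, to extract an actual $x \in X$ that strictly dominates $x^*$. However, I expect this argument to fail in the stated generality, because Section~\ref{aslkdjfhsdkjfhsdjf}'s analysis of $f(x) = x\sin(1/x)$ looks like a concrete counterexample: the paper explicitly asserts there that the origin is both minimal and maximal in both polyorders yet is \emph{not} a local minimum or maximum in either, since on every segment from $0$ to a nearby $x$ the sign of $x \cdot c(x_\epsilon)$ oscillates infinitely often, so $0 \not\preceq_c x$ for any punctured-neighborhood $x$. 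My realistic plan is therefore to first carry out the continuity attempt to expose the obstruction, and then either (i) add a regularity hypothesis (e.g., real-analytic $c$, or the absence of critical points accumulating at $x^*$) that rules out the oscillatory pathology, or (ii) conclude that the inclusion $\mbox{MIN} \subseteq \mbox{LM}$ must be weakened, since the $x \sin(1/x)$ phenomenon appears to be a bona fide counterexample to it as stated.
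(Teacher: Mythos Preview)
Your handling of the two outer inclusions is correct and matches what the paper does elsewhere: $\mbox{LM}\subseteq\mbox{CE}$ is exactly Theorem~\ref{qwpoeiruweiru}, and your argument for $\mbox{SLM}\subseteq\mbox{MIN}$ is the same reparametrisation idea the paper uses (in the main body, for evolutionarily stable states in Theorem~\ref{qwpoeriuweoiru}; and, in the commented-out appendix, in the lemma proving $\mbox{SLM}\subseteq\mbox{MIN}$ by contradiction). Your direct version via Lemma~\ref{alskjdfjsdkfjdkjf}(i) is a clean rewrite of that contradiction proof.

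On the middle inclusion $\mbox{MIN}\subseteq\mbox{LM}$, you are right, and in fact there is nothing to compare against: the paper's displayed proof of this four-term chain is literally the placeholder ``$\ldots$'' inside a commented-out draft block, and the surrounding material shows the author abandoning the claim. In that same block the statement is already weakened to $\mbox{MIN}(X,c)\setminus\mbox{MAX}(X,c)\subseteq\mbox{LM}(X,c)$, with a proof that is left unfinished. The published body of the paper then \emph{refutes} $\mbox{MIN}\subseteq\mbox{LM}$ outright: Section~\ref{aslkdjfhsdkjfhsdjf} exhibits the origin of $x\sin(1/x)$ as a minimal element that is not a local minimum of the polyorder, precisely your proposed counterexample. (For good measure the paper also breaks the reverse inclusion, showing a local minimum that is not minimal.) So your diagnosis---that the chain as stated is false and that the only honest repairs are to add a regularity hypothesis or to drop $\mbox{MIN}\subseteq\mbox{LM}$---is exactly the conclusion the paper itself reaches, implicitly, by commenting the theorem out and building Section~\ref{aslkdjfhsdkjfhsdjf} around its failure.
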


\begin{proof}
$...$
\end{proof}

\begin{definition}
$x^*$ is an almost strict local minimum of $\preceq$ if $\exists O$ such that $\forall x \in O/\mbox{MIN}(X, c): x^* \prec x$ and $\forall x \in O \cap \mbox{MIN}(X, c): x^* \sim x$.
\end{definition}

Let
\begin{align*}
\mbox{ASLM}(X, c) &= \{ x^* | x^* \mbox{ is an almost strict local minimum of } \preceq \}.
\end{align*}

We call the problem of finding the minimal elements of $\preceq$ {\em polytropic optimization} whose fundamental thesis is that these elements are the {\em solution concept} of a general vector field.

\section{Summary and further evidence}

Provide a summary of the argument you made up to here.

\subsection{Nonlinear optimization}

$...$ Here we have *the* solution concept...\\

Rewrite the results as you wrote them above (in polytropic optimization) for nonlinear optimization...\\

Make the case that the problem is solved -- we finally have a solution concept for nonlinear optimization...\\

How do you make the case? Well, there are no periodic orbits or something more complicated... Well, there should be a theorem about the $\omega$-limit sets being simple---use that theorem to say that the $\omega$-limit sets are not more complicated than equilibria---since we are proposing an equilibrium solution concept we are fine here.

\subsection{Evolutionary game theory}

$...$

\subsection{Dynamical systems}

$...$

\section{Discussion}

Talk about game theory's solution concept.

\bibliographystyle{plain}
\bibliography{axiomatic}

\newpage

\appendix

\section{Appendix to Section~\ref{zxjhfaoifdfjsndk}}
\label{qwoeiruwoeiirues}

\begin{definition}
\label{xcnbvlzfjv}
We say that $x$ weakly invades $y$ if $x \cdot c(y) \leq y \cdot c(y)$. If the inequality is strict, we say that $x$ invades $y$. If $\forall \epsilon \in [0,1], x \cdot c(x_\epsilon) \leq y \cdot c(x_\epsilon)$, we say that $x$ weakly dominates $y$. If the inequality is strict, we say that $x$ dominates $y$. 
\end{definition}

Let $\mathcal{G}=(X,c)$ be a population game, $x,y \in X$, $x_\epsilon = (1-\epsilon) x + \epsilon y$, $\epsilon \in [0,1]$, and $\delta(\epsilon) = (x-y) \cdot c(x_\epsilon)$.

\begin{lemma}
\label{qwerxmcbpdkfjhbdfg}
If $x$ dominates $y$, then $x \prec_\mathcal{G} y$.
\end{lemma}

\begin{lemma}
\label{slakfdjdskfj}
If $x$ invades $y$, then there exists $0 \leq \epsilon < 1$ such that $x_\epsilon$ that dominates $y$.
\end{lemma}

\begin{proof}
By assumption, $\delta(1) < 0$, and, therefore, there exists $0 \leq \epsilon < 1$ such that, for all $ \epsilon \leq \epsilon' \leq 1$, $\delta(\epsilon') < 0$, which completes the proof.
\end{proof}

\begin{lemma}
\label{weporitkjfhgdfpjgn}
If, for all $0 \leq \epsilon < 1$, $x_\epsilon$ does not dominate $y$, then $x$ does not invade $y$.
\end{lemma}

\begin{proof}
If $x$ invades $y$, then by Lemma~\ref{slakfdjdskfj}, there exists such $x_\epsilon$ that dominates $y$, a contradiction.
\end{proof}

\begin{lemma}
\label{oweiurteroitueroitu}
If $x$ strongly dominates $y$, then,  for all $\epsilon \in [0,1)$, $x_\epsilon$ strongly dominates $y$.
\end{lemma}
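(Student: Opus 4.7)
The plan is to unpack what ``strongly dominates'' means from the preceding Definition~\ref{xcnbvlzfjv} (reading it as the strict version of weak dominance, namely $x \cdot c(x_\epsilon) < y \cdot c(x_\epsilon)$ for every $\epsilon \in [0,1]$, where $x_\epsilon = (1-\epsilon) x + \epsilon y$) and then verify that the same strict inequality is preserved if we slide the left endpoint of the path from $x$ toward $y$.

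Concretely, I would fix $\epsilon_0 \in [0,1)$, set $z = x_{\epsilon_0}$, and introduce a fresh parameter $\lambda \in [0,1]$ to describe the segment from $z$ to $y$ via $z_\lambda = (1-\lambda) z + \lambda y$. A direct substitution shows $z_\lambda = x_{\epsilon_0 + \lambda(1-\epsilon_0)}$, so as $\lambda$ ranges over $[0,1]$ the underlying parameter $\alpha := \epsilon_0 + \lambda(1-\epsilon_0)$ ranges over $[\epsilon_0, 1] \subseteq [0,1]$. Consequently the hypothesis applies along the entire subsegment, giving $x \cdot c(z_\lambda) < y \cdot c(z_\lambda)$ for every $\lambda \in [0,1]$.

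To transfer this from $x$ to $z$, I would expand $z$ as a convex combination and use bilinearity of the inner product:
\begin{align*}
z \cdot c(z_\lambda) = (1-\epsilon_0)\, x \cdot c(z_\lambda) + \epsilon_0\, y \cdot c(z_\lambda).
\end{align*}
Since $\epsilon_0 < 1$, the coefficient $1-\epsilon_0$ is strictly positive, so the strict inequality $x \cdot c(z_\lambda) < y \cdot c(z_\lambda)$ upgrades to $z \cdot c(z_\lambda) < y \cdot c(z_\lambda)$. This holds for all $\lambda \in [0,1]$, so $z = x_{\epsilon_0}$ strongly dominates $y$.

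The only subtle step is recognizing that the segment from $x_{\epsilon_0}$ to $y$ is a \emph{subsegment} of the original segment from $x$ to $y$, which is what makes the hypothesis directly applicable; the rest is linear interpolation. The assumption $\epsilon_0 < 1$ is used essentially: at $\epsilon_0 = 1$ one would have $z = y$, collapsing the strict inequality to the false $y \cdot c(z_\lambda) < y \cdot c(z_\lambda)$.
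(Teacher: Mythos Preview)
Your proof is correct and follows essentially the same idea as the paper's: expand $x_{\epsilon_0}$ as a convex combination of $x$ and $y$, use bilinearity to reduce the inequality $x_{\epsilon_0}\cdot c(w) < y\cdot c(w)$ to $(1-\epsilon_0)\bigl(x\cdot c(w) - y\cdot c(w)\bigr) < 0$, and invoke the hypothesis. The paper's own proof is a one-line sketch (``easy by picture'' followed by the equivalence $x_\epsilon\cdot c(x_\epsilon)\leq y\cdot c(x_\epsilon)\Leftrightarrow x\cdot c(x_\epsilon)\leq y\cdot c(x_\epsilon)$), whereas you make explicit the reparametrization showing that the segment from $x_{\epsilon_0}$ to $y$ is a subsegment of the segment from $x$ to $y$, and you keep the inequalities strict as the statement requires---both points the paper leaves implicit.
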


\begin{proof}

The proof is easy by picture.

\begin{align*}
x_\epsilon \cdot c(x_\epsilon) \leq y \cdot c(x_\epsilon) &\Leftrightarrow ((1-\epsilon) x + \epsilon y) \cdot c(x_\epsilon) \leq y \cdot c(x_\epsilon)\\
   &\Leftrightarrow (1-\epsilon) x \cdot c(x_\epsilon) \leq (1-\epsilon) y \cdot c(x_\epsilon)\\
   &\Leftrightarrow x \cdot c(x_\epsilon) \leq y \cdot c(x_\epsilon)
\end{align*}
\end{proof}

\begin{lemma}
$\mbox{{\em LM}}(X, c) \subseteq \mbox{{\em CE}}(X, c)$. 
\end{lemma}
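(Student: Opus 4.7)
The plan is to unwind the definitions and then use convexity of the line segment between $x^*$ and any external point $x$ to propagate the critical element inequality from the neighborhood $O$ to all of $X$.

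First I would observe that, by the definition of a local minimum of $(X, \preceq_c)$, there is a neighborhood $O$ of $x^*$ such that $x^* \preceq_c x$ for every $x \in O$, i.e.\ $x^* \cdot c(x_\epsilon) \le x \cdot c(x_\epsilon)$ for all $\epsilon \in [0,1]$, where $x_\epsilon = \epsilon x^* + (1-\epsilon) x$. Specializing to $\epsilon = 1$ (so that $x_\epsilon = x^*$) immediately gives the critical element inequality $x^* \cdot c(x^*) \le x \cdot c(x^*)$ for every $x \in O$. So the critical element condition is already verified on $O$.

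The main step is to extend this inequality from $O$ to all of $X$. I would do this by contradiction: suppose there exists $x \in X - O$ with $x \cdot c(x^*) < x^* \cdot c(x^*)$. Writing $x_\epsilon = \epsilon x^* + (1-\epsilon) x$ and exploiting the fact that $y \mapsto y \cdot c(x^*)$ is linear (hence both continuous and strictly monotone along the segment whenever the endpoint inequality is strict), I get
\begin{align*}
x_\epsilon \cdot c(x^*) = \epsilon\, x^* \cdot c(x^*) + (1-\epsilon)\, x \cdot c(x^*) < x^* \cdot c(x^*)
\end{align*}
for every $\epsilon \in [0,1)$. Since $O$ is a neighborhood of $x^*$ in $X$ and $X$ is convex, the segment $\{x_\epsilon\}$ enters $O$ for $\epsilon$ sufficiently close to $1$. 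Picking such an $\epsilon$ produces a point $x_\epsilon \in O$ violating the inequality $x^* \cdot c(x^*) \le x_\epsilon \cdot c(x^*)$ already established above, a contradiction.

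I do not expect any real obstacle here: the lemma is essentially the same as Theorem~\ref{qwpoeiruweiru}, and the only delicate point is the convexity/continuity extension step, which is a one-line calculation once the definitions are aligned. The only thing to be careful about is making sure $x_\epsilon$ actually lies in $X$ along the whole segment (guaranteed by convexity of $X$) and in $O$ for $\epsilon$ near $1$ (guaranteed because $O$ is a neighborhood of $x^* = x_1$).
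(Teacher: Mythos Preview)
Your proposal is correct and follows essentially the same argument as the paper's proof of Theorem~\ref{qwpoeiruweiru}: specialize the local-minimum inequality at $\epsilon=1$ to get the critical-element condition on $O$, then assume a violating $x\in X\setminus O$ and use linearity along the segment to push the strict inequality to a point $x_\epsilon\in O$, obtaining a contradiction. Your version is slightly more careful than the paper's in restricting to $\epsilon\in[0,1)$ (the paper writes $[0,1]$, which is a harmless slip since $x_1=x^*$).
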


\begin{proof}
Let $x^*$ be a local minimum of $\preceq$. Then there exists $O \subseteq X$ such that $\forall x \in O \mbox{ } \forall \epsilon \in [0,1]: x^* \cdot c(x_\epsilon) \leq x \cdot c(x_\epsilon)$. Suppose there exists $y \in X$ such that $y \cdot c(x^*) < x^* \cdot c(x^*)$. Then, by Lemma~\ref{slakfdjdskfj}, there exists $\epsilon \in [0,1)$ such that $(1-\epsilon)y+\epsilon x^*$ strongly dominates $x^*$. Therefore, by Lemma~\ref{oweiurteroitueroitu}, there exists an element of $X$ in $O$ that strongly dominates $x^*$, which contradicts our assumption.
\end{proof}

\begin{lemma}
We have the following equivalent characterizations of minimality and maximality:
\begin{align*}
x^* \in \mbox{{\em MIN}}(X, c) &\Leftrightarrow \forall x: (\forall \epsilon \in [0,1]: x^* \cdot c(x_\epsilon) \leq x \cdot c(x_\epsilon)) \vee (\exists \epsilon \in [0,1]: x^* \cdot c(x_\epsilon) < x \cdot c(x_\epsilon))\\
x^* \in \mbox{{\em MAX}}(X, c) &\Leftrightarrow \forall x: (\forall \epsilon \in [0,1]: x^* \cdot c(x_\epsilon) \geq x \cdot c(x_\epsilon)) \vee (\exists \epsilon \in [0,1]: x^* \cdot c(x_\epsilon) > x \cdot c(x_\epsilon)).
\end{align*}
\end{lemma}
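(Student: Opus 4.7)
The two stated equivalences are essentially Lemma~\ref{alskjdfjsdkfjdkjf}(i)--(ii) restated, so the plan is to follow the same logical unfolding of the definitions of $\prec_c$ and $\preceq_c$ and to handle maximality via the minimality/maximality duality already proved in Proposition~\ref{oeritueorituut}.

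First I would address the minimality equivalence. Starting from $x^* \in \mbox{MIN}(X,c) \Leftrightarrow \forall x \in X: \neg(x \prec_c x^*)$, I would expand the asymmetric part using $x \prec_c x^* \Leftrightarrow (x \preceq_c x^*) \wedge \neg(x^* \preceq_c x)$, and then apply De Morgan to convert the outer negation into a disjunction $\neg(x \preceq_c x^*) \vee (x^* \preceq_c x)$. Next I would substitute the defining condition of $\preceq_c$ on each side, so that $x^* \preceq_c x$ becomes $\forall \epsilon \in [0,1]: x^* \cdot c(x_\epsilon) \leq x \cdot c(x_\epsilon)$ with $x_\epsilon = \epsilon x^* + (1-\epsilon) x$, while $\neg(x \preceq_c x^*)$ becomes $\exists \epsilon \in [0,1]: x^* \cdot c(x^*_\epsilon) < x \cdot c(x^*_\epsilon)$ with $x^*_\epsilon = \epsilon x + (1-\epsilon) x^*$.

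The one non-routine step is aligning the two parameterizations of the line segment between $x$ and $x^*$. Since $x^*_\epsilon = x_{1-\epsilon}$ and the map $\epsilon \mapsto 1-\epsilon$ is a bijection on $[0,1]$, the existential quantifier over $x^*_\epsilon$ may be rewritten as an existential over $x_\epsilon$, giving the desired clause $\exists \epsilon \in [0,1]: x^* \cdot c(x_\epsilon) < x \cdot c(x_\epsilon)$. This change of variable is the only geometric content of the argument and is where care is needed to avoid confusing the two endpoints of the segment; the rest is pure propositional manipulation.

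Finally, for maximality, I would avoid repeating the unfolding and instead invoke Proposition~\ref{oeritueorituut}: $x^*$ is $\preceq_c$-maximal if and only if it is $\preceq_{-c}$-minimal. Applying the minimality characterization just proved to the vector field $-c$ and then multiplying both sides of each inequality by $-1$ (which flips $\leq$ to $\geq$ and $<$ to $>$) yields the second equivalence verbatim. The main obstacle is purely notational, namely keeping straight the two parameterizations $x_\epsilon$ versus $x^*_\epsilon$; once the bijection $\epsilon \leftrightarrow 1-\epsilon$ is observed, the rest of the argument is a chain of logical equivalences.
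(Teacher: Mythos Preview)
Your treatment of the minimality equivalence is correct and matches the paper's proof step for step: unfold $\prec_c$, apply De Morgan, substitute the definition of $\preceq_c$, and then use the reparametrization $x^*_\epsilon = x_{1-\epsilon}$ to merge the two clauses onto a common path variable. The paper carries out exactly this chain of equivalences and leaves the last reparametrization implicit; you have made it explicit, which is fine.

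The maximality part, however, has a circularity problem. You propose to derive it from Proposition~\ref{oeritueorituut}, but look at how that proposition is proved in the paper: its proof begins ``By Lemma~\ref{alskjdfjsdkfjdkjf}, $x^*$ is $\preceq_{-c}$-maximal if and only if \ldots'' and then invokes both the minimality \emph{and} the maximality characterizations of that lemma. Since the present statement \emph{is} Lemma~\ref{alskjdfjsdkfjdkjf}(i)--(ii), invoking Proposition~\ref{oeritueorituut} to establish part~(ii) assumes what you are trying to prove. The paper avoids this by simply declaring part~(ii) ``analogous'' to part~(i), i.e.\ by running the same De~Morgan unfolding on $\neg(x^* \prec_c x)$ directly. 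The easy fix is either to do that analogous unfolding, or to first prove the duality $x \preceq_{-c} y \Leftrightarrow y \preceq_c x$ straight from the definition (a one-line sign flip plus the same $\epsilon \leftrightarrow 1-\epsilon$ bijection) and only then deduce the maximality characterization from the minimality one.
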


\begin{proof}
\begin{align*}
x^* \in \mbox{MIN}(X, c) &\Leftrightarrow \forall x: \neg(x \prec x^*)\\
&\Leftrightarrow \forall x: \neg((x \preceq x^*) \wedge \neg(x^* \preceq x))\\
&\Leftrightarrow \forall x: (x^* \preceq x) \vee \neg(x \preceq x^*)\\
&\Leftrightarrow \forall x: (\forall \epsilon \in [0,1]: x^* \cdot c(x_\epsilon) \leq x \cdot c(x_\epsilon)) \vee \neg(\forall \epsilon \in [0,1]: x \cdot c(x^*_\epsilon) \leq x^* \cdot c(x^*_\epsilon))\\
&\Leftrightarrow \forall x: (\forall \epsilon \in [0,1]: x^* \cdot c(x_\epsilon) \leq x \cdot c(x_\epsilon)) \vee (\exists \epsilon \in [0,1]: x \cdot c(x^*_\epsilon) > x^* \cdot c(x^*_\epsilon))\\
&\Leftrightarrow \forall x: (\forall \epsilon \in [0,1]: x^* \cdot c(x_\epsilon) \leq x \cdot c(x_\epsilon)) \vee (\exists \epsilon \in [0,1]: x^* \cdot c(x_\epsilon) < x \cdot c(x_\epsilon)).\qedhere
\end{align*}
\end{proof}

\begin{lemma}
$\mbox{{\em MIN}}(X, c)/\mbox{{\em MAX}}(X, c) \subseteq \mbox{{\em LM}}(X, c)$. 
\end{lemma}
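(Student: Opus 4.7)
The plan is to prove the contrapositive: if $x^*$ is $\preceq_c$-minimal but fails to be a local minimum of $\preceq_c$, then $x^*$ must be $\preceq_c$-maximal. First I would unpack what the two hypotheses say using the characterization in Lemma~\ref{alskjdfjsdkfjdkjf}. Since $x^*$ is not a local minimum, every neighborhood of $x^*$ contains a point $x$ with $\neg(x^* \preceq_c x)$; combined with minimality (which forbids $x \prec_c x^*$), this forces $\neg(x \preceq_c x^*)$ as well. So I would extract a sequence $x_n \to x^*$ each of whose members is $\preceq_c$-incomparable with $x^*$, meaning that along each segment $[x^*, x_n]$ the directional function $\epsilon \mapsto (x_n - x^*) \cdot c(\epsilon x^* + (1-\epsilon) x_n)$ takes both strictly positive and strictly negative values.

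Next I would set up the contradiction. Suppose toward contradiction that $x^*$ is not maximal, so there exists $y \in X$ with $x^* \prec_c y$. Restricting the weak-dominance inequality $x^* \preceq_c y$ to sub-segments $[x^*, z_\lambda]$, where $z_\lambda := x^* + \lambda(y - x^*)$, shows that in fact $x^* \preceq_c z_\lambda$ for every $\lambda \in [0,1]$ --- the entire segment toward $y$ is weakly dominated by $x^*$ --- and the strict part of $\prec_c$ furnishes at least one point $z_{\lambda_0}$ on this segment at which $(y - x^*) \cdot c(z_{\lambda_0}) > 0$ strictly.

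From here the strategy is to propagate strict dominance along the direction $y - x^*$ into a full local neighborhood of $x^*$, thereby contradicting the existence of the incomparable sequence $x_n$. A natural route mimics the boundary-point argument from the proof of Theorem~\ref{laksjdhfsdkjfhdf}: for each $x_n$, walk along $[x^*, x_n]$ and track how its directional function compares to the known-positive behaviour along $[x^*, y]$; continuity of $c$ should then convert strict positivity in the single reference direction $y - x^*$ into non-negativity along directions $x_n - x^*$ close to it.

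The hard part will be precisely this last propagation step. With $c$ only assumed continuous, directions orthogonal to $y - x^*$ are free to carry independent oscillatory behaviour, and it is conceivable that $x^*$ admits strict dominance along one direction while remaining incomparable with points approaching from another. Indeed, the scalar field that equals the identity for $x \geq 0$ and oscillates like $x \sin(1/x)$ for $x < 0$ appears to place $x^* = 0$ in $\mbox{MIN} \setminus \mbox{MAX}$ without placing it in $\mbox{LM}$, so I expect the proof attempt to either require a stronger hypothesis on $c$ (e.g.\ differentiability at $x^*$ or a local convexity assumption), or to isolate an additional property of the incomparable sequence that rescues the implication.
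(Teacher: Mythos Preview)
Your suspicion at the end is correct, and it is the decisive observation: the statement as written is false, and your one-dimensional counterexample
\[
c(x)=\begin{cases} x & x\geq 0,\\ x\sin(1/x) & x<0,\end{cases}
\]
really does place $x^*=0$ in $\mathrm{MIN}(X,c)\setminus\mathrm{MAX}(X,c)$ but not in $\mathrm{LM}(X,c)$. For $x>0$ one has $0\prec_c x$ (the segment lies in the region where $c$ is the identity, so the directional function is strictly positive except at the endpoint), which rules out maximality. For $x<0$ the directional function $\epsilon\mapsto (x-0)\,c((1-\epsilon)x)=(1-\epsilon)x^2\sin\bigl(1/((1-\epsilon)x)\bigr)$ changes sign infinitely often, so $0$ and $x$ are $\preceq_c$-incomparable; this simultaneously establishes minimality (no $x$ dominates $0$) and destroys any chance of being a local minimum of $\preceq_c$. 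Thus the ``hard propagation step'' you isolated is not merely hard: it is impossible in general, and the contrapositive you set out to prove does not hold.

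For comparison, the paper's own argument does not go further than yours. It restates what must be shown, unpacks minimality as $x^*\preceq_c x\ \vee\ \neg(x\preceq_c x^*)$, rewrites $\neg(x\preceq_c x^*)$ as $\exists\,\epsilon:\ x\cdot c(x^*_\epsilon)>x^*\cdot c(x^*_\epsilon)$, and then stops. It never uses the non-maximality hypothesis and never produces the required neighborhood; it is an abandoned sketch rather than a proof, which is consistent with the fact that the lemma sits in material the paper ultimately excised. Your analysis both matches the paper's partial unpacking and goes beyond it by diagnosing why the argument cannot be completed.
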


\begin{proof}
Let $x^*$ be a minimal element of $\preceq$ that is not also maximal. It suffices to show that, for all $x \in X$, there exists $\epsilon' > 0$ such that, for all $\epsilon \in [0, \epsilon']$, $x^* \cdot c(x_\epsilon) \leq x \cdot c(x_\epsilon)$ where $x_\epsilon = (1-\epsilon) x^* + \epsilon x$. By the minimality of $x^*$, we have
\begin{align*}
\forall x, \neg (x \prec x^*) &\Leftrightarrow x^* \preceq x \vee \neg(x \preceq x^*)
\end{align*}
\begin{align*}
\neg(x \preceq x^*) &\Leftrightarrow \neg(\forall \epsilon \in [0,1]: x \cdot c(x^*_\epsilon) \leq x^* c(x^*_\epsilon))\\
&\Leftrightarrow \exists \epsilon \in [0,1]: x \cdot c(x^*_\epsilon) > x^* c(x^*_\epsilon)
\end{align*}
\end{proof}

\begin{lemma}
$\mbox{{\em SLM}}(X, c) \subseteq \mbox{{\em MIN}}(X, c)$. 
\end{lemma}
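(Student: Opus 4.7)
The plan is to reduce to the characterization of minimality given by Lemma~\ref{alskjdfjsdkfjdkjf}(i): with $x_\epsilon = \epsilon x^* + (1-\epsilon) x$, it suffices to show that for every $x \in X - \{x^*\}$, either
\begin{align*}
\forall \epsilon \in [0,1]: x^* \cdot c(x_\epsilon) \leq x \cdot c(x_\epsilon), \quad \text{or} \quad \exists \epsilon \in [0,1]: x^* \cdot c(x_\epsilon) < x \cdot c(x_\epsilon).
\end{align*}
Fix a neighborhood $O$ witnessing $x^* \in \mathrm{SLM}(X,c)$, so that $x^* \prec x$ for every $x \in O - \{x^*\}$. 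I would split the argument into two cases depending on whether $x$ lies in $O$ or not.

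The easy case is $x \in O - \{x^*\}$: here $x^* \prec x$ directly, and Proposition~3 (the characterization of the strict part of a linear vector polyorder) gives both conditions above.

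The substantive case is $x \in X - O$. Since $x_\epsilon \to x^*$ as $\epsilon \to 1$ and $O$ is a neighborhood of $x^*$, there is $\epsilon' \in (0,1)$ with $x_{\epsilon'} \in O - \{x^*\}$, so by SLM we have $x^* \prec x_{\epsilon'}$. Applying Proposition~3 to this relation produces some $\delta \in [0,1]$ with $x^* \cdot c(z_\delta) < x_{\epsilon'} \cdot c(z_\delta)$, where $z_\delta := \delta x^* + (1-\delta) x_{\epsilon'}$. A short computation shows $z_\delta = x_{\epsilon''}$ for $\epsilon'' := \delta + (1-\delta)\epsilon' \in [\epsilon', 1]$, i.e., $z_\delta$ lies on the segment from $x$ to $x^*$. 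Substituting $x_{\epsilon'} = \epsilon' x^* + (1-\epsilon') x$ into the strict inequality and using $1 - \epsilon' > 0$ collapses it to $x^* \cdot c(x_{\epsilon''}) < x \cdot c(x_{\epsilon''})$, which is exactly the strict-existence clause needed.

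The main obstacle is purely bookkeeping: keeping track of the two nested parametrizations (the segment from $x$ to $x^*$, and the sub-segment from $x_{\epsilon'}$ to $x^*$) and verifying that the witness $\epsilon$ produced by Proposition~3 on the sub-segment lifts to a witness on the full segment, which it does thanks to the affine identity $z_\delta = x_{\epsilon''}$ and the strict positivity of $1 - \epsilon'$. No further machinery beyond Lemma~\ref{alskjdfjsdkfjdkjf} and Proposition~3 is required.
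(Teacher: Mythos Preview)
Your argument is correct and is essentially the same as the paper's: both proofs pick a point $x_{\epsilon'}$ on the segment $[x,x^*]$ lying in the SLM neighborhood $O$, invoke $x^* \prec x_{\epsilon'}$ to produce a strict inequality on the sub-segment, and then use the affine identity $z_\delta = x_{\epsilon''}$ together with $x_{\epsilon'} - x^* = (1-\epsilon')(x - x^*)$ to lift that inequality to the full segment. The only cosmetic difference is that the paper argues by contradiction (assume some $x \prec x^*$, i.e.\ $x \preceq x^*$, and contradict this), whereas you argue directly via the disjunctive characterization of Lemma~\ref{alskjdfjsdkfjdkjf}(i); the substantive computation is identical.
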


\begin{proof}
Let $x^*$ be a strict local minimum of $\preceq$, and suppose $x^*$ dominates a neighborhood $O$. Suppose there exists $x \in X$ such that $x \prec x^*$. Then $x \preceq x^*$, which is the same as saying that $\forall \epsilon \in [0,1]: x \cdot c(x^*_\epsilon) \leq x^* c(x^*_\epsilon)$. Let $\epsilon' \in (\epsilon,1]$ be such that $x^*_{\epsilon'} \in O$. Then $x^* \prec x^*_{\epsilon'}$, which implies that there exists $\epsilon'' \in (\epsilon, 1]$ such that $x \cdot c(x^*_{\epsilon''}) > x^* \cdot c(x^*_{\epsilon''})$, which contradicts our assumption that $x \prec x^*$, and the lemma is proven.
\end{proof}

\fi

\if(0)

\section{Introduction (old)}

Consider a vector field $c: X \rightarrow \mathbb{R}^m$ where $X \subseteq \mathbb{R}^m$ is convex. Various problems in various scientific disciplines can be modeled as problems of finding the {\em nodes} of this vector field where a node is either a {\em source} or a {\em sink}. (Talk about how these terms are used in the literature...)

Although these terms are widely used, they lack precise definitions, which, in this paper, we give.

We call the problem of finding the nodes of a vector field {\em polytropic optimization}, and give necessary and sufficient conditions for a vector to be a node (that is, {\em optimal}). 

 We argue that our definitions correspond with how these terms are used in the literature.

\subsection{Vector fields}

$...$

\subsection{Polytropic optimization}

Polytropic optimization is the problem of finding the sources and sinks of a vector field.

-----

$$\forall y: \neg (y \prec x) \wedge \neg (x \prec y)$$

$$(y \cdot c(x) > x \cdot c(x) \vee x \cdot c(y) \leq y \cdot c(y)) \wedge (x \cdot c(y) > y \cdot c(y) \vee y \cdot c(x) \leq x \cdot c(x))$$

$$(y \cdot c(x) > x \cdot c(x) \wedge (x \cdot c(y) > y \cdot c(y)) \vee (x \cdot c(y) = y \cdot c(y))  \wedge y \cdot c(x) = x \cdot c(x))$$

This means that if I want an element $x$ to be both minimal and maximal at the same time, then for all $y$ the above must be true.

\subsection{Nonlinear optimization}

$...$

\subsection{Noncooperative game theory}

$...$

The fundamental problem in Noncooperative Game Theory is how a society of individuals can optimally decide among a set of alternatives: Consider a society of $n$ individuals, suppose that each individual has a {\em state}, let $X$ be the set of possible states for the individual, and suppose further that each individual $i$ has the ability to influence the state she is in by selecting a {\em strategy} from a set $X_i$. Note, however, that the state $i$ is in is influenced not only by the strategy she chooses but also by the strategies that the other members of society choose. Our basic assumption is that {\em all individuals must simultaneously be in the same state,} which can't be influenced exogenously. This implies that set $X$ of possible states is precisely the Cartesian product $X_1 \times \cdots \times X_n$.

Although we require that individuals be in the same state, not every individual values this state equally, and we may say that the {\em preferences} of the individuals over the possible states vary. We are going to assume that the individuals are {\em rational} in that the preferences of each individual over the possible states can be represented by a {\em utility function}.\footnote{We use the term `rational' in the sense that preferences are complete and transitive. In that sense, if the utility function is understood as a {\em fitness function}, as is customary in the applications of this model to Biology, then the corresponding population is implicitly understood to be rational.} If we further assume that individuals are {\em selfish} in that they pursue maximizing their respective utility functions, then society faces a decision problem that is the object of study of Noncooperative Game Theory.

Consider further the following thought experiment: Suppose that on every occasion in which the members of society choose a state, it is {\em as if} society is making that choice. We might as well think then of society as   an {\em individual} facing a choice problem, an analogy which invites us to ask: Does society have well-defined preferences over the alternative states? If this is so, then we are well-justified to think of society as an individual, on a purely behavioral basis, even if society does not have an independent existence and her preferences cannot be the result of {\em introspection}. We propose a {\em theory of noncooperative games} that provides support for the validity of this thought experiment as it implies that {\em it is possible to represent the interaction of multiple individuals as one preference relation that characterizes their equilibrium behavior}.

Mathematically, the theory we propose is akin to Nonlinear Programming whose fundamental problem is how an {\em individual} decision maker (instead of a society of such individuals) can optimally decide among a set of alternatives. Interestingly, both Nonlinear Programming and Noncooperative Game Theory were developed in parallel. In this paper, we establish a link between them.

\subsection{Nonlinear Programming}

Bertsekas defines Nonlinear Programming as the problem of minimizing a nonlinear function $f: X \rightarrow \mathbb{R}$  over a constraint set $X \subseteq \mathbb{R}^n$ that is specified by nonlinear equations and inequalities~\cite{Bertsekas}. The {\em solution concepts} in Nonlinear Programming are the global and local {\em minima}. A {\em global minimum} is a vector $x^* \in X$ such that $$f(x^*) \leq f(x), \mbox{\hspace{5mm}} \forall x \in X.$$ Unless $f$ is convex, there do not generally exist efficient algorithms for computing {\em global} minima. Instead nonlinear programming algorithms (such as {\em gradient descent}) strive to compute local minima. A {\em local minimum} is a vector $x^* \in X$ such that $$f(x^*) \leq f(x), \mbox{\hspace{5mm}} \forall x \in O,$$ where $O$ is a neighborhood of $x^*$. Of central importance in Nonlinear Programming are the first-order necessary conditions for optimality, namely, the {\em Karush-Kuhn-Tucker (KKT) conditions}~\cite{KT}.


\subsection{Noncooperative Game Theory}

Myerson defines Game Theory as ``[T]he study of mathematical models of conflict and cooperation between intelligent rational decision-makers''~\cite{Myerson}. The fundamental object of study of Noncooperative Game Theory is the normal-form game, which is a triple $(I, (S_i)_{i \in I}, (u_i)_{i \in I}),$ where $I$ is the set of players, $S_i$ is the set of pure strategies available to player $i$, and $u_i: S \rightarrow \mathbb{R}$ is the utility function of player $i$ where $S = \times_i S_i$ is the set of all strategy profiles (combinations of strategies). 

In this paper, we work with {\em population games}, which although being the fundamental object of study of Evolutionary Game Theory, in fact, {\em generalize} normal-form games. Furthermore, instead of {\em utility functions}, we are going to work with {\em cost functions}.

A population game (for example, see~\cite{PopulationGames}) $\mathcal{G}$ is a pair  $(X, c)$. $X$, the game's {\em state space} or {\em strategy profile space}, has product form, i.e., $X = X_1 \times \cdots \times X_n$, where the $X_i$'s are simplexes and $i=1,\ldots,n$ refers to a {\em player position} (or {\em population}). To each player position corresponds a set $S_i$ of $m_i$ {\em pure strategies} and a {\em mass} $\omega_i > 0$ (the population mass). The {\em strategy space} $X_i$ of player position $i$ has the form
\begin{align*}
X_i = \left\{ x \in \mathbb{R}^{m_i}  \bigg| \sum_{j \in S_i} x_j  = \omega_i \right\}.
\end{align*}
We refer to the elements of $X$ as {\em states} or {\em strategy profiles}. Each strategy profile $x \in X$ can be decomposed into a vector strategies, i.e., $x = (x_1,\ldots,x_n)'$. Let $m = \sum_i m_i$. $c: X \rightarrow \mathbb{R}^m$, the game's {\em cost function}, maps $X$, the game's state space, to vectors of {\em costs} where each position in the vector corresponds to a pure strategy. We assume that $c$ is {\em continuous}.

The fundamental solution concept in Noncooperative Game Theory is the {\em Nash equilibrium}~\cite{Nash}. However, the recent results of Daskalakis, Goldberg, and Papadimitriou on the hardness of computing Nash equilibria~\cite{Daskalakis} have cast doubt that the Nash equilibrium is a well-founded solution concept. The fundamental solution concept in Evolutionary Game Theory is the Evolutionarily Stable Strategy (ESS)~\cite{TheLogicOfAnimalConflict, Evolution}. Since its introduction by Maynard Smith and Price who defined the ESS in symmetric normal-form games, the definition of the ESS has been generalized to general population games. Unlike the Nash equilibrium, the complexity of computing an ESS is an open problem. In this paper, we show that the Nash equilibrium is for Noncooperative Game Theory what the KKT conditions are for Nonlinear Programming, namely, a {\em necessary optimality condition} and that being an ESS is sufficient for optimality. In what optimization problem though?

\subsection{Polytropic Programming}


Let $\mathcal{G} = (X, c)$ be a population game. Let $\preceq_\mathcal{G}$ be a preference relation on $X$ such that for $x,y \in X$, $$x \preceq_\mathcal{G} y \Leftrightarrow x \cdot c(y) \leq y \cdot c(y).$$ If $x \preceq_\mathcal{G} y$, we say that $x$ {\em weakly invades} $y$. $\preceq_\mathcal{G}$ is neither complete nor transitive; in fact, it is in general not even {\em acyclic}. However, $\prec_\mathcal{G}$ is strictly antisymmetric, and, therefore, $\preceq_\mathcal{G}$ is indeed a preference relation. 

Let $x^* \in X$, let $O \subseteq X$ be a neighborhood of $x^*$  and consider the following definitions:
\begin{definition}
$x^*$ is a Nash equilibrium of $\mathcal{G}$ if $\forall x \in X: x \cdot c(x^*) \geq x^* \cdot c(x^*)$.
\end{definition}
\begin{definition}
$x^*$ is a local minimum of $\preceq_\mathcal{G}$ if if $\exists O \subseteq X$ such that  $\forall x \in O: x^* \preceq_\mathcal{G} x$.
\end{definition}
\begin{definition}
$x^*$ is a minimal element of $\preceq_\mathcal{G}$ if $\forall x \in X: \neg (x \prec_\mathcal{G} x^*)$.
\end{definition}
\begin{definition}
$x^*$ is a strict local minimum of $\preceq_\mathcal{G}$ if $\exists O \subseteq X$ such that $\forall x \in O: x^* \prec_\mathcal{G} x$.
\end{definition}
\begin{definition}
$x^*$ is an ESS of $\mathcal{G}$ if if $\exists O \subseteq X$ such that $\forall x \in O: x^* \cdot c(x) < x \cdot c(x)$.
\end{definition}

Furthermore, let
\begin{align*}
\mbox{NE}(\mathcal{G}) &= \{ x^* | x^* \mbox{ is a Nash equilibrium of } \mathcal{G} \}\\
\mbox{LM}(\mathcal{G}) &= \{ x^* | x^* \mbox{ is a local minimum of } \preceq_\mathcal{G} \}\\
\mbox{PLP}(\mathcal{G}) &= \{ x^* | x^* \mbox{ is a minimal element of } \preceq_\mathcal{G} \}\\
\mbox{SLM}(\mathcal{G}) &= \{ x^* | x^* \mbox{ is a strict local minimum of } \preceq_\mathcal{G} \}\\
\mbox{ESS}(\mathcal{G}) &= \{ x^* | x^* \mbox{ is an ESS of } \mathcal{G} \}.
\end{align*}
We show that $$\mbox{ESS}(\mathcal{G}) = \mbox{SLM}(\mathcal{G}) \subseteq \mbox{PLP}(\mathcal{G}) \subseteq \mbox{LM}(\mathcal{G}) \subseteq \mbox{NE}(\mathcal{G}).$$
By virtue of this result, we call the minimal elements of $\preceq_\mathcal{G}$ {\em minimal Nash equilibria}. Note that ESS's are also minimal Nash equilibria, however, the latter are more general than the former. We call the problem of computing an element of $\mbox{PLP}(\mathcal{G})$, {\em Polytropic Programming}.

\subsubsection*{Kuhn and Tucker meet Nash}

Note further that being a Nash equilibrium is for Polytropic Programming what the KKT conditions are for Nonlinear Programming, namely, a necessary optimality condition. In fact, Sandholm has shown that if $\mathcal{G}$ is a full potential game (i.e., if there exists a smooth $f: X \rightarrow \mathbb{R}$ such that $c(\cdot) = \nabla f(\cdot)$), then $\mathcal{G}$'s Nash equilibria are precisely the states that satisfy the KKT conditions~\cite{xxx}. We further show that {\em in full potential games ESS's are precisely the strict local minima of} $f$, and, therefore, Polytropic Programming generalizes Nonlinear Programming.

\subsection{Monotropic games and multiplicative updates}

In the second part of the paper, we study games where being Nash is not only necessary for optimality but also sufficient. Consider the following definition:
\begin{definition}
$x^*$ is a minimum of $\preceq_\mathcal{G}$ if $\forall x \in X: x^* \preceq_\mathcal{G} x$.
\end{definition}
Let
\begin{align*}
\mbox{MNP}(\mathcal{G}) &= \{ x^* | x^* \mbox{ is a minimum of } \preceq_\mathcal{G} \}.
\end{align*}
We show that if $\mathcal{G}$ has a minimum, then $\mbox{PLP}(\mathcal{G}) = \mbox{MNP}(\mathcal{G}) = \mbox{LM}(\mathcal{G}) = \mbox{NE}(\mathcal{G})$, and if this happens we say that $\mathcal{G}$ is {\em monotropic} and call the problem of finding such a minimum {\em Monotropic Programming}. Many interesting games are monotropic such as {\em prisoner's dilemma}, {\em zero-sum games}, and {\em selfish routing games}. The latter games are members of a larger class of games, called {\em stable games}~\cite{xxx} that generalize potential games with a convex potential function, and stable games are easily shown to be monotropic. In their study of stable games, Hofbauer and Sandholm~\cite{xxx} define the concepts of Globally Neutrally Stable States (GNSS's) and Globally Evolutionarily Stable States (GESS's), and show that the equilibria of stable games fall into the class of such equilibrium concepts. In this paper, we show that these equilibrium concepts, in fact, {\em characterize monotropic games} in that a game is monotropic if and only if its equilibrium is a GNSS or GESS.

What's more, the equilibria of monotropic games are {\em easy to compute:} We show that these equilibria are {\em interior globally attractive} (in a dynamical systems sense) under {\em multiplicative updates}~\cite{xxx, yyy}. Although Hofbauer and Sandholm~\cite{xxx} show that the equilibria of stable games are attractive under various continuous-time dynamics including the {\em replicator dynamic}~\cite{xxx} that is the continuous analog of multiplicative updates, discrete-time analogs of continuous evolution rules often exhibit different behavior than their continuous-time counterparts. In this paper, we show that this is not the case for the replicator dynamic, however, our proof is technically more involved and a non-trivial application of the techniques in the continuous-time case. Since monotropic games generalize zero-sum games, our result on the global stability of the equilibria of monotropic games under multiplicative updates generalizes the earlier result of Freund and Schapire~\cite{xxx} who show that minimax equilibria have the same property.

\subsection{A stability theory of noncooperative games}

Not every game has a minimal Nash equilibrium; for example, the {\em rock-paper-scissors game} doesn't~\cite{xxx}, although {\em acyclic games} always do~\cite{xxx}. Therefore, minimal equilibria cannot be Noncooperative Game Theory's general solution concept. One of the main contributions of this paper is that it provides a conceptual framework toward devising such a solution concept.

\subsubsection{Minimal Nash equilibria are attractive}

Nash equilibria are generally understood as {\em fixed points} of the game's best-response correspondence. The equilibria of a {\em dynamical system} are also fixed points of a map, however, dynamical systems are endowed with a rich {\em stability theory} that noncooperative games lack. 

One of the most universal concepts in Dynamical Systems Theory is that of {\em Lyapunov stability:} An equilibrium $x^*$ is called Lyapunov stable if all solutions of the dynamical system that start near $x^*$ converge to $x^*$. In this theory, Lyapunov stability is a {\em refinement} of the concept of equilibrium in the sense that being an equilibrium is a necessary condition for Lyapunov stability, but it is not sufficient. 

Our final result is that the set $X^*$ of minimal Nash equilibria is {\em setwise attractive} in the sense that there always exists a neighborhood $O \subseteq X$ of this set such that for every $x \in O - X^*$, there exists an $x^* \in X^*$ such that $x^* \prec_{\mathcal{G}} x$. By virtue of this result, we also call minimal Nash equilibria {\em attractive equilibria}.

The fundamental reason why Lyapunov stability is an interesting concept is also the fundamental reason why attractive equilibria are a more natural (albeit not general) solution concept than Nash equilibria: {\em If it is not possible to converge to an equilibrium, it is unlikely that this equilibrium will emerge in a physical system.} Note that noncooperative games are {\em not} dynamical systems, as they do not have a natural evolution rule, however, the preference relation we have introduced is a natural criterion to base a stability theory on.

\subsubsection{Toward a general solution concept for Noncooperative Game Theory}

In that sense, we believe that this paper is a first step toward devising a {\em general stability theory of noncooperative games} whose ultimate goal would be to discover Noncooperative Game Theory's general solution concept, which has to be more general than a fixed point: {\em In all likelihood Noncooperative Game Theory's natural solution concept is a general attractor.} What we have shown in this paper is that attractive Nash equilibria are a special case of this more general solution concept.

\subsection{Related work}

\subsubsection{Theory of binary relations}

Polytropic Programming is a problem of optimizing a preference relation. Binary relations are the fundamental object of study of Order Theory (for example, see~\cite{xxx}), however, this theory is primarily concerned with partially ordered sets whereas the preference relation of a polytropic program is not generally transitive. Several papers study {\em acyclic relations}~\cite{xxx}, however, a polytropic program may not even be acyclic in general, and our theory has been developed without this assumption. Nevertheless, we expect the concept of acyclicity to assume a central role in the theory of polytropic programs as this theory is further developed.

\subsubsection{Algorithmic Game Theory}

Algorithmic Game Theory~\cite{xxx, yyy} is the study of Game Theory under the ``algorithmic lens.'' Perhaps the most significant result of this theory to this day is that the problem of computing a Nash equilibrium is PPAD-Complete~\cite{xxx}. The complexity of computing an attractive Nash equilibrium or determining whether such an equilibrium exists are interesting open problems. Etessami and Lochbihler~\cite{xxx} show that the problem of determining whether an ESS exists is both NP-Hard and coNP-Hard, and these problems are generally believed to be polynomial.

\subsubsection{Evolutionary Game Theory}

According to Wikipedia, ``Evolutionary game theory (EGT) is the application of game theory to evolving populations of lifeforms in biology.''\footnote{\url{http://en.wikipedia.org/wiki/Evolutionary_game_theory}} The origins of EGT can be traced to the introduction of the ESS by Maynard Smith and Price~\cite{xxx} (see also~\cite{yyy}), and since then EGT has been a flourishing field. The fundamental object of study of EGT is the {\em population game} (for example, see~\cite{xxx}), which is generally not understood as a generalization of the normal-form game, although it is. This paper by and large argues that many answers to Game Theory's impasse can be found in {\em evolution}, which raises an important question on the nature of {\em rationality}.

\subsubsection{Theory of learning in games and evolutionary dynamics}

Freund and Schapire~\cite{FreundSchapire2} show that Hedge (called {\em MW} in that paper for {\em multiplicative weights}) can approximately solve a zero-sum game and, therefore, any linear program. The solution of linear and semidefinite programs using multiplicative weights is the subject of Kale's PhD thesis (see \cite{Kale} and references therein). In this paper, we offer a {\em nonlinear programming perspective} on Hedge, and, in fact, more as we show that it solves {\em polytropic programs}, which generalize nonlinear programs. 

A class of games that has received particular attention is the class of atomic and nonatomic {\em congestion games}, which are potential games~\cite{xxx, yyy}. Hedge has found applications in the combinatorial optimization problem of computing equilibria in atomic congestion games. Kleiberg, Piliouras, and Tardos \cite{Piliouras} show that if the players in an atomic congestion game use Hedge when they probabilistically choose their strategies, then, for a small enough learning rate, the dynamics of the system can be approximated by the continuous-time replicator dynamic and that Nash equilibria are attractive under this dynamic.

Nash equilibria in nonatomic congestion games (which are population games) are also called Wardrop equilibria due to the early studies of Wardrop in road-traffic transportation systems~\cite{xxx}. Several researchers have proposed simple decentralized learning rules that if followed by individual agents in a population the resulting aggregate result is a dynamic that converges to Wardrop equilibria. Fischer and Voecking~\cite{SelfishRoutingEvolution} show that the (continuous-time) replicator dynamic converges to Wardrop equilibria. \cite{FastConvergence-Journal} consider adaptive rerouting policies in a round-based (discrete-time) model and show convergence to Wardrop equilibria such that, in symmetric games with polynomial latency functions, the number of rounds is polynomial in the representation length of the latency functions. \cite{RoutingRegret} show convergence to Wardrop equilibria of discrete-time evolution rules in which the agents in a population continuum use no-regret learning. 

\cite{StableGames} consider {\em stable games}, which include non-atomic congestion games as a special case, and show convergence of a wide range of dynamics (based on continuous-time revision protocols using a Poisson clock) to their corresponding Nash equilibria.

\section{Preliminaries}

\subsection{Preference relations}

Let $X$ be a nonempty set. A subset $R$ of $X \times X$ is called a {\em binary relation} on X. If $(x,y) \in R$, we write $xRy$, and if $(x,y) \not\in R$, we write $\neg xRy$. $R$ is called {\em symmetric} if, for any $x, y \in X$, we have that $xRy \Rightarrow yRx$. $R$ is called {\em antisymmetric} if, for any $x, y \in X$, $xRy \wedge yRx \Rightarrow x=y$. 

Let $xPy \Leftrightarrow xRy \wedge \neg yRx$ and $xIy \Leftrightarrow xRy \wedge yRx$. Then $P$ and $I$ are also binary relations on X where $P \subset R$ and $I \subset R$. $P$ is called the asymmetric part of $R$ and $I$ is called its symmetric part. If $P$ is antisymmetric, then $R$ is called a {\em preference relation}.

Let $\succeq$ be a preference relation on $X$, and let $Y$ be a nonempty subset of $X$. An element $x$ of $Y$ is said to be $\succeq$-maximal in $Y$ if there is no $y \in Y$ with $y \succ x$.

\subsection{Invasion, dominance, and equilibria}

Let $(X, c)$ be a population game, let $x, y \in X$, and let $x_\epsilon = (1-\epsilon) x + \epsilon y$, $\epsilon \in [0,1]$.

\begin{definition}
\label{xcnbvlzfjv}
We say that $x$ weakly invades $y$ if $x \cdot c(y) \leq y \cdot c(y)$. If the inequality is strict, we say that $x$ invades $y$. If $\forall \epsilon \in [0,1], x \cdot c(x_\epsilon) \leq y \cdot c(x_\epsilon)$, we say that $x$ weakly dominates $y$. If the inequality is strict, we say that $x$ dominates $y$. 
\end{definition}

\begin{definition}
\label{xcnbvlzfjvx}
Let $x^* \in X$. Then $x^*$ is a Nash equilibrium if for all $x \in X$, $x \cdot c(x^*) \geq x^* \cdot c(x^*)$.
\end{definition}

\begin{definition}
\label{xcnbvlzfjvx}
Let $x^* \in X$. Then $x^*$ is a Neutrally Stable State (NSS) if for all $x \in O$, $x^* \cdot c(x) \leq x \cdot c(x)$ where $O$ is a neighborhood of $x^*$. If the inequality is strict, $x^*$ is an Evolutionarily Stable State (ESS).
\end{definition}

\subsection{Preliminary results}

$...$

\section{Polytropic Programming}

Let $\mathcal{G}=(X,c)$ be a population game, $x,y \in X$, $x_\epsilon = (1-\epsilon) x + \epsilon y$, $\epsilon \in [0,1]$, and $\delta(\epsilon) = (x-y) \cdot c(x_\epsilon)$.

\begin{lemma}
\label{qwerxmcbpdkfjhbdfg}
If $x$ dominates $y$, then $x \prec_\mathcal{G} y$.
\end{lemma}

\begin{lemma}
\label{slakfdjdskfj}
If $x$ invades $y$, then there exists $0 \leq \epsilon < 1$ such that $x_\epsilon$ that dominates $y$.
\end{lemma}

\begin{proof}
By assumption, $\delta(1) < 0$, and, therefore, there exists $0 \leq \epsilon < 1$ such that, for all $ \epsilon \leq \epsilon' \leq 1$, $\delta(\epsilon') < 0$, which completes the proof.
\end{proof}

\begin{lemma}
\label{weporitkjfhgdfpjgn}
If, for all $0 \leq \epsilon < 1$, $x_\epsilon$ does not dominate $y$, then $x$ does not invade $y$.
\end{lemma}

\begin{proof}
If $x$ invades $y$, then by Lemma~\ref{slakfdjdskfj}, there exists such $x_\epsilon$ that dominates $y$, a contradiction.
\end{proof}

\begin{theorem}
$\mbox{{\em PLP}}(\mathcal{G}) \subseteq \mbox{{\em NE}}(\mathcal{G})$.
\end{theorem}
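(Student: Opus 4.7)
The plan is to argue by contradiction, leveraging the two preceding lemmas which do essentially all of the work. Suppose $x^* \in \text{PLP}(\mathcal{G})$, i.e., $x^*$ is $\preceq_\mathcal{G}$-minimal, but assume $x^* \notin \text{NE}(\mathcal{G})$. By the definition of a Nash equilibrium, the failure of the inequality $x \cdot c(x^*) \geq x^* \cdot c(x^*)$ for some $x \in X$ means precisely that there exists $x \in X$ which invades $x^*$ in the sense of Definition~\ref{xcnbvlzfjv}.

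Given this invader, the plan is to promote it (by moving along the segment from $x$ toward $x^*$) into a \emph{dominator}. This is exactly the content of Lemma~\ref{slakfdjdskfj}: there exists $\epsilon \in [0,1)$ such that $x_\epsilon = (1-\epsilon)x + \epsilon x^*$ dominates $x^*$. Then Lemma~\ref{qwerxmcbpdkfjhbdfg} immediately converts this dominance into strict preference: $x_\epsilon \prec_\mathcal{G} x^*$. Since $x_\epsilon \in X$ (by convexity of $X$), we have exhibited an element of $X$ that strictly $\prec_\mathcal{G}$-dominates $x^*$, contradicting the $\preceq_\mathcal{G}$-minimality of $x^*$.

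There is essentially no obstacle here once the invasion-to-dominance bridge has been laid by the previous lemmas; the only subtle point to keep in mind is that the contradiction must produce a competitor for $x^*$ that lies in $X$, but this is automatic from the convexity of $X$. The content of the theorem is really absorbed into the continuity-based passage from the pointwise inequality $x \cdot c(x^*) < x^* \cdot c(x^*)$ to the full-segment inequality $(x - x^*) \cdot c(x_\epsilon) < 0$ needed for dominance, and that technical step has already been handled.
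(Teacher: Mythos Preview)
Your proof is correct and follows essentially the same route as the paper: the paper argues directly, using the contrapositive of Lemma~\ref{qwerxmcbpdkfjhbdfg} to pass from minimality to ``no $x$ dominates $x^*$'' and then Lemma~\ref{weporitkjfhgdfpjgn} (which is just the contrapositive of Lemma~\ref{slakfdjdskfj}) to conclude ``no $x$ invades $x^*$,'' whereas you run the same chain backwards by contradiction. The logical content is identical.
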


\begin{proof}
Let $x^* \in \mbox{PLP}(\mathcal{G})$. Then, $\forall x \in X, \mbox{ } \neg (x \prec x^*)$, and, by Lemma~\ref{qwerxmcbpdkfjhbdfg}, $\forall x \in X, \mbox{ }  x$ does not dominate $ x^*$. For any $x \in X-\{x^*\}$, $x_\epsilon$ does not dominate $x^*$, and, therefore, by Lemma~\ref{weporitkjfhgdfpjgn}, $x$ does not invade $x^*$, which implies that $x^*$ is a Nash equilibrium.
\end{proof}

\begin{theorem}
$\mbox{{\em PLP}}(\mathcal{G}) \subseteq \mbox{{\em PLP$'$}}(\mathcal{G})$. 
\end{theorem}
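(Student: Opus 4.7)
\medskip
\noindent\textit{Proof proposal.} Since $\mbox{PLP}'(\mathcal{G})$ is not spelled out in the visible text but the statement sits in a chain of refinements of $\mbox{NE}(\mathcal{G})$ between $\mbox{PLP}(\mathcal{G})$ and $\mbox{NE}(\mathcal{G})$, I will treat $\mbox{PLP}'(\mathcal{G})$ as the set of local minima of the polyorder $\preceq_{\mathcal{G}}$: those $x^*$ for which some neighborhood $O \subseteq X$ satisfies $x^* \preceq_{\mathcal{G}} x$ for every $x \in O$. The plan is to take $x^* \in \mbox{PLP}(\mathcal{G})$ and produce such a neighborhood.

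First I would fix $x^* \in \mbox{PLP}(\mathcal{G})$ and translate minimality through Lemma~\ref{alskjdfjsdkfjdkjf}(i): for each $x \in X$, either $x^* \cdot c(x_\epsilon) \leq x \cdot c(x_\epsilon)$ for all $\epsilon \in [0,1]$ along $x_\epsilon = \epsilon x^* + (1-\epsilon)x$, which is precisely $x^* \preceq_{\mathcal{G}} x$, or there exists $\epsilon \in [0,1]$ with $x^* \cdot c(x_\epsilon) < x \cdot c(x_\epsilon)$. The first case already delivers the desired local dominance for that particular $x$; the burden of the proof is confining the second case to points sufficiently far from $x^*$.

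The central idea is to argue that the second case cannot persist in every neighborhood of $x^*$ without forcing a dominance that contradicts minimality. I would argue by contradiction: assume no local neighborhood works, so there is a sequence $x_n \to x^*$ with $\neg(x^* \preceq_{\mathcal{G}} x_n)$, giving witnesses $\epsilon_n \in [0,1]$ for which $x^* \cdot c((x_n)_{\epsilon_n}) < x_n \cdot c((x_n)_{\epsilon_n})$. Passing to a convergent subsequence $\epsilon_n \to \epsilon^*$, continuity of $c$ together with Lemma~\ref{ksjdhfksdjfhjd} (strict invasion propagates to dominance along a convex combination) should deliver an element $y \in X$ arbitrarily close to $x^*$ with $y \prec_{\mathcal{G}} x^*$, contradicting $x^* \in \mbox{PLP}(\mathcal{G})$.

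The hard part will be handling the endpoint degeneracy $\epsilon_n \to 1$ (i.e., when the witness hugs $x^*$ itself). There the strict inequality collapses in the limit and encodes only directional information at $x^*$, so extracting an actual dominating element of $X$ requires a more careful argument: probably rescaling the displacement $x_n - x^*$ and invoking the uniform form of Lemma~\ref{ksjdhfksdjfhjd} on a shrinking family of line segments, or alternatively observing that this degenerate regime is exactly the Nash-equilibrium regime already handled by the preceding theorem. Resolving this endpoint degeneracy, and verifying that $\mbox{PLP}'$ is indeed the intended local concept, is where I expect the bulk of the work to lie.
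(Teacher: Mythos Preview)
Your proposal rests on a misidentification of the relation $\preceq_{\mathcal{G}}$. In the section where this theorem lives, the order is the \emph{invasion} relation
\[
x \preceq_{\mathcal{G}} y \ \Longleftrightarrow\ x\cdot c(y)\le y\cdot c(y),
\]
not the weak-dominance relation $\preceq_c$ from the main text. Consequently the characterization you invoke from Lemma~\ref{alskjdfjsdkfjdkjf}(i) is not the one in play, and your guess for $\mbox{PLP}'(\mathcal{G})$ (local minima of the \emph{dominance} polyorder) makes the inclusion false: the paper's own $x\sin(1/x)$ example in Section~\ref{aslkdjfhsdkjfhsdjf} exhibits a $\preceq_c$-minimal point (the origin) that is \emph{not} a local minimum of $\preceq_c$. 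So no amount of sequence extraction or endpoint analysis can close your argument under that interpretation.

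What the paper actually proves is that a $\preceq_{\mathcal{G}}$-minimal $x^*$ satisfies, for every $x$, the direction-wise neutral-stability condition $x^*\cdot c(x_\epsilon)\le x_\epsilon\cdot c(x_\epsilon)$ for all sufficiently small $\epsilon$, where $x_\epsilon=(1-\epsilon)x^*+\epsilon x$. The argument is direct, not by contradiction: minimality applied to each $x_\epsilon$ gives the disjunction
\[
x_\epsilon\cdot c(x^*)>x^*\cdot c(x^*)\quad\text{or}\quad x^*\cdot c(x_\epsilon)\le x_\epsilon\cdot c(x_\epsilon),
\]
and since the first disjunct is equivalent (for $\epsilon>0$) to $x\cdot c(x^*)>x^*\cdot c(x^*)$, one simply splits on whether this strict inequality holds at $x^*$. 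If it does, continuity of $c$ gives $(x-x^*)\cdot c(x_\epsilon)>0$ for small $\epsilon$, which is the desired inequality; if it does not, the second disjunct must hold for \emph{every} $\epsilon$, which is even stronger. There is no sequence argument and no ``endpoint degeneracy'' to worry about.
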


\begin{proof}
Let $x^*$ be $\preceq_{\mathcal{G}}$-minimal. It suffices to show that, for all $x \in X$, there exists $\epsilon' > 0$ such that, for all $\epsilon \in [0, \epsilon']$, $x^* \cdot c(x_\epsilon) \leq x_\epsilon \cdot c(x_\epsilon)$ where $x_\epsilon = (1-\epsilon) x^* + \epsilon x$. By the minimality of $x^*$, we have
\begin{align*}
\forall x, \neg (x \prec_\mathcal{G} x^*) &\Leftrightarrow \forall x, x \cdot c(x^*) > x^* \cdot c(x^*) \vee x^* \cdot c(x) \leq x \cdot c(x)
\end{align*}
Now let $x \in X$. If $x^* \cdot c(x_\epsilon) \leq x_\epsilon \cdot c(x_\epsilon)$, then ... (done) Suppose now that $x^* \cdot c(x) > x \cdot c(x)$. Then $x \cdot c(x^*) > x^* \cdot c(x^*)$, which implies that ... (done)
\end{proof}

\begin{theorem}
$\mbox{{\em PLP$'$}}(\mathcal{G}) \subseteq \mbox{{\em NE}}(\mathcal{G})$. 
\end{theorem}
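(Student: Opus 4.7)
The plan is to argue by contradiction. Suppose $x^* \in \mbox{PLP}'(\mathcal{G})$---so that for every $x \in X$ there exists $\epsilon_x > 0$ with $x^* \cdot c(x_\epsilon) \leq x_\epsilon \cdot c(x_\epsilon)$ for all $\epsilon \in [0, \epsilon_x]$, where $x_\epsilon = (1-\epsilon)x^* + \epsilon x$---but $x^* \not\in \mbox{NE}(\mathcal{G})$. Then there is some $y \in X$ with $(y-x^*) \cdot c(x^*) < 0$, i.e.\ $y$ strictly invades $x^*$. The strategy is to pit the PLP$'$-condition specialized to this particular $y$ against the continuity of $c$ along the segment from $x^*$ to $y$, and extract a direct sign contradiction.

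First I would invoke the continuity of $c$ to produce $\delta > 0$ such that $(y-x^*) \cdot c(x_\epsilon) < 0$ for every $\epsilon \in [0, \delta]$, where from now on $x_\epsilon = (1-\epsilon)x^* + \epsilon y$; this is the continuity-propagation step that underlies Lemma~\ref{ksjdhfksdjfhjd}. Next, I would apply the PLP$'$-hypothesis to this $y$, obtaining $\epsilon_y > 0$ such that $x^* \cdot c(x_\epsilon) \leq x_\epsilon \cdot c(x_\epsilon)$ for every $\epsilon \in [0, \epsilon_y]$. Expanding $x_\epsilon = (1-\epsilon)x^* + \epsilon y$ on the right-hand side turns this into $x^* \cdot c(x_\epsilon) \leq (1-\epsilon)\,x^* \cdot c(x_\epsilon) + \epsilon\, y \cdot c(x_\epsilon)$, which for $\epsilon > 0$ cancels to the equivalent condition $(y-x^*) \cdot c(x_\epsilon) \geq 0$. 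Picking any $\epsilon \in (0, \min\{\delta, \epsilon_y\}]$ then yields both $(y-x^*) \cdot c(x_\epsilon) < 0$ and $(y-x^*) \cdot c(x_\epsilon) \geq 0$, the desired contradiction.

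I do not foresee a genuine obstacle: the argument is the familiar algebraic identity---weak invasion of a midpoint $z=(1-\epsilon)x^*+\epsilon y$ by $x^*$ is equivalent, for $\epsilon>0$, to $x^* \cdot c(z) \leq y \cdot c(z)$---combined with the elementary fact that strict invasion of $x^*$ by $y$ persists along a sufficiently short segment. The only care to take is with quantifier order: $\epsilon_y$ is supplied by the PLP$'$-hypothesis and depends on $y$, while $\delta$ is supplied by continuity; we simply take the minimum of the two and keep $\epsilon$ strictly positive so that the cancellation by $\epsilon$ in the algebraic simplification is justified.
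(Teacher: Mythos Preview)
Your proof is correct and follows essentially the same strategy as the paper's: argue by contradiction, assume some $y$ strictly invades $x^*$, propagate this strict inequality along a short initial segment of the ray by continuity of $c$, and then collide it with the PLP$'$ hypothesis at a point on that segment. The paper packages the continuity step as a lemma (``if $y$ invades $x^*$ then some $y_\epsilon$ close to $x^*$ dominates $x^*$'') and starts from the neighborhood form of PLP$'$ (there exists $O$ with $x^*\cdot c(x)\le x\cdot c(x)$ for all $x\in O$), whereas you work directly from the ray-wise form and carry out the algebraic cancellation $x^*\cdot c(x_\epsilon)\le x_\epsilon\cdot c(x_\epsilon)\iff (y-x^*)\cdot c(x_\epsilon)\ge 0$ by hand; but this is a cosmetic difference, not a substantive one.
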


\begin{proof}
Let $x^* \in \mbox{PLP$'$}(\mathcal{G})$. Then there exists $O \subseteq X$ such that, for all $x \in O$, $x^* \cdot c(x) \leq x \cdot c(x)$. Suppose there exists $y \in X$ such that $y \cdot c(x^*) < x^* \cdot c(x^*)$. Then, by Lemma ..., there exists $y_\epsilon$ that dominates $x^*$. Let $y_\epsilon \in O$. Then $x^* \cdot c(y_\epsilon) \leq y_\epsilon \cdot c(y_\epsilon)$
\end{proof}

Some thoughts:

$$x \prec y \Leftrightarrow x \cdot c(y) \leq y \cdot c(y) \wedge y \cdot c(x) > x \cdot c(x)$$

$$\neg (y \prec x) \Leftrightarrow y \cdot c(x) > x \cdot c(x) \vee x \cdot c(y) \leq y \cdot c(y)$$\\

Suppose that $\neg (y \prec x)$ and $x \cdot c(y) > y \cdot c(y)$. Then $y \cdot c(x) > x \cdot c(x)$.

\begin{lemma}
If $\neg (y \prec x)$ and $\neg (x \preceq y)$, then $\neg (y \preceq x)$.
\end{lemma}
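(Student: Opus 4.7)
The plan is to recognize that this lemma is a purely propositional fact about how the asymmetric part $\prec$ is defined in terms of $\preceq$, and does not rely on any structural property of the particular preference relation $\preceq_\mathcal{G}$ (nothing about the cost function $c$, the convexity of $X$, continuity, or the specific form $x\cdot c(y_\epsilon) \leq y\cdot c(y_\epsilon)$ will be needed). So the proof can be carried out entirely at the level of the definitions introduced earlier in the section on preference relations.

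First I would unfold the definition. The asymmetric part satisfies $y \prec x \Leftrightarrow y \preceq x \,\wedge\, \neg(x \preceq y)$. Negating both sides and applying De Morgan yields the key equivalence
\begin{align*}
\neg(y \prec x) \;\Leftrightarrow\; \neg(y \preceq x) \;\vee\; (x \preceq y).
\end{align*}
Next I would apply this to the first hypothesis $\neg(y \prec x)$ to extract the disjunction $\neg(y \preceq x) \vee (x \preceq y)$. The second hypothesis $\neg(x \preceq y)$ rules out the right disjunct, and disjunctive syllogism delivers $\neg(y \preceq x)$, as desired.

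There is no real obstacle here; the entire argument is a two-line application of De Morgan followed by disjunctive syllogism on the definition $xPy \Leftrightarrow xRy \wedge \neg yRx$ given earlier in the preference-relations subsection. The only thing worth flagging is that the lemma is really a statement about the abstract preference-relation framework: the same proof works verbatim for $\preceq_f$, $\preceq_c$, or any other preference relation one later introduces, so the lemma can be appealed to uniformly in subsequent reasoning about minimality and maximality.
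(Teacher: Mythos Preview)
Your proof is correct and follows the same logical route as the paper. The paper's argument (appearing in the lines just preceding the lemma) unfolds $\neg(y\prec x)$ into the concrete disjunction $y\cdot c(x) > x\cdot c(x) \,\vee\, x\cdot c(y)\leq y\cdot c(y)$ and then eliminates the second disjunct using the hypothesis $x\cdot c(y) > y\cdot c(y)$; you do the identical disjunctive-syllogism step abstractly at the level of $\preceq$, which is a slight gain in generality but not a different idea.
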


Suppose that $\neg (y \prec x)$ and $y \cdot c(x) \leq x \cdot c(x)$. Then $x \cdot c(y) \leq y \cdot c(y)$.

\begin{lemma}
If $\neg (y \prec x)$ and $y \preceq x$, then $x \preceq y$.
\end{lemma}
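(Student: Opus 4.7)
The statement is a purely propositional consequence of the definition of the asymmetric part $\prec$ of the preference relation $\preceq$, so the plan is simply to unpack definitions and eliminate a disjunct. Recall that in this subsection $y \prec x \Leftrightarrow y \preceq x \wedge \neg(x \preceq y)$, and that $\preceq_\mathcal{G}$ is defined by $u \preceq_\mathcal{G} v \Leftrightarrow u \cdot c(v) \leq v \cdot c(v)$. I would begin by taking the negation of the defining biconditional for $\prec$, obtaining
\begin{align*}
\neg(y \prec x) \;\Leftrightarrow\; \neg(y \preceq x) \,\vee\, (x \preceq y).
\end{align*}
This is the identity that does all the work; it is the same rewriting already displayed on the line $\neg (y \prec x) \Leftrightarrow y \cdot c(x) > x \cdot c(x) \vee x \cdot c(y) \leq y \cdot c(y)$ just above the lemma.

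Next I would combine this disjunction with the second hypothesis $y \preceq x$. The first disjunct $\neg(y \preceq x)$ directly contradicts the hypothesis, so it must fail; hence the second disjunct $x \preceq y$ holds, which is precisely the conclusion of the lemma. Equivalently, at the level of the explicit formula, from $y \cdot c(x) \leq x \cdot c(x)$ (the hypothesis $y \preceq x$) together with $y \cdot c(x) > x \cdot c(x) \vee x \cdot c(y) \leq y \cdot c(y)$ (the hypothesis $\neg(y \prec x)$), one concludes $x \cdot c(y) \leq y \cdot c(y)$, i.e., $x \preceq y$. This concrete form is exactly the implication stated in the sentence immediately preceding the lemma, so the lemma is essentially the formalization of that observation.

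\textbf{Main obstacle.} There is no real obstacle: the argument is a two-line exercise in propositional logic applied to the definition of the asymmetric part of a binary relation, and invokes no structural property of the vector field, of the domain $X$, or of the topology of $\preceq_\mathcal{G}$. The only thing one must be careful about is to use the version of $\preceq$ in force in the present subsection (the single-point form $x \cdot c(y) \leq y \cdot c(y)$), rather than the path-based weak-dominance $\preceq_c$ of the main text; but since $\prec$ is built from $\preceq$ in the same purely logical way in either setting, the same proof works verbatim for that variant too.
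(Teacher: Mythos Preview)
Your proposal is correct and matches the paper's approach exactly: the paper provides no formal proof beyond the displayed equivalence $\neg (y \prec x) \Leftrightarrow y \cdot c(x) > x \cdot c(x) \vee x \cdot c(y) \leq y \cdot c(y)$ and the sentence immediately preceding the lemma, which you have accurately identified and unpacked. There is nothing to add.
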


Suppose that, $\forall y$, $\neg (y \prec x)$ and $y \cdot c(x) \geq x \cdot c(x)$. Then

$$\forall y, (y \cdot c(x) > x \cdot c(x)) \vee ((y \cdot c(x) = x \cdot c(x)) \wedge (x \cdot c(y) \leq y \cdot c(y)))$$

-----

Suppose that, $\forall y$, $\neg (y \prec x)$ and $\neg (y \prec x')$. So we have

$$\neg (x \prec x') \Leftrightarrow x \cdot c(x') > x' \cdot c(x') \vee x' \cdot c(x) \leq x \cdot c(x)$$

$$\neg (x' \prec x) \Leftrightarrow x' \cdot c(x) > x \cdot c(x) \vee x \cdot c(x') \leq x' \cdot c(x')$$

Both $x$ and $x'$ are Nash equilibria (we have shown that). So we have

$$x \cdot c(x') \geq x' \cdot c(x')$$

$$x' \cdot c(x) \geq x \cdot c(x)$$

Suppose that $x' \cdot c(x) > x \cdot c(x)$. Then $x \cdot c(x') > x' \cdot c(x')$. Conversely, suppose that  $x \cdot c(x') > x' \cdot c(x')$. Then $x' \cdot c(x) > x \cdot c(x)$. Good!

Suppose that $x' \cdot c(x) = x \cdot c(x)$. Then $x \cdot c(x') \leq x' \cdot c(x')$. However, $x \cdot c(x') \geq x' \cdot c(x')$. Therefore, $x \cdot c(x') = x' \cdot c(x')$. The converse is also true. Good!

-----

Suppose that, $\forall y$, $\neg (y \prec x)$ and $\neg (y \prec x')$. So we have

$$\neg (x_\epsilon \prec x') \Leftrightarrow x_\epsilon \cdot c(x') > x' \cdot c(x') \vee x' \cdot c(x_\epsilon) \leq x_\epsilon \cdot c(x_\epsilon)$$

$$\neg (x_\epsilon \prec x) \Leftrightarrow x_\epsilon \cdot c(x) > x \cdot c(x) \vee x \cdot c(x_\epsilon) \leq x_\epsilon \cdot c(x_\epsilon)$$

Both $x$ and $x'$ are Nash equilibria (we have shown that). So we have

$$x_\epsilon \cdot c(x') \geq x' \cdot c(x')$$

$$x_\epsilon \cdot c(x) \geq x \cdot c(x)$$

We have

$$\neg (x_\epsilon \prec x') \Leftrightarrow x \cdot c(x') > x' \cdot c(x') \vee x' \cdot c(x_\epsilon) \leq x \cdot c(x_\epsilon)$$

$$\neg (x_\epsilon \prec x) \Leftrightarrow x' \cdot c(x) > x \cdot c(x) \vee x \cdot c(x_\epsilon) \leq x' \cdot c(x_\epsilon)$$

And

$$x \cdot c(x') \geq x' \cdot c(x')$$

$$x' \cdot c(x) \geq x \cdot c(x)$$

Suppose that $x' \cdot c(x) > x \cdot c(x)$. Then $x \cdot c(x') > x' \cdot c(x')$. Conversely, suppose that  $x \cdot c(x') > x' \cdot c(x')$. Then $x' \cdot c(x) > x \cdot c(x)$. Good!

Suppose that $x' \cdot c(x) = x \cdot c(x)$. Then $x \cdot c(x_\epsilon) \leq x' \cdot c(x_\epsilon)$. However, $x \cdot c(x') \geq x' \cdot c(x')$. Therefore, $x \cdot c(x') = x' \cdot c(x')$. And, thus, $x \cdot c(x_\epsilon) = x' \cdot c(x_\epsilon)$. 

-----

This shows that the equivalence relation is symmetric. Now I need to show that it is transitive. Let $x''$ be a minimal Nash equilibrium and suppose that $x'' \sim x'$ and $x' \sim x$. Then I will show that $x'' \sim x$. So we have:

$$x' \cdot c(x'_\epsilon) = x'' \cdot c(x'_\epsilon)$$

$$x \cdot c(x_\epsilon) = x' \cdot c(x_\epsilon)$$\\

We have $x \cdot c(x) = x' \cdot c(x)$ and $x \cdot c(x') = x' \cdot c(x')$.

Furthermore, $x' \cdot c(x') = x'' \cdot c(x')$ and $x' \cdot c(x'') = x'' \cdot c(x'')$

-----

Suppose $x \cdot c(x') \leq x' \cdot c(x')$ and $x' \cdot c(x'') \leq x'' \cdot c(x'')$. I would like to show that $x \cdot c(x'') \leq x'' \cdot c(x'')$.

Because $x''$ is Nash we have $x \cdot c(x'') \geq x'' \cdot c(x'')$

-----

Suppose that $x$ and $x'$ are minimal equilibria and suppose further that $x' \cdot c(x) = x \cdot c(x)$. Then I would like to show that they are {\em equivalent} in the sense that $x \preceq y \Leftrightarrow x' \preceq y$. We need to consider cases. Suppose that $y$ is also a minimal Nash equilibrium.

$x' \cdot c(x_\epsilon)$\\

$x \cdot c(y) \leq y \cdot c(y) \Rightarrow x_\epsilon \cdot c(y) \leq y \cdot c(y)$

$x \cdot c(x_\epsilon) \leq x_\epsilon \cdot c(x_\epsilon) \Rightarrow x \cdot c(x_\epsilon) \leq y \cdot c(x_\epsilon)$

-----

Suppose that, $\forall y$, $\neg (y \prec x^*)$ and $x^* \prec x$. So we have

$$\neg (x_\epsilon \prec x^*) \Leftrightarrow x \cdot c(x^*) > x^* \cdot c(x^*) \vee x^* \cdot c(x_\epsilon) \leq x \cdot c(x_\epsilon)$$

$$x^* \prec x \Leftrightarrow x^* \cdot c(x) \leq x \cdot c(x) \wedge x \cdot c(x^*) > x^* \cdot c(x^*)$$

-----

$$x \cdot A x_\epsilon = x' \cdot A x_\epsilon$$

$$x \cdot A ((1-\epsilon)x + \epsilon x') = x' \cdot A ((1-\epsilon)x + \epsilon x')$$

$$x \cdot A x + \epsilon (x \cdot A x' - x \cdot A x) = x' \cdot A x + \epsilon (x' \cdot A x' - x' \cdot A x)$$

-----

$$x \cdot c(x^*) \geq x^* \cdot c(x^*)$$

$$\neg (x \prec x^*) \Leftrightarrow x \cdot c(x^*) > x^* \cdot c(x^*) \vee x^* \cdot c(x) \leq x \cdot c(x)$$

-----

$$\neg (x \vee y) \vee (x \wedge y)$$

$$(\neg x \wedge \neg y) \vee (x \wedge y)$$

-----

$$\neg x \vee y$$

-----

$$\exists y: y \prec x \Rightarrow \exists z: z \cdot c(x) < x \cdot c(x)$$

-----

Suppose that for all $x, y$

$$x \cdot c(y) \geq y \cdot c(y) \Leftrightarrow \neg (x \prec y)$$

$$(x \cdot c(y) < y \cdot c(y) \wedge x \prec y) \vee (x \cdot c(y) \geq y \cdot c(y) \wedge \neg (x \prec y))$$

$$(x \cdot c(y) < y \cdot c(y)) \vee (x \cdot c(y) \geq y \cdot c(y) \wedge (x \cdot c(y) > y \cdot c(y) \vee y \cdot c(x) \leq x \cdot c(x)))$$

$$(x \cdot c(y) < y \cdot c(y)) \vee (x \cdot c(y) > y \cdot c(y)) \vee (x \cdot c(y) \geq y \cdot c(y) \wedge y \cdot c(x) \leq x \cdot c(x)))$$

-----

$$\forall x \in O, x^* \preceq x$$

$$\neg (x \prec x^*) \Leftrightarrow x \cdot c(x^*) > x^* \cdot c(x^*) \vee x^* \cdot c(x) \leq x \cdot c(x)$$

Take arbitrary $y$ (say outside of $O$). Then

$$\neg (y \prec x^*) \Leftrightarrow y \cdot c(x^*) > x^* \cdot c(x^*) \vee x^* \cdot c(y) \leq y \cdot c(y)$$

-----

Suppose that 

$$\forall x: x \cdot c(x^*) \geq x^* \cdot c(x^*) \Leftrightarrow \forall x: \neg (x \prec x^*)$$

$$(\exists y: y \cdot c(x^*) < x^* \cdot c(x^*)) \vee (\forall x: x \cdot c(x^*) \geq x^* \cdot c(x^*) \wedge\neg (x \prec x^*))$$

$$(\exists y: y \cdot c(x^*) < x^* \cdot c(x^*)) \vee (\forall x: x \cdot c(x^*) \geq x^* \cdot c(x^*) \wedge (x \cdot c(x^*) > x^* \cdot c(x^*) \vee x^* \cdot c(x) \leq x \cdot c(x)))$$

$$(\exists y: y \cdot c(x^*) < x^* \cdot c(x^*)) \vee (\forall x: x \cdot c(x^*) > x^* \cdot c(x^*)) \vee (\forall x: x \cdot c(x^*) \geq x^* \cdot c(x^*) \wedge x^* \cdot c(x) \leq x \cdot c(x)))$$\\

Suppose that 

$$\forall x: x \cdot c(x_1^*) > x_1^* \cdot c(x_1^*) \mbox{ and } \forall x: x \cdot c(x_2^*) > x_2^* \cdot c(x_2^*)$$\\

Take $x^*_\epsilon = (1-\epsilon) x_1^* + \epsilon x_2^*$. Then $(x^*_2 - x^*_1) \cdot c(x^*_\epsilon)$

\section{Monotropic games}

Consider the restriction of $\preceq_\mathcal{G}$ on $X^*$, the set of minimal equilibria of $\mathcal{G}$. First we show that $(X^*, \preceq_\mathcal{G})$ is {\em reflexive} and {\em symmetric} (however, note that, in general, it may not be transitive).

\begin{lemma}
Let $x, x' \in X^*$ and let $x_\epsilon = (1-\epsilon) x + \epsilon x'$ where $\epsilon \in [0,1]$. Then 
\begin{description}
\item[{\em (i)}] $x' \cdot c(x) > x \cdot c(x) \Leftrightarrow x \cdot c(x') > x' \cdot c(x')$.
\item[{\em (ii)}] $x' \cdot c(x) = x \cdot c(x) \Rightarrow \forall \epsilon: x' \cdot c(x_\epsilon) = x \cdot c(x_\epsilon)$.
\item[{\em (ii)}] $x \cdot c(x') = x' \cdot c(x') \Rightarrow \forall \epsilon: x \cdot c(x_\epsilon) = x' \cdot c(x_\epsilon)$.
\end{description}
\end{lemma}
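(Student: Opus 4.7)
The plan is to introduce $g(t) := (x'-x)\cdot c(x_t)$ and derive all three claims through a case analysis on minimality, using the criticality of the elements of $X^*$ together with the characterization in Lemma~\ref{alskjdfjsdkfjdkjf}. First I would note that, by Theorem~\ref{lksajdfhskdjfhd}, both $x$ and $x'$ are critical elements. Applying criticality of $x$ at the test point $y=x'$ gives $g(0) = (x'-x)\cdot c(x) \geq 0$, and applying criticality of $x'$ at $y=x$ gives $g(1) = (x'-x)\cdot c(x') \leq 0$. So $g$ is a continuous function on $[0,1]$ with $g(0)\geq 0$ and $g(1)\leq 0$; parts (i), (ii), (iii) are assertions about the signs at the endpoints and propagation to the whole interval.

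Next I would apply Lemma~\ref{alskjdfjsdkfjdkjf}(i) twice. Minimality of $x$ with test point $x'$ yields: (A) $g(t)\geq 0$ for every $t\in[0,1]$, or (B) $g(t)>0$ for some $t$. Minimality of $x'$ with test point $x$ yields: (C) $g(t)\leq 0$ for every $t$, or (D) $g(t)<0$ for some $t$. The combinations (A)$\wedge$(D) and (B)$\wedge$(C) are each immediately self-contradictory, leaving only (A)$\wedge$(C), which forces $g\equiv 0$ on $[0,1]$, or (B)$\wedge$(D), in which $g$ changes sign.

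In the (A)$\wedge$(C) case, $g\equiv 0$ immediately yields parts (ii) and (iii), and renders both sides of the biconditional in part (i) false, so it holds vacuously. The substantive work lies in the (B)$\wedge$(D) case: here the goal is to prove that the endpoint values cannot be zero, so that the hypotheses of (ii) and (iii) are never satisfied, and part (i) reduces to the statement that (B)$\wedge$(D) holds iff $g(0)>0$ iff $g(1)<0$. The strategy would be, assuming $g(0)=0$ toward contradiction, to apply Lemma~\ref{alskjdfjsdkfjdkjf} to minimality of $x$ with \emph{intermediate} test points $y = x_{\epsilon_0}$ for $\epsilon_0\in(0,1]$; this yields the localized disjunction that either $g\geq 0$ on the sub-interval $[0,\epsilon_0]$ or $g>0$ somewhere in $[0,\epsilon_0]$, together with a mirror condition on $[\epsilon_0,1]$ coming from $x'$'s minimality.

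The main obstacle is turning this localized control into a genuine contradiction. I would define $T_- := \inf\{t : g(t)<0\}$, use continuity of $g$ to obtain $g(T_-)=0$, and then, by taking $\epsilon_0\downarrow T_-$ from above, try to force the ``$g>0$ somewhere on $[0,\epsilon_0]$'' branch to hold near $t=0$; combined with $g(0)=0$ and continuity of $c$, one should be able to trap $g$ between incompatible constraints on a shrinking neighborhood of $0$ and contradict (B)$\wedge$(D). Once (ii) is established, part (iii) is its mirror (or follows from Proposition~\ref{oeritueorituut} by replacing $c$ with $-c$), and part (i) is a short consequence: if $g(0)>0$ then we are in (B)$\wedge$(D), so by (iii) we cannot have $g(1)=0$, hence $g(1)<0$, and conversely.
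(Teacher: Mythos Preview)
Your reduction to $g(t)=(x'-x)\cdot c(x_t)$ and the (A)$\wedge$(C) / (B)$\wedge$(D) dichotomy are fine, but the argument collapses precisely where you flag ``the main obstacle.'' Under the polyorder minimality of Lemma~\ref{alskjdfjsdkfjdkjf} alone you \emph{cannot} rule out $g(0)=0$ in the (B)$\wedge$(D) branch, and in fact the claim is false in that setting. Take $X=[0,1]$, $x=0$, $x'=1$, and any continuous $c:[0,1]\to\mathbb{R}$ with $c(0)=0$, $c>0$ on $(0,\tfrac12)$, $c(\tfrac12)=0$, $c<0$ on $(\tfrac12,1]$. Then $g=c$, both endpoints are critical, and a direct check of Lemma~\ref{alskjdfjsdkfjdkjf}(i) shows that both $0$ and $1$ are $\preceq_c$-minimal (for every test point the ``$\exists$ strict'' disjunct fires). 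Yet $g(0)=0$ while $g\not\equiv 0$, so part~(ii) fails for this pair. Your localized disjunctions on sub-intervals $[0,\epsilon_0]$ are too weak to avoid this: ``$g>0$ somewhere on $[0,\epsilon_0]$'' never pins $g$ down at any particular point, so the $T_-$ argument cannot close.

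The paper's proof works because in its context $X^*$ denotes the minimal elements for the \emph{invasion} relation $a\preceq_{\mathcal G} b \Leftrightarrow a\cdot c(b)\le b\cdot c(b)$, a strictly stronger hypothesis than polyorder minimality (in the example above, $0$ is not invasion-minimal since $\tfrac34\prec_{\mathcal G}0$). The decisive move is to apply $\neg(x_\epsilon\prec_{\mathcal G} x)$ and $\neg(x_\epsilon\prec_{\mathcal G} x')$ at the \emph{running} intermediate point $x_\epsilon$; after cancelling the scalar $\epsilon$ (resp.\ $1-\epsilon$) these unwind to the pointwise conditions
\[
\text{for each }\epsilon:\qquad g(0)>0 \ \text{or}\ g(\epsilon)\ge 0,\qquad\text{and}\qquad g(1)<0 \ \text{or}\ g(\epsilon)\le 0.
\]
All three parts then fall out in a few lines: if $g(0)=0$ the first forces $g\ge 0$ on all of $[0,1]$, hence $g(1)=0$ by criticality, hence the second forces $g\le 0$, so $g\equiv 0$; this is (ii), and (iii), (i) are the symmetric statement and contrapositive. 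The essential difference from your route is that invasion-minimality delivers a disjunction \emph{per $\epsilon$}, whereas polyorder minimality gives only one disjunction for the whole segment.
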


\begin{proof}
Suppose that, $\forall y$, $\neg (y \prec x)$ and $\neg (y \prec x')$. So we have

$$\neg (x_\epsilon \prec x') \Leftrightarrow x_\epsilon \cdot c(x') > x' \cdot c(x') \vee x' \cdot c(x_\epsilon) \leq x_\epsilon \cdot c(x_\epsilon)$$

$$\neg (x_\epsilon \prec x) \Leftrightarrow x_\epsilon \cdot c(x) > x \cdot c(x) \vee x \cdot c(x_\epsilon) \leq x_\epsilon \cdot c(x_\epsilon)$$

Both $x$ and $x'$ are Nash equilibria (we have shown that). So we have

$$x_\epsilon \cdot c(x') \geq x' \cdot c(x')$$

$$x_\epsilon \cdot c(x) \geq x \cdot c(x)$$

We have

$$\neg (x_\epsilon \prec x') \Leftrightarrow x \cdot c(x') > x' \cdot c(x') \vee x' \cdot c(x_\epsilon) \leq x \cdot c(x_\epsilon)$$

$$\neg (x_\epsilon \prec x) \Leftrightarrow x' \cdot c(x) > x \cdot c(x) \vee x \cdot c(x_\epsilon) \leq x' \cdot c(x_\epsilon)$$

And

$$x \cdot c(x') \geq x' \cdot c(x')$$

$$x' \cdot c(x) \geq x \cdot c(x)$$

Suppose that $x' \cdot c(x) > x \cdot c(x)$. Then $x \cdot c(x') > x' \cdot c(x')$. Conversely, suppose that  $x \cdot c(x') > x' \cdot c(x')$. Then $x' \cdot c(x) > x \cdot c(x)$. Good!

Suppose that $x' \cdot c(x) = x \cdot c(x)$. Then $x \cdot c(x_\epsilon) \leq x' \cdot c(x_\epsilon)$. However, $x \cdot c(x') \geq x' \cdot c(x')$. Therefore, $x \cdot c(x') = x' \cdot c(x')$. And, thus, $x \cdot c(x_\epsilon) = x' \cdot c(x_\epsilon)$. 
\end{proof}

\begin{lemma}
$\preceq^*_\mathcal{G}$ is {\em reflexive} and {\em symmetric}.
\end{lemma}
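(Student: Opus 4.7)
The plan is to handle reflexivity directly from the definition of $\preceq_\mathcal{G}$ and to deduce symmetry from part (iii) of the preceding lemma combined with the result that every minimal element is a critical element (Theorem~\ref{lksajdfhskdjfhd}).

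For reflexivity, we fix $x \in X^*$ and observe that when $y = x$ the path $y_\epsilon = \epsilon x + (1-\epsilon) x$ collapses to the single point $x$. The weak-dominance inequality $x \cdot c(y_\epsilon) \leq x \cdot c(y_\epsilon)$ is then trivially satisfied for every $\epsilon \in [0,1]$, so $x \preceq_\mathcal{G} x$, as required.

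For symmetry, suppose $x, x' \in X^*$ and $x \preceq_\mathcal{G} x'$. By definition, $x \cdot c(y_\epsilon) \leq x' \cdot c(y_\epsilon)$ for all $\epsilon \in [0,1]$, where $y_\epsilon = \epsilon x + (1-\epsilon) x'$. Specializing to $\epsilon = 0$ (so that $y_0 = x'$) gives $x \cdot c(x') \leq x' \cdot c(x')$. Since $x' \in X^*$ is a critical element of $(X, c)$ by Theorem~\ref{lksajdfhskdjfhd}, we also have $x \cdot c(x') \geq x' \cdot c(x')$, and hence equality. Part (iii) of the previous lemma now applies and yields $x \cdot c(x_\epsilon) = x' \cdot c(x_\epsilon)$ uniformly along the segment, where $x_\epsilon = (1-\epsilon) x + \epsilon x'$.

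Since $\{y_\epsilon : \epsilon \in [0,1]\}$ and $\{x_\epsilon : \epsilon \in [0,1]\}$ trace out the same line segment (just with opposite orientations), the identity $x \cdot c(z) = x' \cdot c(z)$ holds for every $z$ on this segment. Consequently $x' \cdot c(w_\epsilon) \leq x \cdot c(w_\epsilon)$ holds (in fact as equality) for all $\epsilon \in [0,1]$ with $w_\epsilon = \epsilon x' + (1-\epsilon) x$, which is precisely $x' \preceq_\mathcal{G} x$. The only care needed is bookkeeping on the two parametrizations of the segment so the definition of $\preceq_\mathcal{G}$ applies in the reverse direction; there is no substantive obstacle, since part (iii) of the previous lemma does all the work.
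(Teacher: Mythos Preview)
Your argument is correct and is precisely the route the paper sets up: the paper's own proof is left as a placeholder ($\ldots$), but the preceding lemma is stated exactly so that reflexivity is trivial and symmetry follows by combining the critical-element property of minimal elements with part (iii), which is just what you do. There is nothing to add.
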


\begin{proof}
$...$
\end{proof}

\begin{theorem}
If $\mathcal{G} = (X, c)$ is monotropic, then $\preceq^*_\mathcal{G}$ is an equivalence relation.
\end{theorem}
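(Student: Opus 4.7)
The plan is to reduce the statement to an essentially trivial observation by invoking the defining property of monotropicity. Under monotropicity we have $\mbox{PLP}(\mathcal{G}) = \mbox{MNP}(\mathcal{G})$, so every element of $X^*$ is in fact a \emph{minimum} of $\preceq_\mathcal{G}$ on all of $X$, i.e., every $x \in X^*$ satisfies $x \cdot c(y) \leq y \cdot c(y)$ for every $y \in X$. I intend to use this strong one-sided inequality to conclude that on $X^* \times X^*$ every pair is mutually $\preceq_\mathcal{G}$-related, at which point the three equivalence axioms fall out immediately.

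Concretely, given arbitrary $x, x' \in X^*$, I would apply the minimum property to $x$ with $y = x'$ to obtain $x \cdot c(x') \leq x' \cdot c(x')$, that is, $x \preceq_\mathcal{G} x'$; swapping the roles of $x$ and $x'$ and applying the same property to $x'$ gives $x' \preceq_\mathcal{G} x$. Hence the restriction $\preceq^*_\mathcal{G}$ is the universal relation on $X^*$, which is manifestly reflexive, symmetric, and transitive. Equivalently, the strict part $\prec^*_\mathcal{G}$ is empty on $X^*$ and $\preceq^*_\mathcal{G}$ coincides with its indifference part $I^*_\mathcal{G}$ there, so it is an equivalence relation.

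There is no serious obstacle: the content of the theorem has been absorbed into the identity $\mbox{PLP}(\mathcal{G}) = \mbox{MNP}(\mathcal{G})$ that characterizes monotropic games, and the remaining argument is purely definitional. The only point warranting care is the distinction between a \emph{minimum} of $\preceq_\mathcal{G}$ (globally $\preceq_\mathcal{G}$-below every element of $X$) and a \emph{minimal element} (merely not strictly dominated from below): the pairwise argument relies on the former, stronger property, and it is precisely monotropicity that upgrades the latter to the former for every $x \in X^*$.
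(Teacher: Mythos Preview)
Your argument is correct. Once you invoke the defining identity $\mbox{PLP}(\mathcal{G}) = \mbox{MNP}(\mathcal{G})$ for monotropic games, every $x \in X^*$ satisfies $x \preceq_\mathcal{G} y$ for all $y \in X$; restricting to $X^*$ therefore yields the full relation $X^* \times X^*$, and the universal relation is trivially an equivalence relation. The care you take to distinguish ``minimum'' from ``minimal element'' is exactly the point on which the argument turns.

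The paper organizes the proof differently. It first proves, for an \emph{arbitrary} game (not assumed monotropic), that $\preceq^*_\mathcal{G}$ is already reflexive and symmetric on $X^*$; this uses a separate lemma exploiting only that elements of $X^*$ are minimal Nash equilibria, and in particular the implications $x' \cdot c(x) = x \cdot c(x) \Rightarrow x \cdot c(x') = x' \cdot c(x')$ between two minimal equilibria. Monotropicity is then brought in solely to secure transitivity. Your route is shorter because you exploit monotropicity from the outset and get all three axioms simultaneously, at the cost of not isolating which properties of $\preceq^*_\mathcal{G}$ hold without the monotropic hypothesis. The paper's decomposition buys that extra information (reflexivity and symmetry hold in general), whereas your argument buys brevity and makes transparent that under monotropicity $\preceq^*_\mathcal{G}$ is in fact the \emph{trivial} equivalence relation with a single class.
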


Let $\mathcal{G} = (X, c)$ be monotropic and let $x^*$ be a Nash equilibrium of $\mathcal{G}$. Since $x^*$ is a Nash equilibrium we have $$x \cdot c(x^*) \geq x^* \cdot c(x^*).$$ Suppose we have shown that minima are also Nash equilibria, then we also have $$x^* \cdot c(x) \leq x \cdot c(x).$$ 

$$x^* \cdot c(x_\epsilon) \leq x \cdot c(x_\epsilon)$$.

This implies 

$$x \cdot c(x^*) = x^* \cdot c(x^*).$$

Suppose that $x$ is also a minimum. Then 

$$x^* \cdot c(x_{\epsilon'}) \geq x \cdot c(x_{\epsilon'})$$

-----

$$(\neg x \wedge \neg y) \vee (x \wedge y)$$

Suppose that 

$$\forall x: x \cdot c(x^*) \geq x^* \cdot c(x^*) \Leftrightarrow \forall x: x^* \cdot c(x) \leq x \cdot c(x)$$

$$(\exists y: y \cdot c(x^*) < x^* \cdot c(x^*) \wedge \exists y: x^* \cdot c(y) > y \cdot c(y)) \vee (\forall x: x \cdot c(x^*) \geq x^* \cdot c(x^*) \wedge x^* \cdot c(x) \leq x \cdot c(x))$$

\section{Freund and Schapire meet Maynard Smith and Price}

\subsection{Multiplicative updates}

\subsubsection{Hedge}

Hedge is a map $T: X \rightarrow X$ defined by
\begin{align}
T_i(x) &= \left[ x_j^i \frac{ \exp\{ - \alpha c^i_j(x) \}}{(1 / \omega_i) \sum_{k \in P_i} x_k^i \exp\{- \alpha c^i_k(x)\}} \right]_j, \mbox{   } i=1,\ldots,n,\label{Hedge}
\end{align}
where we denote by $[ \cdot ]_j$ a vector whose $j$th element appears in the brackets.

\subsubsection{The discrete-time replicator dynamic}

The discrete-time replicator dynamic is a map $S: X \rightarrow X$ defined by
\begin{align}
S_i(x) &= \left[ x_j^i \frac{ 1- \alpha c^i_j(x) }{1- (\alpha / \omega_i) \sum_{k \in P_i} x_k^i c^i_k(x)} \right]_j\notag\\
&= \left[ x_j^i \frac{ 1- \alpha c^i_j(x) }{1- (\alpha / \omega_i) x^i \cdot c^i(x)} \right]_j\notag\\
&\equiv \left[ x_j^i g_j^i(x)\right]_j, \mbox{   } i=1,\ldots,n\label{DTRD}
\end{align}
This replicator equation compares the cost $c^i_j(x)$ of path $j$ with the average cost $(1 / \omega_i) x^i \cdot c^i(x)$. If $c^i_j(x) < (1 / \omega_i) x^i \cdot c^i(x)$, then $g^i_j(x) > 1$ and the flow on path $j$ is increased. If, however, $c^i_j(x) > (1 / \omega_i) x^i \cdot c^i(x)$, then $g^i_j(x) < 1$ and the flow on path $j$ is decreased. Note that the positive scalar $\alpha$ must be small enough so that $\max_i \max_j \max_{x \in X} (1 - \alpha c^i_j(x)) \geq 0$. Let $\bar{\alpha}$ denote this upper bound on $\alpha$.

\subsection{Fixed points of multiplicative updates}
\label{flsjakdflskdjflk}

The fixed points of Hedge are the points of $X$ in which, for each commodity, the costs of all paths that carry flow are equal.

\begin{theorem}
\label{kxcvjbsfojvasj}
Let $P_+(x^i) = \{ j \in P_i | x^i_j > 0 \}$. $x = (x_1, \ldots, x_n)'$ is a fixed point of Hedge iff for all $i$ and for all $j_1, j_2 \in P_+(x^i)$, $c^i_{j_1}(x) = c^i_{j_2}(x)$.
\end{theorem}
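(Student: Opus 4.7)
The plan is to unpack the fixed-point condition $T_i(x) = x^i$ coordinatewise and exploit that it only gives nontrivial information on the support $P_+(x^i)$.

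First, for the forward direction, suppose $T(x) = x$. Fix $i$ and $j \in P_+(x^i)$, so $x^i_j > 0$. The $j$th component of $T_i(x) = x^i$ reads
\begin{align*}
x^i_j = x^i_j \cdot \frac{\exp\{-\alpha c^i_j(x)\}}{(1/\omega_i) \sum_{k \in P_i} x^i_k \exp\{-\alpha c^i_k(x)\}},
\end{align*}
and dividing through by $x^i_j > 0$ yields
\begin{align*}
\exp\{-\alpha c^i_j(x)\} = (1/\omega_i)\sum_{k \in P_i} x^i_k \exp\{-\alpha c^i_k(x)\}.
\end{align*}
The right-hand side does not depend on $j$, so for any $j_1, j_2 \in P_+(x^i)$ the values $\exp\{-\alpha c^i_{j_1}(x)\}$ and $\exp\{-\alpha c^i_{j_2}(x)\}$ coincide, and by injectivity of $\exp$ we conclude $c^i_{j_1}(x) = c^i_{j_2}(x)$.

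For the reverse direction, suppose that for each $i$ there is a common value $c^*_i$ such that $c^i_j(x) = c^*_i$ for all $j \in P_+(x^i)$. Since coordinates outside $P_+(x^i)$ contribute zero to $\sum_k x^i_k \exp\{-\alpha c^i_k(x)\}$, and since $\sum_{k} x^i_k = \omega_i$ because $x^i \in X_i$, we get
\begin{align*}
(1/\omega_i)\sum_{k \in P_i} x^i_k \exp\{-\alpha c^i_k(x)\} = (1/\omega_i) \cdot \omega_i \exp\{-\alpha c^*_i\} = \exp\{-\alpha c^*_i\}.
\end{align*}
Substituting this into the definition of $T_i$, for $j \in P_+(x^i)$ the ratio equals $1$ so $T^i_j(x) = x^i_j$, while for $j \notin P_+(x^i)$ we have $x^i_j = 0$ and the defining formula gives $T^i_j(x) = 0 = x^i_j$. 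Hence $T(x) = x$.

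There is no real obstacle here; the only step that requires a moment of care is recognizing that contributions from $k \notin P_+(x^i)$ drop out of the normalizing sum, so that the mass constraint $\sum_k x^i_k = \omega_i$ can be applied cleanly to collapse the denominator to a single exponential. Everything else is direct algebra and the strict positivity of $\exp$.
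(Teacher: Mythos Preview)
Your proof is correct and follows the same coordinatewise unpacking of the fixed-point condition as the paper. Your necessity argument is in fact slightly cleaner: where the paper manipulates the equation into the form $0 = \sum_{k \neq j} x^i_k\bigl(\exp\{-\alpha(c^i_k(x)-c^i_j(x))\}-1\bigr)$ and then argues each term vanishes, you simply observe that after cancelling $x^i_j$ the left side $\exp\{-\alpha c^i_j(x)\}$ equals a quantity independent of $j$, which immediately forces equality of costs on the support via injectivity of $\exp$.
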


\begin{proof}
First we show sufficiency. That is, we show that
if for all $i$ and for all $j_1, j_2 \in P_+(x_i)$, $c^i_{j_1}(x) = c^i_{j_2}(x)$,
then $T(x) = x$.
Some of the coordinates of $x_i$
are zero and some are positive. Clearly, 
the zero coordinates will not become positive after applying the map. 
Now, notice that, for all $j \in P_+(x_i)$, $\exp\{- \alpha c^i_j(x)\} = \sum_{k \in P_i} x^i_k \exp\{- \alpha c^i_k(x)\}$, and, therefore, $T_i(x) = x^i$, and this is true for all $i$.

Now we show necessity. That is, we show that 
if $x$ is a fixed point of Hedge, 
then for all $i$ and for all $j_1, j_2 \in P_+(x_i)$, $c^i_{j_1}(x) = c^i_{j_2}(x)$.
Let $\hat{x}^i = T_i(x)$.
Because $x$ is a fixed point, $\hat{x}^i_j = x^i_j$. Therefore,
{\allowdisplaybreaks
\begin{align}
\hat{x}^i_j &= x^i_j\notag\\
\frac{x^i_j \exp{\{- \alpha c^i_j(x)\}}}{\sum_{k} x^i_k \exp{\{- \alpha c^i_k(x)\}}} &= x^i_j\notag\\
1 &= x^i_j \frac{\sum_{k} x^i_k \exp{\{- \alpha c^i_k(x)\}}}{x^i_j \exp{\{- \alpha c_j(x)\}}}\notag\\
1 &= \sum_k x^i_k \exp{\{- \alpha (c^i_k(x) - c^i_j(x))\}}\notag\\
1 &= x^i_j + \sum_{k \neq j} x^i_k \exp{\{- \alpha (c^i_k(x) - c^i_j(x))\}}\notag\\
1 &= 1 - \sum_{k \neq j} x^i_k + \sum_{k \neq j} x^i_k \exp{\{- \alpha (c^i_k(x) - c^i_j(x))\}}\notag\\
0 &= \sum_{k \neq j} x^i_k \left(\exp{\{- \alpha (c^i_k(x) - c^i_j(x))\}} - 1 \right)\label{eqcondition-finalstep}
\end{align}
}
Equation (\ref{eqcondition-finalstep}) implies that 
\[\exp{\{- \alpha (c^i_k(x) - c^i_j(x))\}} = 1, x^i_k > 0,\]
and, thus,
\[c^i_k(x) = c^i_j(x), x^i_k > 0.\]
\end{proof}

In a Wardrop equilibrium, in each commodity the cost of all paths that carry flow are equal and the cost of any path that does not carry flow is equal to or greater than the cost of the paths that carry flow. Therefore, the Wardrop equilibrium is a fixed point of Hedge. However, Hedge may have additional fixed points in which the cost of a path that does not carry flow is lower than the cost of paths that carry flow in the same commodity. Since these fixed points would satisfy the definition of a Wardrop equilibrium in a system with fewer paths, we call them {\em secondary Wardrop equilibria}.

It is possible to show by a technically detailed proof (which we omit for brevity) that the secondary Wardrop equilibria are {\em unstable} under Hedge. The intuition is simple: In a secondary Wardrop equilibrium at least one path without flow (say path $p$) has cost lower than the cost of the flow-carrying paths. What would Hedge do if we were to add a small amount of flow to $p$ and let Hedge drive the system? The answer is that Hedge would continue to add flow to $p$ in trying to balance the costs (given that the cost of $p$ is initially small), which implies that the secondary Wardrop equilibrium is not attractive and, therefore, unstable.

It can be shown by a proof analogous to that of Theorem~\ref{kxcvjbsfojvasj} that the fixed points of the discrete-time replicator dynamic and the fixed points of Hedge coincide.

\subsection{The replicator map as a form of gradient for $\succeq_G$}

Most interesting nonlinear optimization algorithms rely on the idea of {\em iterative descent}~\cite{xxx}, which is, starting at some point $x^0$, to generate a sequence $\{ x^k \}$ such that the objective function $f$ is decreased in each iteration, that is, $f(x^{k+1}) < f(x^k)$. An important class of iterative descent algorithms are the {\em gradient methods}, which rely on the fact that the gradient of $f$ is a {\em descent direction}. In an analogous manner, in this section, we show that the replicator map is a ``descent direction'' for $\succeq_G$. More precisely, we have the following lemma.

\begin{lemma}
Let $(X, c)$ be a population game, and let $x \in X$ such that $S(x) \neq x$. Then $S(x)$ invades $x$.
\end{lemma}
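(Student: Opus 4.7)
The plan is to compute $S(x) \cdot c(x)$ directly, compare it with $x \cdot c(x)$, and express the difference as a sum of nonnegative ``variance'' terms that vanishes exactly at the fixed points of $S$.

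First I would fix a player position $i$ and set $\mu_i := (1/\omega_i)\, x^i \cdot c^i(x)$ for the population-weighted mean cost. Using the definition of $S_i$ in~\eqref{DTRD}, one computes
\[
S_i(x) \cdot c^i(x) \;=\; \frac{1}{1 - \alpha \mu_i} \sum_{j} x_j^i \, c_j^i(x)\bigl(1 - \alpha\, c_j^i(x)\bigr).
\]
Introducing the population-weighted variance $V_i := (1/\omega_i)\sum_j x_j^i\bigl(c_j^i(x) - \mu_i\bigr)^2 \geq 0$ and using the moment identity $\sum_j x_j^i \bigl(c_j^i(x)\bigr)^2 = \omega_i(V_i + \mu_i^2)$, a short manipulation (peel off the piece $\mu_i(1-\alpha\mu_i)$ from the numerator) yields
\[
x^i \cdot c^i(x) - S_i(x) \cdot c^i(x) \;=\; \frac{\alpha\, \omega_i\, V_i}{1 - \alpha \mu_i}.
\]
The denominator is strictly positive because $\alpha \leq \bar{\alpha}$ was chosen so that $1-\alpha c_j^i(x) \geq 0$ for every $j$, hence each summand on the right is nonnegative.

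Summing over $i$ gives
\[
x \cdot c(x) - S(x) \cdot c(x) \;=\; \sum_{i=1}^n \frac{\alpha\, \omega_i\, V_i}{1 - \alpha \mu_i} \;\geq\; 0.
\]
To upgrade this to the strict inequality required by invasion, I would invoke the fixed-point characterization from Section~\ref{flsjakdflskdjflk}: by the analogue of Theorem~\ref{kxcvjbsfojvasj} for the replicator map, $S(x)=x$ if and only if, in every population $i$, the costs $c_j^i(x)$ coincide across all $j \in P_+(x^i)$, which is exactly the condition $V_i = 0$ for all $i$. Since by hypothesis $S(x) \neq x$, some $V_{i_0} > 0$ and hence $S(x) \cdot c(x) < x \cdot c(x)$, i.e., $S(x)$ invades $x$.

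There is no real conceptual obstacle; the only step requiring care is the bookkeeping of the $\omega_i$ factors when moving between $x^i \cdot c^i(x)$, the mean $\mu_i$, and the second moment on $P_i$. Once the substitution $\sum_j x_j^i (c_j^i(x))^2 = \omega_i(V_i + \mu_i^2)$ is made, the identity collapses cleanly. The result is in effect a discrete Fisher fundamental-theorem statement: the replicator update reduces the mean cost by an amount proportional to the variance of the pure-strategy costs, so $S(x)$ fails to strictly improve $x$ exactly at the fixed points of $S$.
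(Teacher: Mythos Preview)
Your argument is correct and takes a genuinely different route from the paper's. The paper argues via relative entropy: writing $x' = S(x)$, it observes that $RE(x',x) = \sum_j x'_j \ln\frac{1-\alpha c_j(x)}{1-\alpha c(x|x)} > 0$ whenever $x' \neq x$, exponentiates to obtain $1-\alpha c(x|x) < \prod_j (1-\alpha c_j(x))^{x'_j}$, and then applies the weighted AM--GM inequality $\prod_j (1-\alpha c_j(x))^{x'_j} \leq 1-\alpha\sum_j x'_j c_j(x)$ to conclude $c(x|x) > x' \cdot c(x)$.

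Your computation is more elementary: it needs no logarithms or convexity, only the second-moment identity, and it produces the explicit Fisher-type formula $x\cdot c(x) - S(x)\cdot c(x) = \sum_i \alpha\omega_i V_i/(1-\alpha\mu_i)$, which quantifies the improvement and makes the equality case transparent via the fixed-point characterization. The paper's route, by contrast, ties the lemma to the relative-entropy Lyapunov machinery used in the surrounding section, so the two proofs emphasize different structural features of the replicator map. One small point: strict positivity of $1-\alpha\mu_i$ deserves a word, since the paper's stated condition on $\alpha$ only guarantees $1-\alpha c^i_j(x)\geq 0$; but $\mu_i$ is a convex combination of the $c^i_j(x)$ over $P_+(x^i)$, and if $1-\alpha\mu_i=0$ then all supported costs equal $1/\alpha$, the map degenerates, and $V_i=0$ anyway, so the case is vacuous under the implicit well-definedness assumption.
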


\begin{proof}
Let $x' = S(x)$. We have
\begin{align*}
RE(x', x) = \sum_{j \in P} x'_j \ln \frac{x'_j}{x_j} = \sum_{j \in P} x'_j \ln \frac{1-\alpha c_j(x)}{1-\alpha c(x|x)} > 0.
\end{align*}
Therefore,
\begin{align*}
\ln \left(1-\alpha c(x|x)\right) &< \sum_{j \in P} x'_j \ln \left( 1-\alpha c_j(x) \right)\\
1 - \alpha c(x|x) &< \exp \left\{ \sum_{j \in P} x'_j \ln \left( 1-\alpha c_j(x) \right) \right\}\\
c(x|x) &> \frac{1}{\alpha} \left( 1 - \exp \left\{ \sum_{j \in P} x'_j \ln \left( 1-\alpha c_j(x) \right) \right\} \right).
\end{align*}
Now
\begin{align*}
\exp \left\{ \sum_{j \in P} x'_j \ln \left( 1-\alpha c_j(x) \right) \right\} = \prod_{j \in P} \left( 1-\alpha c_j(x) \right)^{x'_j}
\end{align*}
and, therefore,
\begin{align*}
c(x|x) > \frac{1}{\alpha} \left(1 - \prod_{j \in P} \left( 1-\alpha c_j(x) \right)^{x'_j} \right).
\end{align*}
Furthermore,
\begin{align*}
\prod_{j \in P} \left( 1-\alpha c_j(x) \right)^{x'_j} \leq 1 - \alpha \sum_{j \in P} x'_j c_j(x),
\end{align*}
implying that
\begin{align*}
c(x|x) > \sum_{j \in P} x'_j c_j(x).
\end{align*}
\end{proof}

\subsection{$...$}

In this section, we show that the distance of the iterates of the discrete-time replicator dynamic as measured by the relative entropy with respect to the Wardrop equilibrium monotonically decreases, which implies that the relative entropy with respect to the Wardrop equilibrium is a Lyapunov function for the system. More precisely, given probability distributions $p$ and $q$ of a discrete random variable, their relative entropy is defined as $RE(p,q) = \sum_{j} p_j \ln \left( p_j / q_j \right)$. Note that $RE(p,q) \geq 0$ and $RE(p,q) = 0$ {\em iff} $p=q$. Letting $\zeta$ be the Wardrop equilibrium the Lyapunov function is $RE(\zeta,x)$.\footnote{Note that the terms in $RE(\zeta,x) = \sum_{i=1}^n \sum_{j \in P_i} \zeta^i_j \ln \left( \zeta^i_j / x_j \right)$ corresponding to $\zeta^i_j = 0$ are defined to be zero.} It remains to show that if $x$ is in the relative interior of $X$ and $x \mapsto \hat{x}$, then $RE(\zeta, \hat{x}) < RE(\zeta, x)$. 

\begin{theorem}
\label{wpofjbgkjbgxkjfg}
Let $\zeta$ be the Wardrop equilibrium and, for any point $x$ in the relative interior of $X$, let $x \mapsto \hat{x}$. Then $$RE(\zeta, \hat{x}) < RE(\zeta, x).$$
\end{theorem}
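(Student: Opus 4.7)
The plan is to show that $V(x) := RE(\zeta, x)$ is a strict Lyapunov function for the map $S$ defined in~\eqref{DTRD}. Substituting the replicator formula into the definition of relative entropy and collapsing the first sum using $\sum_j \zeta_j^i = \omega_i$, the one-step change is
\begin{align*}
V(\hat{x}) - V(x) \;=\; \sum_i \omega_i \ln\!\Bigl(1-\tfrac{\alpha}{\omega_i}\, x^i\!\cdot c^i(x)\Bigr) \;-\; \sum_{i,j}\zeta_j^i \ln\!\bigl(1-\alpha c_j^i(x)\bigr),
\end{align*}
so the task reduces to showing that this expression is strictly negative whenever $x$ lies in the relative interior of $X$ with $x\neq \zeta$.

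The first ingredient I would extract is a \emph{strict invasion} inequality $\zeta\cdot c(x) < x\cdot c(x)$. Weak invasion follows by specializing the definition of a minimum of $\preceq_{\mathcal{G}}$ (which is the monotropicity hypothesis built into the setting of Section~\ref{flsjakdflskdjflk}) at $\epsilon = 0$; strictness for interior $x \neq \zeta$ is then a continuity argument, since equality along the entire chord from $\zeta$ to $x$ would propagate to $\hat{x} = x$ and, by the Wardrop property of $\zeta$ combined with interiority, force $x=\zeta$. This plays the role of the continuous-time driver $\dot V = -(x-\zeta)\cdot c(x) < 0$ that underlies the usual Lyapunov argument for stable games.

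The main obstacle is passing from the strict invasion inequality to a strict decrease of $V$ under the full nonlinear step. A naive Jensen bound on the concave map $t\mapsto \ln(1-\alpha t)$ bounds the second sum \emph{above} by $\sum_i \omega_i \ln\bigl(1-(\alpha/\omega_i)\zeta^i\!\cdot c^i(x)\bigr)$, which is the wrong direction for the desired conclusion. The plan to close this gap is to decompose the increment into a ``first-order'' contribution proportional to $-\alpha(x-\zeta)\cdot c(x)$, which is strictly negative by strict invasion, plus a curvature remainder controlled using the DTRD normalization identity $\sum_j x_j^i\bigl(1-\alpha c_j^i(x)\bigr) = \omega_i - \alpha\, x^i\!\cdot c^i(x)$ in tandem with the weighted AM--GM inequality $\prod_j (1-\alpha c_j^i(x))^{\zeta_j^i/\omega_i} \leq 1 - (\alpha/\omega_i)\zeta^i\!\cdot c^i(x)$. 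Showing that the curvature remainder cannot cancel the strictly negative first-order term, uniformly in $\alpha \in (0,\bar{\alpha})$, is the key technical step, and is where the full monotropicity hypothesis (rather than merely the variational inequality at $\zeta$) enters in an essential way.
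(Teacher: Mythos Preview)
Your setup is correct: the one-step change indeed collapses to
\[
V(\hat x)-V(x)=\sum_i \omega_i\ln\!\Bigl(1-\tfrac{\alpha}{\omega_i}x^i\!\cdot c^i(x)\Bigr)-\sum_{i,j}\zeta_j^i\ln\!\bigl(1-\alpha c_j^i(x)\bigr),
\]
and the strict invasion inequality $\zeta\cdot c(x)<x\cdot c(x)$ is exactly the driver (this is Proposition~\ref{aslkdjfhsdklfjhd}). You also correctly diagnose the obstacle: Jensen on the concave $t\mapsto\ln(1-\alpha t)$ bounds the second sum from \emph{above}, which is useless here. But your proposed remedy has a gap. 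The ``weighted AM--GM'' you intend to invoke is, after taking logarithms, precisely the Jensen inequality you just rejected, and the ``normalization identity'' is a tautology that carries no sign information. So the tools you list do not, on their own, control the curvature remainder; you have named the hard step without supplying any mechanism for it. In particular there is no reason to expect the higher-order terms in the $\ln(1-t)$ expansion to be dominated by the first-order gain uniformly in $\alpha$ without additional structure.

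The paper closes this gap by a different route that avoids Taylor remainders altogether. It first applies the elementary bound $\ln y\le y-1$ with $y=1/g_j^i(x)$, which reduces the problem to the purely algebraic inequality $\sum_{i,j}\zeta_j^i\,\bigl(g_j^i(x)\bigr)^{-1}<1$. This is established in two steps: Lemma~\ref{xzcmvbnzxlkbv} expands $\sum_{i,j}\zeta_j^i g_j^i(x)$ as a geometric series in $(\alpha/\omega_i)x^i\!\cdot c^i(x)$ and shows that \emph{every} order contributes positively, thanks to the strict invasion inequality applied termwise; Lemma~\ref{zlkjbgvsjbdsjgb} then converts $\sum\zeta g>1$ into $\sum\zeta g^{-1}<1$ via a direct algebraic comparison of the two weighted sums. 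The point is that the geometric-series expansion makes the invasion inequality do all the work at every order simultaneously, so there is no separate ``curvature remainder'' to bound. If you want to pursue your decomposition instead, you would need an independent argument that the second- and higher-order corrections are nonpositive (or at least dominated), and nothing in your outline provides this.
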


In the proof of Theorem~\ref{wpofjbgkjbgxkjfg}, we are going to need the following proposition (shown independently by~\cite{SelfishRoutingEvolution} and~\cite{StableGames}).

\begin{proposition}
\label{aslkdjfhsdklfjhd}
Let $\zeta$ be the Wardrop equilibrium and $x \in X$ where $x \neq \zeta$. Then
\begin{align*}
\sum_{i=1}^n \sum_{j \in P_i} \zeta^i_j c^i_j(x) < \sum_{i=1}^n \sum_{j \in P_i} x^i_j c^i_j(x)
\end{align*}
\end{proposition}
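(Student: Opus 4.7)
The plan is to derive the inequality by combining the variational characterization of the Wardrop equilibrium with the strict monotonicity of the cost function $c$ that selfish-routing games inherit from having a strictly convex Beckmann--McGuire--Winsten potential. Concretely, I will express $(x-\zeta)\cdot c(x)$ as the sum of two nonnegative pieces, one coming from Wardrop's condition at $\zeta$ and the other from monotonicity, and argue that the second piece is strictly positive when $x\neq \zeta$.

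First I would recall that a Wardrop equilibrium is, by definition, a critical element of $(X,c)$: in each commodity $i$, every path carrying positive flow in $\zeta^i$ has the common minimum cost, so swapping any mass away from $\zeta^i$ cannot lower the inner product with $c^i(\zeta)$. In the aggregate inner-product form this reads
\[
(x-\zeta)\cdot c(\zeta)\;=\;\sum_{i=1}^n\bigl(x^i-\zeta^i\bigr)\cdot c^i(\zeta)\;\geq\;0,\qquad \forall x\in X.
\]

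Second, I would invoke the Beckmann potential $\Phi(x)=\sum_e\int_0^{f_e(x)}\ell_e(t)\,dt$, where $f_e(x)$ is the edge flow that $x$ induces on edge $e$ and $\ell_e$ is its (strictly increasing) latency. Since $\Phi$ is strictly convex and $c=\nabla\Phi$, the gradient is strictly monotone:
\[
(x-\zeta)\cdot\bigl(c(x)-c(\zeta)\bigr)\;>\;0 \quad\text{whenever }x\neq\zeta.
\]
Adding the two displays gives the key identity
\[
(x-\zeta)\cdot c(x)\;=\;(x-\zeta)\cdot c(\zeta)\;+\;(x-\zeta)\cdot\bigl(c(x)-c(\zeta)\bigr)\;>\;0,
\]
which, after separating the two inner products, is exactly $\sum_{i,j}\zeta^i_j c^i_j(x)<\sum_{i,j}x^i_j c^i_j(x)$.

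The main obstacle lies in the strict monotonicity step, because two distinct path-flow vectors $x\neq\zeta$ may induce identical edge flows and hence identical path costs, collapsing ``$>$'' to ``$=$'' in the monotonicity bound. I would address this by noting that in any such degenerate case $c(x)=c(\zeta)$, so the first-step Wardrop inequality applied in both directions forces every path used by $x$ to attain the commodity-$i$ Wardrop cost; then $\zeta\cdot c(x)=x\cdot c(x)$ and the claimed strict inequality would fail. The clean resolution---implicit in Fischer--V\"ocking and Hofbauer--Sandholm---is to take $X$ to represent edge flows (or, equivalently, to work modulo the linear map $x\mapsto(f_e(x))_e$), under which the strict convexity of $\Phi$ immediately yields the strict monotonicity needed and the argument closes without further issue.
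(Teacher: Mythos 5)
The paper does not actually prove this proposition: it is stated with the remark that it was ``shown independently by'' the cited Fischer--V\"ocking and Hofbauer--Sandholm papers, so there is no in-paper argument to compare yours against. Your proposal reconstructs the standard argument from those sources --- write $(x-\zeta)\cdot c(x) = (x-\zeta)\cdot c(\zeta) + (x-\zeta)\cdot\bigl(c(x)-c(\zeta)\bigr)$, bound the first term below by zero via the variational characterization of the Wardrop equilibrium, and the second via monotonicity of the cost field coming from the Beckmann potential --- and both steps are correct as far as they go.

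The one substantive issue is the strictness, and you have diagnosed it accurately rather than papered over it: in path-flow coordinates the Beckmann potential is only convex, not strictly convex, so $(x-\zeta)\cdot\bigl(c(x)-c(\zeta)\bigr) = \sum_e \bigl(f_e(x)-f_e(\zeta)\bigr)\bigl(\ell_e(f_e(x))-\ell_e(f_e(\zeta))\bigr)$ vanishes whenever $x\neq\zeta$ induce identical edge flows; in that case $c(x)=c(\zeta)$ and in fact $\zeta\cdot c(x) = \sum_e f_e(\zeta)\,\ell_e(f_e(x)) = \sum_e f_e(x)\,\ell_e(f_e(x)) = x\cdot c(x)$ exactly, so the proposition as literally stated fails without a further hypothesis. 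Your proposed fix (pass to edge flows, or quotient by the kernel of the path-to-edge map) is the right one and is how the cited sources handle it; but since the surrounding text applies the proposition with $x$ ranging over the path-flow polytope $X$ and speaks of ``the'' Wardrop equilibrium $\zeta$, you should state explicitly that strictness requires either that distinct points of $X$ induce distinct edge flows (e.g., linearly independent path-incidence vectors) or that $X$ already be the edge-flow polytope. With that hypothesis made explicit, your proof is complete and is essentially the argument the paper defers to.
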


We are also going to need the following two lemmas.

\begin{lemma}
\label{xzcmvbnzxlkbv}
If $\zeta$ is the Wardrop equilibrium, then, for any $x \neq \zeta$ in $X$,
\begin{align*}
\sum_{i=1}^n \sum_{j \in P_i} \zeta^i_j g^i_j(x) > 1.
\end{align*}
\end{lemma}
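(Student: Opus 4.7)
The plan is direct algebraic computation followed by reduction to Proposition~\ref{aslkdjfhsdklfjhd}. Define $h^i(x) := 1 - (\alpha/\omega_i)\, x^i \cdot c^i(x)$. The admissibility bound $\alpha < \bar\alpha$ ensures $1 - \alpha c^i_j(x) > 0$ for every $i,j$, and hence $h^i(x) = (1/\omega_i) \sum_j x^i_j (1 - \alpha c^i_j(x)) > 0$. Using the total-mass identity $\sum_{j \in P_i} \zeta^i_j = \omega_i$, I would first expand
$$\sum_{i,j} \zeta^i_j\, g^i_j(x) = \sum_i \frac{1}{h^i(x)} \sum_j \zeta^i_j (1 - \alpha c^i_j(x)) = \sum_i \frac{\omega_i - \alpha\, \zeta^i \cdot c^i(x)}{h^i(x)}.$$
Under the standard population-games normalization $\sum_i \omega_i = 1$, the right-hand side of the lemma equals $\sum_i \omega_i$; note also that $\sum_{i,j} x^i_j\, g^i_j(x) = \sum_i \omega_i$ is the mass-preservation identity of the replicator map, which provides a natural baseline to subtract.

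Subtracting that baseline and collecting common denominators reduces the lemma to the weighted variational inequality
$$\alpha \sum_i \frac{(x^i - \zeta^i) \cdot c^i(x)}{h^i(x)} > 0.$$
The unweighted counterpart, $\sum_i (x^i - \zeta^i) \cdot c^i(x) > 0$, is exactly Proposition~\ref{aslkdjfhsdklfjhd}, and every weight $1/h^i(x)$ is strictly positive.

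The main obstacle, and the crux of the proof, is that the positive weights $1/h^i(x)$ vary with $i$, so positivity of the unweighted sum does not a priori transfer to the weighted sum. I would remove this obstacle by upgrading Proposition~\ref{aslkdjfhsdklfjhd} to a per-population statement: in each commodity $i$, the Wardrop defining property that $\zeta^i$ is supported on the minimum-cost paths at $\zeta$, combined with the monotonicity/stable-game structure of the selfish-routing class of games in which Proposition~\ref{aslkdjfhsdklfjhd} is derived, should yield $(x^i - \zeta^i) \cdot c^i(x) \geq 0$ for every $i$, with strict inequality in at least one $i$ whenever $x \neq \zeta$. Per-population positivity then transfers through the positive weights $1/h^i(x)$ and delivers the strict inequality of the lemma. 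In the single-population case $n=1$ the weights collapse to a single positive factor $1/h^1(x)$, and the lemma is equivalent to Proposition~\ref{aslkdjfhsdklfjhd} without any strengthening; should the per-population upgrade fail in the general multi-commodity setting, a continuity-and-compactness fallback — noting that $h^i(x)\to 1$ uniformly as $\alpha\to 0$ — recovers the conclusion for small enough $\alpha$.
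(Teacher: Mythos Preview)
Your opening algebra and the reduction to the weighted inequality
\[
\alpha \sum_{i} \frac{(x^{i}-\zeta^{i})\cdot c^{i}(x)}{h^{i}(x)} \;>\; 0
\]
are correct and agree with the paper's first steps; the normalization $\sum_i \omega_i = 1$ you invoke is exactly what the paper is silently using (its own computation literally yields $>\sum_i \omega_i$, not $>1$).

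Where the two arguments diverge is in how the $i$-dependent positive weights are handled. The paper expands each $1/h^{i}(x)$ as a geometric series $1+u_i+u_i^{2}+\cdots$ with $u_i=(\alpha/\omega_i)\,x^{i}\cdot c^{i}(x)$, and then, for every order $k\ge 2$, bounds
\[
\sum_i u_i^{\,k-1}\,b_i \;\ge\; \Bigl(\min_{\ell} u_\ell^{\,k-1}\Bigr)\sum_i b_i,
\qquad b_i := (x^{i}-\zeta^{i})\cdot c^{i}(x),
\]
and finishes with Proposition~\ref{aslkdjfhsdklfjhd}. You instead subtract the mass-preservation baseline directly and land on the single weighted sum above. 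Your route is shorter and more transparent, but the two proofs rest on the \emph{same} hypothesis you correctly isolated as the crux: the paper's min-trick is valid term by term only when every $b_i \ge 0$, i.e.\ precisely the per-population nonnegativity $(x^{i}-\zeta^{i})\cdot c^{i}(x)\ge 0$ that you propose to establish. So the geometric-series detour buys nothing beyond what your direct argument already needs; the paper simply uses the per-population property without naming it, whereas you name it and flag it as the step requiring justification. Your small-$\alpha$ fallback is a reasonable hedge, though note it changes the statement (it would no longer hold for all admissible $\alpha$).
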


\begin{proof}
We have
\begin{align*}
\sum_{i=1}^n \sum_{j \in P_i} z^i_j g^i_j(x) &= \sum_{i=1}^n \sum_{j \in P_i} z^i_j \frac{1 - \alpha c^i_j(x)}{1 - (\alpha / \omega_i) x^i \cdot c^i(x)}\\
&= \sum_{i=1}^n \frac{\omega_i - \alpha \sum_{j \in P_i} z^i_j c^i_j(x)}{1 - (\alpha / \omega_i) x^i \cdot c^i(x)}\\
&= \sum_{i=1}^n \omega_i \frac{1 - (\alpha / \omega_i) z^i \cdot c^i(x)}{1 - (\alpha / \omega_i) x^i \cdot c^i(x)}.
\end{align*}
Now, for any $0 \leq x < 1$, we have $1 / (1-x) = 1 + x + x^2 + \cdots$, and, therefore,
\begin{align*}
\sum_{i=1}^n \sum_{j \in P_i} z^i_j g^i_j(x) = \sum_{i=1}^n \omega_i \left(1 - (\alpha / \omega_i) z^i \cdot c^i(x) \right) \left( 1 + (\alpha / \omega_i) x^i \cdot c^i(x) + \left( (\alpha / \omega_i) x^i \cdot c^i(x) \right)^2 + \cdots \right)
\end{align*}
The right hand side can be written as
\begin{align*}
\sum_{i=1}^n \omega_i \left( 1 - (\alpha / \omega_i) z^i \cdot c^i(x) +(\alpha / \omega_i) x^i \cdot c^i(x)\right) + \\
\sum_{i=1}^n \left\{ \left( (\alpha / \omega_i) x^i \cdot c^i(x) \right)^2 - (\alpha / \omega_i)^2 (x^i \cdot c^i(x))(z^i \cdot c^i(x)) \right\} + \\
\sum_{i=1}^n \left\{ \left( (\alpha / \omega_i) x^i \cdot c^i(x) \right)^3 - (\alpha / \omega_i)^3 (x^i \cdot c^i(x))^2(z^i \cdot c^i(x)) \right\} +\\
\cdots
\end{align*}
Note now that, by Proposition~\ref{aslkdjfhsdklfjhd}, the first term is greater than one whereas regarding the $k$th term (k=2,3,\ldots) we have
\begin{align*}
\sum_{i=1}^n &\left\{ \left( (\alpha / \omega_i) x^i \cdot c^i(x) \right)^k - (\alpha / \omega_i)^k (x^i \cdot c^i(x))^{k-1}(z^i \cdot c^i(x)) \right\} = \\
&= \sum_{i=1}^n \left( (\alpha / \omega_i) x^i \cdot c^i(x) \right)^{k-1} \left( x^i \cdot c^i(x) - z^i \cdot c^i(x) \right)\\
&> \sum_{i=1}^n \min_\ell \left\{ \left( (\alpha / \omega_\ell) x^\ell \cdot c^\ell(x) \right)^{k-1} \right\} \left( x^i \cdot c^i(x) - z^i \cdot c^i(x) \right)\\
&= \min_\ell \left\{ \left( (\alpha / \omega_\ell) x^\ell \cdot c^\ell(x) \right)^{k-1} \right\} \sum_{i=1}^n \left( x^i \cdot c^i(x) - z^i \cdot c^i(x) \right)\\
&> 0
\end{align*}
where the last inequality follows by Proposition~\ref{aslkdjfhsdklfjhd}. This completes the proof.
\end{proof}

\begin{lemma}
\label{zlkjbgvsjbdsjgb}
\begin{itemize}
\item[(i)] Let $w,y,z \in X$. If
\begin{align*}
\sum_{i=1}^n \sum_{j \in P_i} w^i_j g^i_j(y) < \sum_{i=1}^n \sum_{j \in P_i} w^i_j g^i_j(z),
\end{align*}
then
\begin{align*}
\sum_{i=1}^n \sum_{j \in P_i} w^i_j \frac{1}{g^i_j(y)} > \sum_{i=1}^n \sum_{j \in P_i} w^i_j \frac{1}{g^i_j(z)}.
\end{align*}
\item[(ii)] In particular, if $\sum_{i=1}^n \sum_{j \in P_i} \zeta^i_j g^i_j(x) >1$, then $\sum_{i=1}^n \sum_{j \in P_i} \zeta^i_j (1 / g^i_j(x)) < 1$.
\end{itemize}
\end{lemma}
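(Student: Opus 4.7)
The plan is to derive part (ii) from part (i) and to prove part (i) by direct algebraic manipulation of the defining formula for $g^i_j$. For (ii), I would apply (i) with $w = \zeta$, $y = \zeta$, $z = x$: since $\zeta$ is a Wardrop equilibrium, $c^i_j(\zeta)$ takes the common value $(1/\omega_i)\zeta^i\cdot c^i(\zeta)$ on the support of $\zeta^i$, which forces $g^i_j(\zeta) = 1$ there and hence $\sum_{i,j}\zeta^i_j g^i_j(\zeta) = \sum_{i,j}\zeta^i_j/g^i_j(\zeta) = \sum_i\omega_i$. Reading the statement of (ii) as tacitly normalized so that $\sum_i\omega_i = 1$, the hypothesis $\sum\zeta g(x) > 1$ is exactly $\sum\zeta g(\zeta) < \sum\zeta g(x)$, and the conclusion of (i) then reads $\sum\zeta/g(\zeta) > \sum\zeta/g(x)$, which is $\sum\zeta/g(x) < 1$.

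For part (i), I would write $g^i_j(x) = \lambda_i(x)(1-\alpha c^i_j(x))$ with $\lambda_i(x) := (1-(\alpha/\omega_i)\, x^i\cdot c^i(x))^{-1}$, so that setting $\omega^w_i := \sum_j w^i_j$ gives the compact expressions $\sum_j w^i_j\, g^i_j(x) = \lambda_i(x)\bigl(\omega^w_i - \alpha\, w^i\cdot c^i(x)\bigr)$ and $\sum_j w^i_j/g^i_j(x) = \lambda_i(x)^{-1}\sum_j w^i_j/(1-\alpha c^i_j(x))$. Expanding $1/(1-t)$ as the geometric series $\sum_{k\ge 0} t^k$ in the style of Lemma~\ref{xzcmvbnzxlkbv} and matching the $\alpha^1$ contributions term by term, the leading-order pieces of $\sum wg(y) - \sum wg(z)$ and $\sum w/g(y) - \sum w/g(z)$ differ only in sign, which is the source of the opposite-direction conclusion.

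The main obstacle will be controlling the higher-order contributions: Jensen's inequality applied to the convex function $t\mapsto 1/(1-t)$ bounds $\sum_j w^i_j/(1-\alpha c^i_j(\cdot))$ from below but not from above, and the Cauchy--Schwarz inequality $\bigl(\sum_{i,j} w^i_j g^i_j(\cdot)\bigr)\bigl(\sum_{i,j} w^i_j/g^i_j(\cdot)\bigr) \ge (\sum_i \omega^w_i)^2$ pins down the product but not the individual signs. My plan for closing the gap is to combine these two bounds with the hypothesis, which forces the two reciprocal sums onto opposite sides of $\sum_i \omega^w_i$, and then to appeal to Proposition~\ref{aslkdjfhsdklfjhd} (Wardrop's variational characterization) to fix the sign of the perturbation in the direction required by the lemma; a careful bookkeeping of the $O(\alpha^{\geq 2})$ remainders using the restriction $\alpha < \bar\alpha$ should yield the strict inequality.
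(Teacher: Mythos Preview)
Your reduction of (ii) to (i) via $w=\zeta$, $y=\zeta$, $z=x$ is exactly what the paper does, and your remark about the implicit normalization $\sum_i\omega_i=1$ is apt.

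For (i), however, your plan is both more complicated than needed and contains a real gap. Part (i) is stated for \emph{arbitrary} $w,y,z\in X$; none of them is assumed to be a Wardrop equilibrium. Your proposed appeal to Proposition~\ref{aslkdjfhsdklfjhd} ``to fix the sign of the perturbation'' is therefore illegitimate here: that proposition compares costs against the Wardrop point $\zeta$ specifically and says nothing about general $w,y,z$. Likewise, the geometric-series expansion, the Jensen bound, and the Cauchy--Schwarz product inequality you invoke do not combine to give a comparison between $\sum w/g(y)$ and $\sum w/g(z)$; they only bound each quantity from one side, and you yourself note that the ``higher-order contributions'' remain uncontrolled. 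The bookkeeping you sketch never closes.

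The paper's argument for (i) avoids all of this machinery. It simply writes
\[
\frac{1}{g^i_j(y)}-\frac{1}{g^i_j(z)}
=\frac{(1-(\alpha/\omega_i)\bar c^i(y))(1-\alpha c^i_j(z))-(1-(\alpha/\omega_i)\bar c^i(z))(1-\alpha c^i_j(y))}{(1-\alpha c^i_j(y))(1-\alpha c^i_j(z))},
\]
observes that the common denominator lies in $(0,1)$ so that dropping it only decreases the expression, and then recognizes the resulting numerator (summed against $w^i_j$) as positive by the hypothesis, since that same numerator is exactly what appears (over the $j$-independent denominator $(1-(\alpha/\omega_i)\bar c^i(y))(1-(\alpha/\omega_i)\bar c^i(z))$) in $g^i_j(z)-g^i_j(y)$. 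No series expansion, no Wardrop proposition, no asymptotics in $\alpha$ are used. You should abandon the perturbative route and carry out this two-line algebra instead.
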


\begin{proof}
{\em (i)} Let $\bar{c}^i(y) = y^i \cdot c^i(y)$ and $\bar{c}^i(z) = z^i \cdot c^i(z)$. We have
\begin{align*}
\sum_{i=1}^n \sum_{j \in P_i} w^i_j \frac{1}{g^i_j(y)} &- \sum_{i=1}^n \sum_{j \in P_i} w^i_j \frac{1}{g^i_j(z)} =\\
&= \sum_{i=1}^n \sum_{j \in P_i} w^i_j \left( \frac{1}{g^i_j(y)} - \frac{1}{g^i_j(z)} \right)\\
&= \sum_{i=1}^n \sum_{j \in P_i} w_j^i \left( \frac{1 - (\alpha / \omega_i) \bar{c}^i(y)}{1 - \alpha c^i_j(y)} - \frac{1 - (\alpha / \omega_i) \bar{c}^i(z)}{1 - \alpha c^i_j(z)} \right)\\
&= \sum_{i=1}^n \sum_{j \in P_i} w_j^i \frac{(1 - (\alpha / \omega_i) \bar{c}^i(y))(1 - \alpha c^i_j(z)) - (1 - (\alpha / \omega_i) \bar{c}^i(z))(1 - \alpha c^i_j(y))}{(1 - \alpha c^i_j(y))(1 - \alpha c^i_j(z))}\\
&> \sum_{i=1}^n \sum_{j \in P_i} w_j^i \left((1 - (\alpha / \omega_i) \bar{c}^i(y))(1 - \alpha c^i_j(z)) - (1 - (\alpha / \omega_i) \bar{c}^i(z))(1 - \alpha c^i_j(y)) \right)\\
&> 0.
\end{align*}
The first inequality follows because $(1 - \alpha c^i_j(y))(1 - \alpha c^i_j(z)) < 1$ and the second inequality follows by the assumption of the lemma.\\

\noindent
{\em (ii)} Follows by the first part of the lemma if we set $w := \zeta$, $y := \zeta$, and $z := x$.
\end{proof}

\begin{proof}[Proof of Theorem~\ref{wpofjbgkjbgxkjfg}]
We have
\begin{align*}
RE(\zeta, \hat{x}) - RE(\zeta, x) &= \sum_{i=1}^n \sum_{j \in P_i} \zeta^i_j \ln \left( \frac{\zeta^i_j}{\hat{x}^i_j} \right) - \sum_{i=1}^n \sum_{j \in P_i} \zeta^i_j \ln \left( \frac{\zeta^i_j}{x_j} \right)\\
                       &= \sum_{i=1}^n \sum_{j \in P_i} \zeta^i_j  \ln \left( \frac{x^i_j}{\hat{x}^i_j} \right)\\
                       &= \sum_{i=1}^n \sum_{j \in P_i} \zeta^i_j  \ln \left( \frac{1- (\alpha / \omega_i) x^i \cdot c^i(x)}{1-\alpha c^i_j(x)} \right).
\end{align*}
Now, for any $y > 0$, $\ln y \leq y-1$, and, therefore,
\begin{align*}
RE(\zeta, \hat{x}) - RE(\zeta, x) &\leq \sum_{i=1}^n \sum_{j \in P_i} \zeta^i_j \left( \frac{1- (\alpha / \omega_i) x^i \cdot c^i(x)}{1-\alpha c^i_j(x)} - 1 \right)\\
                                         &= \sum_{i=1}^n \sum_{j \in P_i} \zeta^i_j \frac{1- (\alpha / \omega_i) x^i \cdot c^i(x)}{1-\alpha c^i_j(x)} - 1\\
                                         &= \sum_{i=1}^n \sum_{j \in P_i} \zeta^i_j \frac{1}{g^i_j(x)} - 1.
\end{align*}
By Lemma~\ref{xzcmvbnzxlkbv}, we know that $\sum_{i=1}^n \sum_{j \in P_i} \zeta^i_j \frac{1}{g^i_j(x)} > 1$, and, by Lemma~\ref{zlkjbgvsjbdsjgb}, this implies that $$\sum_{i=1}^n \sum_{j \in P_i} \zeta^i_j \frac{1}{g^i_j(x)} < 1,$$ which completes the proof.
\end{proof}

\section{Duality}

We have
\begin{align*}
\nabla (\nabla \cdot c) = \nabla \times (\nabla \times c) + \nabla^2 c.
\end{align*}
Therefore, if $c$ is a gradient field, we have
\begin{align*}
\nabla (\nabla \cdot \nabla f) &= \nabla \times (\nabla \times \nabla f) + \nabla^2 \nabla f\\
\nabla (\nabla \cdot \nabla f) &= \nabla^2 \nabla f\\
\nabla (\nabla^2 f) &= \nabla^2 \nabla f.
\end{align*}
This says that the gradient of the Laplacian is equal to the vector Laplacian of the gradient.

\bibliographystyle{plain}
\bibliography{axiomatic}

\begin{thebibliography}{10}

\bibitem{Alcantud}
J.~C.~R. Alcantud.
\newblock Characterization of the existence of maximal elements of acyclic
  relations.
\newblock {\em Economic Theory}, 19:407--416, 2002.

\bibitem{Bergstrom}
T.~Bergstrom.
\newblock Maximal elements of acyclic relations on compact sets.
\newblock {\em Journal of Economic Theory}, 10:403--404, 1975.

\bibitem{Bertsekas-Tsitsiklis}
D.~Bertsekas and J.~Tsitsiklis.
\newblock {\em Parallel and Distributed Computation: Numerical Methods}.
\newblock Prentice Hall, 1989.

\bibitem{Bertsekas}
D.~P. Bertsekas.
\newblock {\em Nonlinear Programming}.
\newblock Athena Scientific, second edition, 1999.

\bibitem{ConvexAnalysis}
D.~P. Bertsekas, A.~Nedic, and A.~E. Ozdaglar.
\newblock {\em Convex Analysis and Optimization}.
\newblock Athena Scientific, 2003.

\bibitem{ConvexOptimization}
S.~Boyd and L.~Vandenberghe.
\newblock {\em Convex Optimization}.
\newblock Cambridge University Press, 2004.

\bibitem{Daskalakis}
C.~Daskalakis, P.~W. Goldberg, and C.~H. Papadimitriou.
\newblock The complexity of computing a {N}ash equilibrium.
\newblock {\em SIAM J. Comput.}, 39(1):195--259, 2009.

\bibitem{Dupuis}
P.~Dupuis and A.~Nagurney.
\newblock Dynamical systems and variational inequalities.
\newblock {\em Annals of Operations Research}, 44:9--42, 1993.

\bibitem{Hirsch-Smale}
M.~W. Hirsch and S.~Smale.
\newblock {\em Differential Equations, Dynamical Systems, and Linear Algebra}.
\newblock Academic Press, 1974.

\bibitem{MicroeconomicTheory}
A.~Mas-Colell, M.~D. Whinston, and J.~R. Green.
\newblock {\em Microeconomic Theory}.
\newblock Oxford University Press, 1995.

\bibitem{TheoryOfGames}
J.~Von Neumann and O.~Morgernstern.
\newblock {\em Theory of Games and Economic Behavior}.
\newblock Princeton University Press, 1953.

\bibitem{Ok}
E.~A. Ok.
\newblock {\em Elements of Order Theory}, 2012.
\newblock \url{https://files.nyu.edu/eo1/public/books.html}.

\bibitem{Salonen-Vartiainen}
H.~Salonen and H.~Vartiainen.
\newblock On the existence of undominated elements of acyclic relations.
\newblock {\em Mathematical Social Sciences}, 60:217--221, 2010.

\bibitem{PotentialGames}
W.~Sandholm.
\newblock Potential games with continuous player sets.
\newblock {\em Journal of Economic Theory}, 97:81--108, 2001.

\bibitem{PopulationGames}
W.~H. Sandholm.
\newblock {\em Population Games and Evolutionary Dynamics}.
\newblock MIT Press, 2010.

\bibitem{Evolution}
J.~Maynard Smith.
\newblock {\em Evolution and the Theory of Games}.
\newblock Cambridge University Press, 1982.

\bibitem{TheLogicOfAnimalConflict}
J.~Maynard Smith and G.~R. Price.
\newblock The logic of animal conflict.
\newblock {\em Nature}, 246:15--18, 1973.

\bibitem{TaylorJonker}
P.~Taylor and L.~Jonker.
\newblock Evolutionary stable strategies and game dynamics.
\newblock {\em Mathematical Biosciences}, 16:76--83, 1978.

\bibitem{Thomas}
B.~Thomas.
\newblock On evolutionarily stable sets.
\newblock {\em J. Math. Biology}, 22:105--115, 1985.

\bibitem{Walker}
M.~Walker.
\newblock On the existence of maximal elements.
\newblock {\em Journal of Economic Theory}, 16:470--474, 1977.

\bibitem{Weibull}
J.~W. Weibull.
\newblock {\em Evolutionary Game Theory}.
\newblock MIT Press, 1995.

\end{thebibliography}

\fi

\end{document}